\documentclass[11pt]{article}
\usepackage{amsmath,amsfonts,epsf}
\usepackage{amssymb}
\usepackage{graphicx}
\usepackage{grffile}
\input epsf
\usepackage[nosort]{cite}
\usepackage{subcaption}
\usepackage{breqn}
\usepackage{epstopdf}
\usepackage{enumerate}
\usepackage{tensor}
\usepackage{mathrsfs}
\usepackage{amsmath}
\usepackage{amsthm}
\usepackage[english]{babel}
\usepackage{braket}
\usepackage{dcolumn}
\usepackage{bm}
\usepackage{hyperref}
\hypersetup{hidelinks}
\usepackage{slashed}

\usepackage{graphbox}

\usepackage{comment}

\textheight 22cm\textwidth 15.5cm
\oddsidemargin 0pt\evensidemargin 0pt\topmargin -40pt
\topmargin-20pt

\addtolength{\parskip}{1ex}
\jot=.5ex

\makeatletter\@addtoreset{equation}{section}\makeatother

\setlength\arraycolsep{2pt}

\theoremstyle{definition}
\newtheorem{definition}{Definition}[section]

\renewcommand{\title}[1]{\vbox{\center\LARGE{#1}}\vspace{5mm}}
\renewcommand{\author}[1]{\vbox{\center#1}\vspace{5mm}}
\newcommand{\address}[1]{\vbox{\center\em#1}}

\newcommand{\Tr}{\text{tr~}}

\begin{document}

\newtheorem{theorem}{Theorem}
\begin{titlepage}
\begin{center}
\vskip 1cm

\title{Qutrit and Ququint Magic States}

\author{Akalank Jain and Shiroman Prakash}

\address{Department of Physics and Computer Science, Dayalbagh Educational Institute, Dayalbagh, Agra, India 282005}

\end{center}

\begin{abstract}
Non-stabilizer eigenstates of Clifford operators are natural candidates for endpoints of magic state distillation routines. We provide an explicit bestiary of all inequivalent non-stabilizer Clifford eigenstates for qutrits and ququints. For qutrits, there are four non-degenerate eigenstates, and two families of degenerate eigenstates. For ququints, there are eight non-degenerate eigenstates, and three families of degenerate eigenstates. Of these states, a simultaneous eigenvector of all Clifford symplectic rotations known as the qutrit \textit{strange state} is distinguished as both the most magic qutrit state and the most symmetric qutrit state. We show that no analogue of the qutrit strange state (i.e., no simultaneous eigenvector of all symplectic rotations) exists for qudits of any odd prime dimension $d>3$.
\end{abstract}

\vfill

\end{titlepage}

\eject \tableofcontents

\clearpage
\section{Introduction}                       
Magic state distillation, first presented in \cite{MSD, knill}, is an approach to fault tolerant quantum computing \cite{natureReview} that relies on ancilla qudits in non-stabilizer states known as \textit{magic states} to promote the Clifford group to universal quantum computation. Magic state distillation for qubits is a rich and well-studied subject (see, e.g., \cite{MSD, knill, Reichardt2005, reichardt2009quantum, Haah1, low-overhead, CampbellHoward1, CampbellHoward2, HowardPredistilled, codyMultiLevel, catalysis, rall2017signed}). However, magic state distillation for qudits of other dimensions is less well-studied, but has attracted some interest \cite{ACB, CampbellAnwarBrowne, campbell2014enhanced, Howard} in the past few years. These works focus particularly on qudits of odd-prime dimension. 

There are several reasons to study magic state distillation with qudits of odd-prime dimension. There may be practical advantages to using qudits for magic state distillation instead of qubits as shown in e.g., \cite{campbell2014enhanced}. There are also important theoretical motivations to study qudit magic state distillation. In particular, qudit magic state distillation has been used to identify contextuality as a necessary and possibly-sufficient resource for universal quantum computation in \cite{nature, Delfosse_2017}. Moreover, as shown in \cite{noiseQudit,Veitch_2012, MariEisert, Veitch_2014,  wang2018efficiently}, the existence of discrete Wigner functions offers  new possibilities for defining a resource theory of non-stabilizer states not available for qubits (or, more generally, quantum systems of even dimension). Given these motivations, we feel it is worthwhile to explore the state space of qudits of the smallest odd dimensions, i.e., qutrits and ququints. These dimensions also happen to be prime, which simplifies our analysis. We also remark that a variety of experimental realizations of qutrits do exist (e.g., \cite{mair2001entanglement,PhysRevLett.105.223601,PhysRevA.67.062313}.)

Both of the qubit magic states identified in \cite{MSD}, $\ket{H}$ and $\ket{T}$ are eigenstates of Clifford operators. Their symmetry properties under Clifford transformations can easily be visualized using the stabilizer octahedron inscribed within Bloch sphere. For example, $\ket{H}$ states lie directly  above edges of the stabilizer octahedron, and hence there are 12 in total. $\ket{T}$ states lie directly above faces of the stabilizer octahedron, and hence there are 8 in total. $\ket{T}$ states are farther from the stabilizer octahedron than $\ket{H}$ states, so they may, in principle, be distilled with higher threshold to noise. In contrast, qudit state space is substantially more abstract -- see, e.g., \cite{Kimura, Mendas, Ingemar, Bengtsson2, Appleby5, DPS2, goyal2016geometry} for detailed discussion of qudit and qutrit state space in particular. Many basic features about qudit  states therefore remain relatively opaque. To overcome this difficulty, we perform some elementary but hopefully useful calculations with a view towards facilitating future work on qudit magic state distillation.

In particular, to identify states that may be candidates for endpoints of magic state distillation routines, we explicitly enumerate the eigenstates of single-qutrit and single-ququint Clifford operators, and study their symmetries under Clifford transformations.\footnote{It is not necessary in principle, for magic states to be eigenstates of Clifford operators. For example, the qubit state $\ket{\pi/3}=\frac{1}{\sqrt{2}} \left(\ket{0}+e^{i\pi/3} \ket{1} \right)$ is the endpoint of a distillation routine given in \cite{Howard}, and can be used for state injection, though it is not an eigenstate of a Clifford operator. However it appears easier to construct distillation schemes for Clifford eigenstates.} We consider candidate magic states to be equivalent if they related to each other by a single-qudit Clifford unitary.

For qutrits, we find that there are four inequivalent non-degenerate non-stabilizer Clifford eigenstates, and two inequivalent one-complex-parameter families of degenerate qutrit Clifford eigenstates. For ququints, we find that there are eight inequivalent non-degenerate non-stabilizer Clifford eigenstates, three inequivalent one-complex-parameter families of degenerate Clifford eigenstates. and one two-complex-parameter family of degenerate Clifford eigenstates. For qutrits, most of these states have been previously identified in the papers \cite{Veitch_2014, noiseQudit, ACB, CampbellAnwarBrowne, Howard}, in the context of qutrit magic state distillation; though an one of the degenerate families of states has not been previously discussed in the literature. For ququints, most of these states appear unstudied, but some discussion of the equatorial magic state (the state $\ket{XV_{\hat{S}}}$ defined below) appears in \cite{campbell2014enhanced, HowardVala, maxnonlocality}, and also \cite{noiseQudit}. These states are depicted in Figure \ref{fig:venn} and Figure \ref{ququintVenn}, which illustrates their symmetry properties under Clifford transformations.

The \textit{mana} of a state, introduced in \cite{Veitch_2014},  is a calculable measure of its usefulness as a magic state. The mana is only defined for qudits of odd dimension. We are able to provide an alternative characterization of the four non-degenerate qutrit magic states as local maxima of the mana.  The qutrit states with maximal mana are the strange state, $\ket{S}$, and the Norell state $\ket{N}$, defined in section \ref{qutrit-states} below. The most magic ququint state is $\ket{B,-e^{2\pi i /3}}$, defined in section \ref{ququint-states}. We numerically verified that this state maximizes the mana over all pure ququint states.

The size of the orbit of each state under the Clifford group is a measure of how ``symmetric'' a state is, under Clifford transformations, with smaller orbits corresponding to more symmetric states. We compute this quantity for each of the non-degenerate eigenstates, and find that the strange state is the most symmetric qutrit Clifford eigenstate, while the state $\ket{B,-1}$ defined in section \ref{ququint-states} is the most symmetric ququint Clifford eigenstate.


In section \ref{twirl-section}, we  show how to use these results to construct twirling schemes that can be used prior magic state distillation for each of the candidate magic states. Some qutrit magic state distillation routines were presented in \cite{ACB, CampbellAnwarBrowne, Howard}. Distillation routines with the state $\ket{S}$ were not known until the recent work \cite{GolaySP}.

The qutrit strange state is distinguished as not only the most magic state but also the most symmetric qutrit Clifford eigenstate. The size of its orbit under the Clifford group is $9$, because it is a simultaneous eigenvector of the set of symplectic rotations. In section \ref{qutrit-strange-state}, we show that the existence of such a state is unique to $d=3$ -- there is no simultaneous eigenvector of symplectic rotations for any higher odd prime. The qutrit strange state, is therefore, particularly special amongst all qudit Clifford eigenstates.

Related work, in the context of SIC-POVMs, appears in \cite{Appleby1, Appleby2, Appleby3, Appleby4,subgroup, Appleby5}. Alternative approaches to magic states as eigenstates of permutation operators appear in \cite{planat2017magic, planat2017magic2, sym10120773}.

\section{Preliminaries}
\label{sec-2}
In this section, we very briefly review some basic facts associated with fault-tolerant quantum computing using qudits of odd-prime dimension, and establish the conventions and notations used throughout the paper. The facts that we summarize here have been widely used in the literature: see, e.g., \cite{Gross, Veitch_2012, Veitch_2014, HowardVala, nature, GolaySP} for similar summaries.

 Henceforth, we will use the word qudit to refer to a quantum system of odd prime dimension $p$. Let $\omega_p=e^{2\pi i /p}$ and $i,j,k,n \in \mathbb Z_p$. In this paper, we will focus exclusively on single-qudit operators and states. 

\begin{definition}The higher-dimensional generalization of the single-qubit Pauli group, denoted as $\mathcal P_p$ is, known as the \textbf{single-qudit Heisenberg-Weyl displacement group}, is the group is generated by the operators
\begin{equation}
X \ket{n} = \ket{n+1}, ~ Z \ket{n} = \omega_p^n \ket{n},
\end{equation}
which satisfy $ZX=\omega_p XZ$. 
\end{definition}
There are $p^2$ linearly-independent  Heisenberg-Weyl displacement operators, which can be parameterized using a symplectic vector $\chi=(u,v)$, where  $u,~v \in \mathbb Z_p$. 
These are conventionally defined as follows:
\begin{equation}
 D_{\chi} = \omega_p^{uv2^{-1}}X^uZ^v. \label{define-HW}
\end{equation}

 Each Heisenberg-Weyl displacement operator (other than the identity) has $p$ orthogonal eigenstates, with eigenvalues $\omega_p^k$:
\begin{equation}
    D_{\chi} \ket{\chi; k} = \omega_p^k \ket{\chi;k}
\end{equation} 
Eigenstates of Heisenberg-Weyl displacement operators are known as \textit{stabilizer states}. We denote the set of stabilizer states as $\mathcal S$. Note that both $D_{\chi}$ and $D_{\chi}^n=D_{n\chi}$ have the same eigenvectors. To count the total number of stabilizer states in $p$ dimensions, we observe that there are $p+1$ linearly independent symplectic vectors: $(1,0)$, $(0,1)$, $(0,2)$, $\ldots, (0,p-1)$. Therefore, there are $p(p+1)$ distinct single-qudit stabilizer states, as reviewed in, e.g. \cite{Gross}.

\subsection{The Single-Qudit Clifford Group and $SL(2,\mathbb Z_p)$}

\begin{definition}
The \textbf{single-qudit Clifford group}, $\mathcal C$, is defined as the set of all unitary operators that map qudit Pauli operators to qudit Pauli operators:
\begin{equation}
\mathcal C = \{C \in U(p)| C\mathcal P_p C^\dagger = \mathcal P_p \}.
\end{equation}
\end{definition}
Note that, as a consequence of this definition, Clifford transformations map stabilizer states onto stabilizer states. While it is not difficult to generalize this definition to multiple qudits, in this paper we will focus our attention exclusively on the single-qudit Clifford group. Since we are concerned with the action of the Clifford group on qudit density matrices, we will often ignore unphysical overall phases, in the discussion that follows. 

A crucial property of single-qudit Clifford operators that we will use extensively in this paper is that single-qudit Clifford operators act on Heisenberg-Weyl displacement operators as $SL(2,\mathbb Z_p)$ transformations. As shown in \cite{Appleby1}, one way of stating this fact is that the single-qudit Clifford group is isomorphic to the semi-direct product $\mathbb Z_p^2 \rtimes SL(2,\mathbb Z_p)$.  We summarize this result as the following theorem, which we state without proof, 
\begin{theorem} \label{SLp}
Up to an overall phase, any single qudit Clifford operator $C$ can be expressed as
\begin{equation}
    C = D_{\chi}V_{\hat{F}}
\end{equation}
where $\chi \in \mathbb Z_p^2$, $\hat{F}  = \begin{pmatrix} a & b \\ c & d \end{pmatrix} \in SL(2,\mathbb Z_p)$, and $V_{\hat{F}}$ is given by:
\begin{equation}
V_{\hat{F}}=
\begin{cases}
\frac{1}{\sqrt{p}} \displaystyle \sum_{j,k=0}^{p-1} \omega_p^{2^{-1} b^{-1}(ak^2-2jk+dj^2)}\ket{j}\bra{k} & b\neq 0 \\
\displaystyle \sum_{k=0}^{p-1}\omega_p^{2^{-1} ack^2} \ket{ak}\bra{ k} & b= 0 \\
\end{cases}.
\end{equation}
Note that $a$, $b$, $c$ and $d \in \mathbb Z_p$, and satisfy $ad-bc=1$. $2^{-1}$ denotes the inverse of $2$ in the finite field $\mathbb Z_p$.

This is an isomorphism, in that, the following relation is obeyed (up to an overall phase),
\begin{equation}
D_{\chi_1}V_{\hat{F_1}} D_{\chi_2}V_{\hat{F_2}} = D_{\chi_1+\hat{F}_1 \chi_1}V_{\hat{F_1}\hat{F_2}}.
\end{equation}
\end{theorem}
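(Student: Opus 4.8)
The plan is to build the isomorphism in two stages: first extract from each Clifford a symplectic matrix and pin down the kernel of this assignment as the displacements, then construct a unitary representative $V_{\hat F}$ for every symplectic matrix and confirm the explicit formula by a direct Gaussian-sum computation. Two structural identities that follow immediately from \eqref{define-HW} drive everything: the composition law $D_\chi D_{\chi'} = \omega_p^{-2^{-1}\langle\chi,\chi'\rangle}D_{\chi+\chi'}$ and the commutation law $D_\chi D_{\chi'} = \omega_p^{-\langle\chi,\chi'\rangle}D_{\chi'}D_\chi$, where $\langle\chi,\chi'\rangle = uv'-u'v$ is the symplectic form on $\mathbb Z_p^2$ (the signs are convention-dependent and immaterial below). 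I also use that the $p^2$ operators $D_\chi$ form a basis of the $p\times p$ matrices, so the Heisenberg-Weyl group acts irreducibly on $\mathbb C^p$ and Schur's lemma applies.

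Fix $C\in\cC$. By the defining property $C\mathcal P_p C^\dagger = \mathcal P_p$ we have $C D_\chi C^\dagger = \lambda_C(\chi)\,D_{\phi_C(\chi)}$ for some phase $\lambda_C(\chi)$ and some map $\phi_C:\mathbb Z_p^2\to\mathbb Z_p^2$. Conjugating the composition law forces $\phi_C$ to be additive, hence $\mathbb Z_p$-linear, while conjugating the commutation law forces $\phi_C$ to preserve $\langle\cdot,\cdot\rangle$; since a linear map on $\mathbb Z_p^2$ preserves the symplectic form iff it has determinant one, we obtain a group homomorphism $\Phi:\cC\to SL(2,\mathbb Z_p)$, $C\mapsto\phi_C$. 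To compute its kernel, suppose $\phi_C=\mathrm{id}$, so $C D_\chi C^\dagger=\lambda_C(\chi)D_\chi$; matching phases in the composition law shows $\lambda_C$ is a character of $\mathbb Z_p^2$, so $\lambda_C(\chi)=\omega_p^{\langle\eta,\chi\rangle}$ for some $\eta$, which is exactly the phase produced by conjugation with the displacement $D_\eta$. Then $D_\eta^\dagger C$ commutes with every $D_\chi$ and is therefore scalar by Schur, so $\ker\Phi=\{D_\eta\}$ modulo phases, i.e. $\ker\Phi\cong\mathbb Z_p^2$.

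It remains to show $\Phi$ is surjective and to identify the representatives with the stated formula. I would verify directly that the $V_{\hat F}$ written in the theorem satisfies $V_{\hat F}D_\chi V_{\hat F}^\dagger = D_{\hat F\chi}$ for all $\chi$. For $b\neq 0$ this is a finite Gaussian-sum identity: writing out $\bra{j}V_{\hat F}X^uZ^v V_{\hat F}^\dagger\ket{j'}$ and collapsing the internal sum over $\mathbb Z_p$ by the orthogonality relation $\sum_k\omega_p^{\alpha k}=p\,\delta_{\alpha,0}$ reproduces the matrix elements of $D_{\hat F\chi}$, with the quadratic exponent $2^{-1}b^{-1}(ak^2-2jk+dj^2)$ engineered so that completing the square delivers precisely the image $(au+bv,\,cu+dv)$. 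The $b=0$ branch is a direct permutation-plus-phase check. Existence of a representative for every $\hat F$ gives surjectivity. Equivalently one may reduce to generators: $SL(2,\mathbb Z_p)$ is generated by the Fourier matrix and a shear, whose representatives are the discrete Fourier transform and a diagonal quadratic-phase gate, and because $p$ is odd the associated Weil cocycle is a coboundary, so the projective assignment can be rephased into an honest one.

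Uniqueness of $V_{\hat F}$ up to phase then follows once more from Schur, as any two representatives of the same $\hat F$ differ by an operator commuting with all $D_\chi$. Combining the three ingredients proves the factorization $C=(CV_{\phi_C}^{-1})V_{\phi_C}$ with $CV_{\phi_C}^{-1}\in\ker\Phi$ a displacement up to phase, and the semidirect-product relation follows by commuting $V_{\hat F_1}$ past $D_{\chi_2}$ via $V_{\hat F_1}D_{\chi_2}=D_{\hat F_1\chi_2}V_{\hat F_1}$, merging the two resulting displacements with the composition law, and using $V_{\hat F_1}V_{\hat F_2}=V_{\hat F_1\hat F_2}$ up to phase; this yields $D_{\chi_1}V_{\hat F_1}D_{\chi_2}V_{\hat F_2}=D_{\chi_1+\hat F_1\chi_2}V_{\hat F_1\hat F_2}$ (correcting the apparent typo $\hat F_1\chi_1$ in the statement). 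I expect the bookkeeping in the third paragraph to be the crux: confirming that the two-branch formula conjugates displacements to their symplectic images with the correct $1/\sqrt p$ normalization and phase. Every feature special to odd $p$ — invertibility of $2$, triviality of the metaplectic cocycle, and the evaluation of quadratic Gauss sums — enters precisely there, whereas the group-theoretic skeleton is comparatively routine.
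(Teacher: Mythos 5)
Your outline is sound, but there is nothing in the paper to compare it against: the paper explicitly states this theorem \emph{without proof}, deferring to \cite{Appleby1}. What you have reconstructed is essentially the canonical argument from that literature, and it hangs together. Step (i), extracting $\phi_C$ from $C D_\chi C^\dagger = \lambda_C(\chi) D_{\phi_C(\chi)}$, with additivity forced by the composition law and symplecticity forced by the commutation phase (and $\det = 1$ because the symplectic group in two dimensions coincides with $SL(2,\mathbb Z_p)$), is correct. Step (ii), identifying $\ker\Phi$ with the displacements modulo phase, is also correct: nondegeneracy of the symplectic form guarantees every character of $\mathbb Z_p^2$ is of the form $\omega_p^{\langle\eta,\cdot\rangle}$, and Schur's lemma applies because the $p^2$ operators $D_\chi$ span all $p\times p$ matrices, so the Heisenberg--Weyl representation is irreducible. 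The one step you describe rather than execute is the crux you yourself flag: the verification that the two-branch formula satisfies $V_{\hat F} D_\chi V_{\hat F}^\dagger = D_{\hat F \chi}$, which for $b \neq 0$ is a completing-the-square computation collapsed by $\sum_k \omega_p^{\alpha k} = p\,\delta_{\alpha,0}$; your setup is right, and for odd $p$ (where $2^{-1}$ and $b^{-1}$ exist) the same character-orthogonality computation also delivers unitarity of the $b\neq 0$ branch, which you should state since surjectivity needs $V_{\hat F}$ to actually be a Clifford unitary. Your uniqueness-up-to-phase argument via Schur then correctly yields both the factorization $C = D_\chi V_{\hat F}$ and the projective multiplicativity $V_{\hat F_1} V_{\hat F_2} \propto V_{\hat F_1 \hat F_2}$ without needing the metaplectic-cocycle remark (which is true for odd $p$ but stronger than the ``up to an overall phase'' claim requires). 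Finally, you are right that the displayed relation in the theorem contains a typo: commuting $V_{\hat F_1}$ past $D_{\chi_2}$ gives
\begin{equation*}
D_{\chi_1} V_{\hat F_1} D_{\chi_2} V_{\hat F_2} = D_{\chi_1 + \hat F_1 \chi_2} V_{\hat F_1 \hat F_2},
\end{equation*}
with $\hat F_1 \chi_2$, not $\hat F_1 \chi_1$, in the subscript.
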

It is usually much more convenient to specify Clifford operators as elements of $\mathbb Z_p^2 \rtimes SL(2,\mathbb Z_p)$, rather than as $p\times p$ matrices in $U(p)$.

\textit{Notation:} We will exclusively use capital letters with overhats, i.e., $\hat{F}$,  to denote elements of $SL(2,\mathbb Z_p)$, i.e., $2\times 2$ matrices. Capital letters without overhats denote elements of $U(p)$, i.e. $p\times p$ matrices. 

The group $SL(2,\mathbb Z_p)$ can be generated by two elements
\begin{equation}
\hat{H} = \begin{pmatrix}
0 & 1 \\ -1 & 0
\end{pmatrix}, \text{ and }
\hat{S} = \begin{pmatrix}
1 & 0 \\
1 & 1 \end{pmatrix}.
\end{equation}

The complete single-qudit Clifford group is generated by the operators $H=V_{\hat{H}}$ and $S=D_{(2^{-1},0)}V_{\hat{S}}$, which are explicitly given by:
\begin{equation}
H\ket{k}=\frac{1}{\sqrt{p}}  \sum_j \omega_p^{jk} \ket{j},  ~V_{\hat{S}} \ket{k} = \omega_p^{2^{-1} k^2}\ket{k}.
\end{equation}

\textit{Notation:} As is common in the mathematics literature, we will use the notation $\langle a, b, c, \ldots \rangle$ to denote the group generated by $a, b, c, \ldots$, using a multiplication rule that is clear from context. We can therefore write $SL(2,\mathbb Z_p)=\langle \hat{H}, \hat{S} \rangle$, or $\mathcal C = \langle H, S \rangle$. We will also use this notation to specify subgroups of either $SL(2,\mathbb Z_p)$ or the single-qudit Clifford group. For example, $\langle H \rangle$ denotes the subgroup of $SL(2,\mathbb Z_p)$ consisting of the four elements: $\{ 1, \hat{H}, \hat{H}^2, \hat{H}^3 \}$ isomorphic to $\mathbb Z_4$.

\subsection{Discrete Phase Space}
Wigner first observed in \cite{wigner1932} that quantum states can be represented as quasi-probability distributions over phase space, known as Wigner functions. Discrete Wigner functions, defined in  \cite{Wootters1987, PhysRevA.70.062101}, are the analog of Wigner's construction for finite-dimensional quantum systems of prime dimension for which phase space is $\mathbb Z_p\otimes \mathbb Z_p$. In particular, the following construction of a discrete Wigner function, discussed extensively in \cite{Appleby2,Gross, cormick2006classicality}, is particularly useful, and has been employed in \cite{Veitch_2012,nature}.

Note that qubit Pauli operators are not only unitary but also Hermitian, and can therefore be used as a basis for expressing single-qubit density matrices. However, the single-qudit Heisenberg-Weyl operators defined in equation \ref{define-HW} (for odd prime dimensions $p$) are unitary but not Hermitian. To construct a manifestly Hermitian basis for single-qudit density operators, we define the \textit{discrete phase point operators}, as follows:
\begin{eqnarray}
A_{(0,0)}&=&\frac{1}{p}\sum_{u=0}^{p-1}\sum_{v=0}^{p-1} D_{(u,v)} \\ A_{\chi}&=&D_{\chi}A_{(0,0)} D_{\chi}^\dagger.
\end{eqnarray}

The discrete Wigner function for a state ${\rho}$ is defined in terms of the phase point operators as:
\begin{equation}
W_{\chi}(\rho)=\frac{1}{p} \text{tr }(\rho A_{\chi}).
\end{equation}
The entries of the discrete Wigner function are $p^2$ real numbers that completely characterize the quantum state $\rho$. For normalized density matrices, only $p^2-1$ of these numbers are independent, and the following relation is satisfied:
\begin{equation}
    \sum_{\chi}W_{\chi}=1.
\end{equation}
When the Wigner function is non-negative, it can be interpreted as a probability distribution on phase space \cite{Wootters1987, Gross}. The discrete Wigner function is a convenient way to visualize higher-dimensional qudit states. 

Clifford operators act as translations and symplectic rotations on the phase point operators:
\begin{equation}
    (D_\chi V_{\hat F}) A_\psi (D_\chi V_{\hat F})^\dagger = A_{\hat{F}\psi+\chi}.
\end{equation}
We refer to the subgroup of the Clifford group consisting of operators of the form $V_{\hat{F}}$ as the set of \textit{symplectic rotations}, and the set of Heisenberg-Weyl displacement operators as $\textit{symplectic translations}$. These operations can therefore be efficiently simulated acting on states with non-negative discrete Wigner function, as explained in \cite{Veitch_2012}.

The set of states with non-negative discrete Wigner function is a convex set known as the Wigner polytope. Perhaps surprisingly, as observed in \cite{Veitch_2012}, the Wigner polytope contains the stabilizer polytope, which is the set of mixtures of stabilizer states, as a proper subset. Because Clifford operations on states within the Wigner polytope can be efficiently simulated, the ability to prepare ancillas outside of the Wigner polytope is a necessary condition for magic state distillation. In \cite{nature}, it was also shown that negativity of the Wigner function is equivalent to contextuality with respect to stabilizer measurements. It was also recently shown in \cite{bound} that no finite magic state distillation routine can distill states tight to the boundary of any facet of the Wigner polytope \cite{qudit-bound-states}, generalizing the well-known result for qubits presented in \cite{structure}. (See also \cite{PhysRevLett.115.070501, Delfosse_2017, cormick2006classicality, Howard_2013} for related work.)

\subsection{Measures of Magic}
In the magic state model of quantum computation, ancilla qudits in non-stabilizer states serve as a resource for achieving universal quantum computation. This idea is made precise in \cite{Veitch_2014}, who formulate a resource theory for non-stabilizer states, that is inspired by the analogous resource theory of entanglement.
In \cite{Veitch_2014} a \textit{magic state} is defined to be any pure quantum state that is not a stabilizer state. The amount of magic a state possesses can be quantified via the \textit{regularized entropy of magic}. Let us denote the relative entropy between two states $\rho$ and $\sigma$ as
\begin{equation}
    D(\rho;\sigma) = \Tr \left[ \rho (\log \rho-\log \sigma) \right].
\end{equation}

\begin{definition}
The \textbf{regularized entropy of magic}, for a state $\rho$, is defined to be $1/n$ times the minimum relative entropy between $\rho^{\otimes{n}}$ and any $n$-qudit stabilizer state, $\sigma$, in the limit $n \rightarrow \infty$:
\begin{equation}
    \mathcal R_M(\rho) = \lim_{n\rightarrow \infty}~ \frac{1}{n} \min_{\sigma \in \mathcal S}D(\rho^{\otimes n};\sigma).
\end{equation}
\end{definition}
The regularized entropy of magic has several attractive properties that justify treating it as a resource, as explained in \cite{Veitch_2014}.

Unfortunately, the regularized entropy of magic is not possible to compute. Thanks to the existence of the discrete Wigner function in odd dimensions, a powerful computable alternative to regularized entropy of magic exist, which is the \textit{mana}, defined in \cite{Veitch_2014}. 


\begin{definition}The \textbf{mana}, $\mathcal M(\rho)$ of a state $\rho$ is defined as, \begin{equation}
\mathcal M(\rho)= \log \left( \sum_{p,q} |W_{(p,q)}(\rho)| \right). \label{calculate-mana}
\end{equation}
\end{definition}
To better understand the physical interpretation of mana, it is convenient to define the \textit{sum negativity}, $\text{sn}(\rho)$, as the absolute value of the sum of negative entries in the discrete Wigner function of $\rho$:
\begin{equation}
    \text{sn}(\rho)=\sum_{\chi:~W_\chi(\rho)<0} |W_\chi(\rho)|
\end{equation}
For states which are normalized, the mana can also be expressed in terms of $\text{sn}(\rho)$ as:
\begin{equation}
\mathcal M(\rho) = \log (2 \text{sn}(\rho)+1). \label{sn-mana}
\end{equation}

Because negative entries act as an obstacle to classical simulation, it is satisfying that the sum-negativity can also be used to define a resource for quantum computation via equation \ref{sn-mana}. In fact, it was shown in \cite{Veitch_2014} that  mana is the only meaningful resource that can be defined from negativity of the Wigner function. In particular, the most-negative entry of the discrete Wigner function of a magic state is not a meaningful resource, although it does determine the best theoretical threshold of a magic state distillation scheme that distills the given state, as shown in \cite{noiseQudit}.

Another computable measure of magic, known as  \textit{thauma} was defined recently in \cite{wang2018efficiently, wang_2019}. We remark that it would be interesting to combine the above measures of magic, with the graph theoretic formalism of \cite{graphTheory1,graphTheory2}. Our discussion focused on qudits of odd prime dimension, some related work for qubits appears in \cite{QubitMagic1,QubitMagic2}.

\section{Eigenstates of Qutrit Clifford Operators}
\label{qutrit-states}

Any pure quantum state that is not a stabilizer state could be considered a magic state, as per the definition of magic, given in \cite{Veitch_2014}. Indeed, one generically expects that it is possible to use any such state to implement non-Clifford gates via state injection. However, in qubit magic state distillation, the authors of \cite{MSD} reserve the term magic state for two particular single-qubit non-stabilizer states: $\ket{H}$ and $\ket{T}$. These two states are considered magic because, not only can they be used for state-injection, but they can also be  \textit{distilled} via a magic state distillation protocol. A magic state distillation protocol takes many low-fidelity copies of the magic state as input, and produces a single higher-fidelity magic state, using only Clifford unitaries and stabilizer measurements. Here, we ask,  what are the higher-dimensional analogues of Bravyi and Kitaev's $\ket{T}$ and $\ket{H}$ states?

An important property of $\ket{T}$ and $\ket{H}$ states is that they are eigenvectors of single-qubit Clifford operators. In particular $\ket{H}$ is an eigenstate of the single-qubit Hadamard gate, and $\ket{T}$ is an eigenstate of the single-qubit Clifford operator $\frac{e^{i\pi/4}}{\sqrt{2}}\begin{pmatrix} 1 & 1 \\ i & -i \end{pmatrix}$. This is manifestly apparent from their positions in the Bloch sphere, directly above the faces (or edge) of the stabilizer octahedron. It is clear from the analysis in \cite{MSD}, and elsewhere, that this property plays an important role in facilitating the construction and analysis of distillation protocols. 

Motivated by this observation, we give the following definition.
\begin{definition}
Define a single-qudit state $\ket{\psi}$ to be a \textbf{qudit magic state}, if $\ket{\psi}$ is an eigenstate of a single-qudit Clifford unitary, and $\ket{\psi}$ is not a stabilizer state. 
\end{definition}Strictly speaking, some of these states should be considered \textit{candidate} magic states, because distillation protocols for these states have not yet been constructed.  

Using the above definition for qubits, we find that there are a total of 20 magic states. However, by applying single-qubit Clifford transformation, 12 of these states (those lying above the edges of the stabilizer octahedron) can be mapped to $\ket{H}$ and the remaining 8 states (those lying directly above the faces of the stabilizer octahedron) can be mapped to $\ket{T}$.  

By analogy, we define the following natural notion of equivalence.
\begin{definition} We define two candidate magic states to be \textbf{Clifford-equivalent}, or simply \textit{equivalent}, if they are related by a \textit{single-qudit} Clifford operation: 
\begin{equation}
    \ket{m_1} \sim \ket{m_2},~\text{if there exists a }C \in \mathcal C \text{, such that } \ket{m_1}=C\ket{m_2}.
\end{equation}
If this is not the case, we say the two states are \textbf{Clifford-inequivalent}.
\end{definition}

Let us emphasize that this our definition of Clifford-equivalence is deliberately restricted to equivalence via single-qudit Clifford unitaries for operational simplicity. Any magic state distillation protocol or state injection scheme for a magic state $\ket{m}$ can be trivially adapted to also work for $C\ket{m}$. Therefore, if any two magic states are related to each other by a single-qudit Clifford transformation, then they should clearly be regarded as equivalent under any reasonable definition of equivalence.

However, if one allows for multi-qudit operations, then more general notions of equivalence may be possible. Given a sufficient number of copies, $N_m$, of a magic state $\ket{m}$, it may be possible to obtain $N_\psi$ copies of another quantum state $\ket{\psi}$, using multi-qudit Clifford unitaries and stabilizer measurements (via, e.g., state-injection). An interesting question is then, what is the best possible asymptotic rate of conversion, $R(\ket{\psi},\ket{m})=N_\psi/N_m$, that is attainable, and for which states $\ket{\psi}$ is this rate equal to one? A necessary  condition, for $R(\ket{\psi},\ket{m})=1$ is that $\ket{\psi}$ and $\ket{m}$ have an identical relative entropy of magic.\cite{Veitch_2014} This is guaranteed to be the case if $\ket{\psi}=C\ket{m}$ for some Clifford unitary $C$, but it is in principle possible that there are other single-qudit states $\ket{\phi}$ not related to $\ket{\psi}$ by a single-qudit Clifford unitary that also possess the same relative entropy of magic. We do not pursue this more general notion of equivalence here, but we note that some interesting bounds on inter-conversion of qutrit states appear in \cite{wang2018efficiently}. (It turns out that all the non-degenerate magic states identified in this paper have unequal mana or thauma, hence they must be inequivalent even in this more general sense.)

Below we will identify all Clifford-inequivalent magic states for qutrits and ququints, and compute some of their basic properties.  

Before we proceed, let us observe that we could have instead chosen to define  a magic state as any single-qudit state which is a local maximum of the relative entropy of magic. This would perhaps be a more meaningful definition, but unfortunately, the relative entropy of magic for an arbitrary single-qutrit state is not feasible to compute, as far as we know. However, in the appendix, we analytically determine all qutrit states which locally maximize the mana; and we find that these states coincide with the non-degenerate single-qutrit Clifford eigenstates identified in this section. 

\subsection{Conjugacy Classes of the Single-Qutrit Clifford Group}

The process of identifying all Clifford-inequivalent qudit magic states is facilitated greatly by computing the conjugacy classes of the single-qudit Clifford group.
Recall that two elements, $g_1$ and $g_2$, of a group $G$ are said to be in the same \textit{conjugacy class}, denoted as $[g_1]$ or $[g_2]$, if there exists an $h \in G$ such that $hg_1h^{-1}=g_2$

Suppose two Clifford operators $C_1$ and $C_2$ are in the same conjugacy class $[C_1]$ of the Clifford group; then it is easy to see that the eigenvectors of $C_1$ are related to the eigenvectors of $C_2$ by a Clifford transformation. Hence, to enumerate all Clifford-inequivalent magic states, we only need to compute eigenvectors of one representative of each conjugacy class of the Clifford group. 

Note that, if $C \in \mathcal C$, the conjugacy class $[C]$ and $[C^{-1}]$ may be different, but both $C$ and $C^{-1}$ have identical eigenspaces.  To simplify the subsequent analysis, we define an additional equivalence for elements of the Clifford group. 

\begin{definition}We say that two different elements of the single-qudit Clifford group, $C_1$ and $C_2$ are \textbf{eigenspace-equivalent} if there exists a single-qudit Clifford unitary $C$ that acts as a bijection from the eigenspace of of $C_1$ to the eigenspace of $C_2$.  We define the equivalence classes of Clifford operators with respect to this equivalence this equivalence relation as set of \textbf{reduced conjugacy classes}.\end{definition} 

We denote the reduced conjugacy classes as $[[C]]$. Note that if $C_1$ and $C_2$ belong to the same conjugacy class of the single-qudit Clifford group, they automatically belong to the same reduced conjugacy class.

To enumerate all inequivalent qudit magic states, we must enumerate the eigenvectors of one representative of each reduced conjugacy class of the single-qudit Clifford group. 
In a slight abuse of terminology we may use the phrase ``eigenvectors of a conjugacy class'' to mean the eigenvectors of a representative operator of that reduced conjugacy class.

\subsubsection*{Conjugacy Classes of $SL(2,\mathbb Z_3)$}
As reviewed in section \ref{sec-2}, single qutrit Clifford unitaries are (up to an overall phase) in one-to-one correspondence with $\mathbb Z_3^2 \rtimes SL(2,\mathbb Z_3)$. The easiest way to explicitly compute the conjugacy classes of the single-qutrit Clifford group is to first compute the conjugacy classes of $SL(2, \mathbb Z_3)$.  The conjugacy classes of $SL(2,\mathbb Z_3)$ are well-known, (see e.g., \cite{Appleby2, Humphreys}), and can also be computed directly using a computer algebra system without difficulty. We find they are given by the following,
\begin{eqnarray}
\left[ \hat{\mathbb{I}} \right]  & = &
\left\{\left(
\begin{array}{cc}
 1 & 0 \\
 0 & 1 \\
\end{array}
\right)\right\}
\\
\left[ -\hat{\mathbb{I}} \right] & = &
\left\{\left(
\begin{array}{cc}
 -1 & 0 \\
 0 & -1 \\
\end{array}
\right)\right\} \\
\left[ \hat{H} \right] & = &
\left\{\left(
\begin{array}{cc}
 0 & 1 \\
 2 & 0 \\
\end{array}
\right),\left(
\begin{array}{cc}
 0 & 2 \\
 1 & 0 \\
\end{array}
\right),\left(
\begin{array}{cc}
 1 & 1 \\
 1 & 2 \\
\end{array}
\right),\left(
\begin{array}{cc}
 1 & 2 \\
 2 & 2 \\
\end{array}
\right),\left(
\begin{array}{cc}
 2 & 1 \\
 1 & 1 \\
\end{array}
\right),\left(
\begin{array}{cc}
 2 & 2 \\
 2 & 1 \\
\end{array}
\right)\right\} \\
\left[ \hat{S} \right]  & = &
\left\{\left(
\begin{array}{cc}
 1 & 0 \\
 1 & 1 \\
\end{array}
\right),\left(
\begin{array}{cc}
 1 & 2 \\
 0 & 1 \\
\end{array}
\right),\left(
\begin{array}{cc}
 2 & 2 \\
 1 & 0 \\
\end{array}
\right),\left(
\begin{array}{cc}
 0 & 2 \\
 1 & 2 \\
\end{array}
\right) \right\} \\
\left[ \hat{S}^2 \right]  & = &
\left\{\left(
\begin{array}{cc}
 1 & 0 \\
 2 & 1 \\
\end{array}
\right),\left(
\begin{array}{cc}
 1 & 1 \\
 0 & 1 \\
\end{array}
\right),\left(
\begin{array}{cc}
 2 & 1 \\
 2 & 0 \\
\end{array}
\right),\left(
\begin{array}{cc}
 0 & 1 \\
 2 & 2 \\
\end{array}
\right)\right\} \\
\left[ \hat{N} \right] & = &
\left\{\left(
\begin{array}{cc}
 2 & 0 \\
 2 & 2 \\
\end{array}
\right),\left(
\begin{array}{cc}
 2 & 1 \\
 0 & 2 \\
\end{array}
\right),\left(
\begin{array}{cc}
 1 & 1 \\
 2 & 0 \\
\end{array}
\right), \left(
\begin{array}{cc}
 0 & 1 \\
 2 & 1 \\
\end{array}
\right)\right\}\\
\left[ \hat{N}^{-1} \right]  & = &
\left\{\left(
\begin{array}{cc}
 2 & 0 \\
 1 & 2 \\
\end{array}
\right),\left(
\begin{array}{cc}
 2 & 2 \\
 0 & 2 \\
\end{array}
\right),\left(
\begin{array}{cc}
 1 & 2 \\
 1 & 0 \\
\end{array}
\right),\left(
\begin{array}{cc}
 0 & 2 \\
 1 & 1 \\
\end{array}
\right)\right\}
\end{eqnarray}
Here, we have defined $\hat{N}=\hat{S}\hat{H}^2$.

Let us also list the subgroups of $SL(2,\mathbb Z_3)$ for future use. There are 14 proper subgroups of $SL(2,\mathbb Z_3)$, which can be divided into 6 conjugacy classes. Apart the trivial subgroup consisting of only the identity, these are:
\begin{enumerate}
\item One subgroup of size $8$ isomorphic to the quaternion group, generated by $\hat{H}$ and $\hat{H}'=\left(
\begin{array}{cc}
 2 & 2 \\
 2 & 1 \\
\end{array}
\right)$.
\item Four subgroups of size $6$ isomorphic to $\mathbb Z_6$, each generated by one of the elements of $[\hat{N}]$.
\item Three subgroups of size $4$ isomorphic to $\mathbb Z_4$, which the three groups generated by $\hat H$, $\hat H'$, and $\hat H''=\hat H \hat H'$.
\item Four subgroups of size $3$ isomorphic to $\mathbb Z_3$, each generated by one of the elements of $[\hat{S}]$.
\item One subgroup of size $2$ isomorphic to $\mathbb Z_2$, generated by $\hat{2}$.
\end{enumerate}

\subsubsection*{Conjugacy Classes of $\mathbb Z_3^2 \rtimes SL(2,\mathbb Z_3)$}
We now turn our attention to conjugacy classes of the single qutrit Clifford group, which is isomorphic to $\mathbb Z_3^2 \rtimes SL(2,\mathbb Z_3)$. By direct computation, we find four of the conjugacy classes of $SL(2,\mathbb Z_3)$ directly translate into the following conjugacy classes of the Clifford group:
\begin{eqnarray}
[H] &=& \{ D_{\vec{\chi}} V_{\hat{F}} ~|~ \chi \in \mathbb Z_3^2,~\hat{F} \in [\hat{H}] \} \\
\left[ N \right] &=& \{ D_{\vec{\chi}} V_{\hat{F}} ~|~ \chi \in \mathbb Z_3^2,~\hat{F} \in [\hat{N}] \} \\
\left[ N^{-1} \right] &=& \{ D_{\vec{\chi}} V_{\hat{F}} ~|~ \chi \in \mathbb Z_3^2,~\hat{F} \in [\hat{N}^{-1}] \} \\
\left[ V_{-\hat{\mathbb{I}}} \right] &=& \{ D_{\vec{\chi}} V_{\hat{F}} ~|~ \chi \in \mathbb Z_3^2,~\hat{F} \in [-\hat{\mathbb{I}}] \}.
\end{eqnarray}
where we defined $N = V_{\hat{N}}$.

The conjugacy class of the identity in $SL(2,\mathbb Z_3)$ splits into two different conjugacy classes in $\mathbb Z_3^2 \rtimes SL(2,\mathbb Z_3)$:
\begin{eqnarray}
\left[ \mathbb I \right] &=& \{\mathbb  I\} \\
\left[ \text{Pauli}  \right] &=& \{ D_{\vec{\chi}} ~|~ \chi \in \mathbb Z_3^2, \chi \neq (0,0) \}.
\end{eqnarray}

The conjugacy classes $[S]$ and $[S^2]$, (both of which contain an element of the form $\begin{pmatrix} 1 & 0 \\ \gamma & 1 \end{pmatrix}$), also both split into two different conjugacy classes:
\begin{eqnarray}
\left[ V_{S^2} \right] & = & \left \{ D_{(0,v)} V_{\tiny \begin{pmatrix} 1 & 0 \\ 2 & 1 \end{pmatrix}}, ~D_{(u,0)} V_{\tiny \begin{pmatrix} 1 & 1 \\ 0 & 1 \end{pmatrix} },~
D_{(u,-u)} V_{\tiny \begin{pmatrix} 2 & 1 \\ 2 & 0 \end{pmatrix}},~D_{(u,u)}V_{\tiny \begin{pmatrix} 0 & 1 \\ 2 & 2 \end{pmatrix}} \right \} \\
\left[ XV_{S^2} \right] & = & \left \{ D_{(u^
*,v)} V_{\tiny \begin{pmatrix} 1 & 0 \\ 2 & 1 \end{pmatrix}}, ~D_{(u,v^*)} V_{\tiny \begin{pmatrix} 1 & 1 \\ 0 & 1 \end{pmatrix}},~
D_{(u,-u+v^*)} V_{\tiny \begin{pmatrix} 2 & 1 \\ 2 & 0 \end{pmatrix}},~D_{(u,u+v^*)}V_{\tiny \begin{pmatrix} 0 & 1 \\ 2 & 2 \end{pmatrix}} \right \} \\
\left[ V_{\hat{S}} \right] & = &\left\{ D_{(0,v)} V_{\tiny \begin{pmatrix} 1 & 0 \\ 1 & 1 \end{pmatrix}}, ~D_{(u,0)} V_{\tiny \begin{pmatrix} 1 & 2 \\ 0 & 1 \end{pmatrix}},~
D_{(u,-u)} V_{\tiny \begin{pmatrix} 0 & 2 \\ 1 & 2 \end{pmatrix}},~D_{(u,u)}V_{\tiny \begin{pmatrix} 2 & 2 \\ 1 & 0 \end{pmatrix}} \right\} \\
\left[ XV_{\hat{S}} \right] & = &\left\{ D_{(u^
*,v)} V_{\tiny \begin{pmatrix} 1 & 0 \\ 1 & 1 \end{pmatrix}}, ~D_{(u,v^*)} V_{\tiny \begin{pmatrix} 1 & 2 \\ 0 & 1 \end{pmatrix}},~
D_{(u,-u+v^*)} V_{\tiny \begin{pmatrix} 0 & 2 \\ 1 & 2 \end{pmatrix}},~D_{(u,u+v^*)}V_{\tiny \begin{pmatrix} 2 & 2 \\ 1 & 0 \end{pmatrix}} \right\}.
\end{eqnarray}
In the above expressions, $u$ and $v$ denote any element of $\mathbb Z_3$; and $u^*$ and $v^*$ denote any non-zero element of $\mathbb Z_3$.

In summary, we find that there are a total of $10$ conjugacy classes for the single-qutrit Clifford group.

\subsubsection*{Eigenvectors of Clifford Conjugacy Classes}

Let us now calculate the eigenvectors of a representative operator for each conjugacy class. 

In what follows, we denote the eigenvector of an operator $A$ with eigenvalue $a$ as $\ket{A,a}$. If the eigenspace of $A$ is degenerate, we write the state as $\ket{A,a;x,~y,\ldots}$, where $x,~y,\ldots$ parameterize the degenerate eigenspace.

The reduced conjugacy class $[\mathbb{I}]$ is trivial, and the eigenvectors of $[\text{Pauli}]$ are stabilizer states. The eigenvectors of the remaining conjugacy classes are as follows.

\subsubsection*{Eigenvectors of $[V_{-\hat{\mathbb{I}}}]$}
 The eigenvectors of $V_{-\hat{\mathbb{I}}}$ include the 2-dimensional family of degenerate eigenvectors:
\begin{equation}
\ket{V_{-\hat{\mathbb{I}}}, 1; a,b} = a\ket{0}+b(\ket{1}+\ket{2}),
\end{equation}  and the non-degenerate eigenvector:
\begin{equation}
\ket{V_{-\hat{\mathbb{I}}}, -1} = (\ket{1}-\ket{2})/\sqrt{2}.
\end{equation}

Note that $V_{{-\hat{\mathbb{I}}}}=H^2$ coincides with the phase point operator $A_{(0,0)}$, and commutes with $H$, $N$, and $N^{-1}$. The eigenvectors of $H$, $N$, and $N^{-1}$ are also eigenvectors of $V_{-\hat{\mathbb{I}}}$. 

\subsubsection*{Eigenvectors of $[V_{\hat{S}}]$ and $[V_{\hat{S}}^{-1}]$}

Both $V_{\hat{S}}$ and $V_{\hat{S}}^{-1}$ have the same eigenvectors, and thus belong to the same reduced conjugacy class. The eigenvectors of $V_{\hat{S}}$ are the stabilizer state, $\ket{V_{\hat{S}},1}=\ket{0}$ and
\begin{eqnarray}
\ket{V_{\hat{S}},\omega_3^2;\gamma,\delta} = \gamma \ket{1}+\delta \ket{2}.
\end{eqnarray}

\subsubsection*{Eigenvectors of $[N]$ and $[N^{-1}]$}

$N$ and $N^{-1}$ belong to different conjugacy classes, but have the same eigenvectors; so they both belong to the same reduced conjugacy class. The eigenvectors of $N$ are:
\begin{eqnarray}
\ket{N,1} = \ket{N^{-1},1} & = & \ket{0} \\
\ket{N, \omega_3^2} = \ket{N^{-1}, \omega_3} & = & \frac{1}{\sqrt{2}} \left( \ket{1}+\ket{2} \right) \equiv \ket{N_+}\\
\ket{N, -\omega_3^2} = \ket{N^{-1}, -\omega_3} & = & \frac{1}{\sqrt{2}} \left( \ket{1}-\ket{2} \right).
\end{eqnarray}

The states $\ket{N, - e^{i \pi/3}}$ and $\ket{N, e^{i\pi/3}}$ are not related to each other by any Clifford operator; both are also eigenvectors of $V_{-\hat{\mathbb{I}}}$.

The set of convex mixtures of these states forms a planar slice of qutrit phase space. This is shown in Figure \ref{fig:B-plane}.

\subsubsection*{Eigenvectors of $[H]$}
The eigenvectors of $H$ are:
\begin{eqnarray}
\ket{H,i} & = & \frac{1}{\sqrt{2}} \left( \ket{1}-\ket{2} \right) \equiv \ket{S}\\
\ket{H,1} & = & \cos \theta \ket{0} + \frac{1}{\sqrt{2}}\sin \theta (\ket{1}+\ket{2}) \equiv \ket{H_+} \\
\ket{H,{-1}} & = & \sin \theta \ket{0} - \frac{1}{\sqrt{2}}\cos \theta (\ket{1}+\ket{2}) \equiv \ket{H_-},
\end{eqnarray}
where $\theta=\frac{1}{2} \arctan \sqrt{2}$.

The eigenvectors $\ket{H,1}$ and $\ket{H,-1}$  are related to each other by the Clifford transformation $\ket{H_1}= V_{\tiny \begin{pmatrix}  1 & 1\\1 & 2 \end{pmatrix}} \ket{H_{-1}}$.

The planar region of qutrit state space spanned by convex mixtures of the eigenvectors of $H$ is shown in Figure \ref{fig:H-plane}.

Note that $H^{-1}$ has eigenvalues $-i,~1, -1$. (It is only because we are ignoring overall phases in our definition of the single-qudit Clifford group that  $H^{-1}$ is in the same conjugacy class as $H$.) 

\subsubsection*{Eigenvectors of $[XV_{\hat{S}}]$ and $[XV_{\hat{S}}^{-1}]$}
$XV_{\hat{S}}$ and $(XV_{\hat{S}})^{-1}$ belong to different conjugacy classes but have the same eigenvectors. The eigenvectors for $XV_{\hat{S}}$ and its inverse were already studied in \cite{HowardVala}, and distillation routines for these states were studied in \cite{CampbellAnwarBrowne}. In terms of $\xi=e^{2\pi i /9}$, these can be written as
\begin{eqnarray}
\ket{XV_{\hat{S}}, \xi^7} & = & \xi^8 \ket{0}+\xi \ket{1}+\ket{2}  \\
\ket{XV_{\hat{S}},\xi^4} & = &  \xi^2 \ket{0}+\xi^7 \ket{1}+\ket{2}  \\
\ket{XV_{\hat{S}},\xi} & = &  \xi^5 \ket{0}+\xi^4 \ket{1}+\ket{2}.
\end{eqnarray}
One can check that the different eigenvectors are related to each other by multiplication by $Z$. So there is one inequivalent magic state which we take to be
\begin{equation}
\ket{XV_{\hat{S}}}= \xi^5 \ket{0}+\xi^4 \ket{1}+\ket{2}.
\end{equation}

\subsection{Qutrit Clifford Eigenstates}
In summary, we find that there are four Clifford-inequivalent non-degenerate qutrit Clifford eigenstates,
\begin{equation}\ket{S},~\ket{H,1},~\ket{N_+}~ \text{and }\ket{XV_{\hat{S}}}.
\end{equation}
In addition there are two Clifford-inequivalent families of degenerate Clifford eigenspaces:
\begin{eqnarray}
    \ket{V_{-\hat{\mathbb{I}}},1;\alpha,\beta} & = & \alpha \ket{0}+{\beta}(\ket{1}+\ket{2}) \\
    \ket{V_{\hat{S}},\omega^2;\gamma,\delta} &= & \gamma\ket{1} + \delta\ket{2}.
\end{eqnarray}
The two degenerate eigenspaces $\ket{V_{-\hat{\mathbb{I}}},1;\alpha,\beta}$ and $\ket{V_{\hat{S}},\omega^2;\gamma,\delta}$ only intersect at points Clifford-equivalent to $\ket{0}$ and $\ket{N_+}$.


The generalized ``hypergraph'' in Figure \ref{fig:venn}, provides a crude graphical summary of these magic states and their properties. (This is reminiscent of the hypergraph construction of \cite{Acin2015}.) The figure consists of several points and lines, contained within intersecting colored regions. In this figure, each (reduced) conjugacy class of the Clifford group is represented by a colored region. Each non-degenerate, inequivalent eigenstate is depicted as a point, and each degenerate family of eigenstates is depicted as a line. An eigenstate of an operator $A$ belonging to $[[A]]$ is contained within the region corresponding to $[[A]]$. Some magic states, such as $\ket{S}$, are contained in more than one region because they are simultaneous eigenvectors of operators belonging to different conjugacy classes.

\begin{figure}
    \centering
    \includegraphics[width=.8\textwidth]{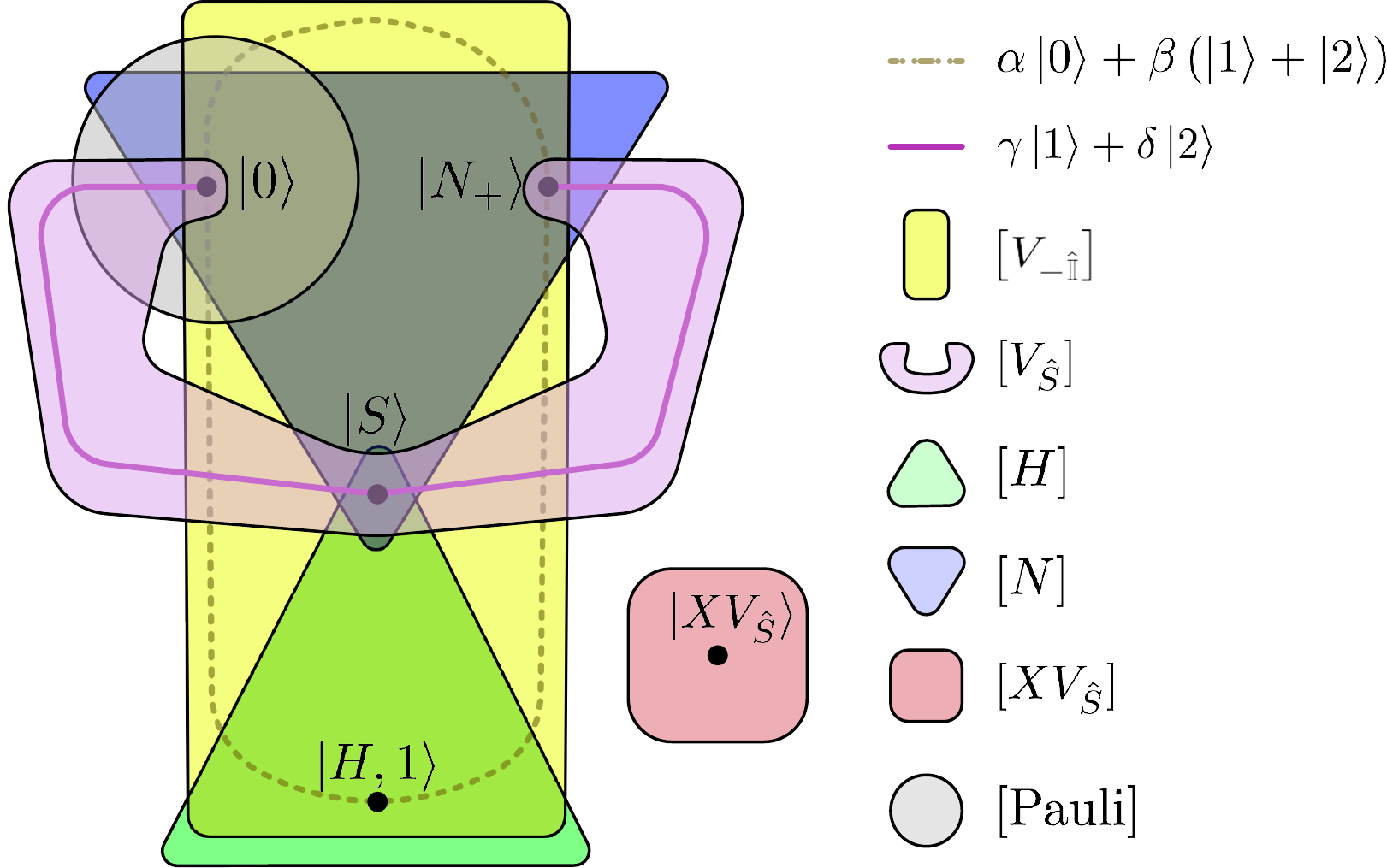}
    \caption{This diagram illustrates all inequivalent qutrit Clifford eigenstates, depicted as points, and their corresponding reduced conjugacy classes, depicted as colored regions. 1-parameter degenerate families of eigenstates are represented by curves.}
    \label{fig:venn}
\end{figure}

In the next sections, we discuss each of the candidate magic states in detail. This discussion is summarized in Table \ref{qutrit-table}. An alternate derivation of these states as qutrit states that maximize the mana is presented in the Appendix.

\begin{table}
\begin{center}
    \begin{tabular}{|l|c|c c|c|c|}
 \hline
    \textbf{State} & \textbf{Eigenvector of} & \textbf{Mana} & & \textbf{Min$(W_\chi(\rho))$} &\textbf{$|$Orbit$|$} \\ [0.5ex]
 \hline
 \hline
 $\ket{S}$ & $H$, $V_{\hat{S}}$, $V_{-\hat{\mathbb{I}}}$, $N$ & $\log \frac{5}{3}$ &$\approx$ 0.51 & -.33 & 9
 \\
 \hline
 $\ket{H,1}$ &  $H$, $V_{-\hat{\mathbb{I}}}$ & $\log \left(\frac{1}{3}+\frac{2}{\sqrt{3}}\right)$ & $\approx$ 0.40 & -.06 & 54 \\
 \hline
 $\ket{N_+}$ & $V_{\hat{S}}$, $V_{-\hat{\mathbb{I}}}$, $N$  & $\log \left(\frac{5}{3}\right)$ & $\approx$ 0.51  & -.17 & 36 \\
 \hline
 $\ket{XV_{\hat{S}}}$ & $XV_{\hat{S}}$ & $\log \left(\frac{1}{3} \left(1+4 \cos \left(\frac{\pi }{9}\right)\right)\right)$ &  $\approx 0.46$ & -.10 & 72 \\
 \hline
\end{tabular}
\caption{\textbf{List of Non-Degenerate Qutrit Clifford Eigenstates:} This table provides a list of non-degenerate qutrit magic states. The first column presents the name of the state. The second column lists the operators that the state is an eigenvector of. The third column lists the mana of the state. The fourth column lists the most negative entry in the state's discrete Wigner function denoted as Min($W_\chi(\rho)$). The last column lists the number of single qudit Clifford-eigenstates that are Clifford equivalent to the given magic state, denoted as $|$Orbit$|$.\label{qutrit-table}}
\end{center}
\end{table}

\subsubsection{The Strange State $\ket{S}$}
The state $\ket{H,i}$ is also known as the strange state $\ket{S}$, and was identified as one of the two states that maximize the mana in \cite{Veitch_2014}. As illustrated from Figure \ref{fig:venn}, the strange state is a simultaneous eigenstate of Clifford unitaries belonging to several different reduced conjugacy classes, with the following eigenvalues:
\begin{eqnarray}
V_{-\hat{\mathbb{I}}}\ket{S} & = & -1 \ket{S} \\
N\ket{S} & = & e^{\pi i /3} \ket{S} \\
V_H \ket{S} & = & i \ket{S} \\
V_{\hat{S}} \ket{S} & = & \omega^2 \ket{S}
\end{eqnarray}

The discrete Wigner function representation of $\ket{S}$ is particularly simple:
\begin{equation}
W_{(u,v)}(\ket{S}\bra{S}) = \begin{cases} -1/3 & (u,v)=(0,0) \\ 1/6 & (u,v) \neq (0,0)\end{cases}.
\label{discrete-wigner-strange}
\end{equation}
It is depicted  in Figure \ref{hiWigner}.

From this Wigner function, we see that the state $\ket{S}$ lies directly above the ``center'' of one facet of the Wigner polytope, and also maximally violates the contextuality inequality of \cite{nature}. Distillation of the $\ket{S}$ state therefore, has the theoretical potential to have the highest threshold to noise of all qutrit magic states, \cite{noiseQudit} although as argued in \cite{qudit-bound-states}, the limit is unattainable by any finite distillation routine. In this sense, it is analogous to $\ket{T}$ states for qubits \cite{MSD,bound}.

The orbit of $\ket{S}$ under the full Clifford group contains 9 states, and its orbit under symplectic rotations is of size $1$.

Using equation \ref{calculate-mana} and the discrete Wigner function \ref{discrete-wigner-strange}, one can directly calculate that the mana of this state is $\log \frac{5}{3}$. This was numerically shown to be maximal in \cite{Veitch_2014}; we present an analytical proof in the Appendix. However, \cite{Veitch_2014} also observed that the Norell state, $\ket{N_+}$, has the same maximal value of the mana, leaving open the question of which of these two qutrit states is most magic. Recently, this question was settled when \cite{wang2018efficiently} showed that the strange state has larger thauma than the Norell state. So, from the perspective of magic as a resource, the strange state is the most magic qutrit state.

For some time, no magic state distillation routine that distills the strange state was known. However, recently it was shown a CSS code based on the ternary Golay code can be used to distill the strange state in \cite{GolaySP}.

\subsubsection{The State $\ket{H,1}$}
The state $\ket{H,1}$ is an eigenstate of the following operators:
\begin{eqnarray}
V_{-\hat{\mathbb{I}}}\ket{H,1} & = & 1 \ket{H,1} \\
V_H \ket{H,1} & = &  \ket{H,1}
\end{eqnarray}

\begin{figure}

\begin{subfigure}[t]{0.45\textwidth}
    \includegraphics[align=c, width=.9\textwidth]{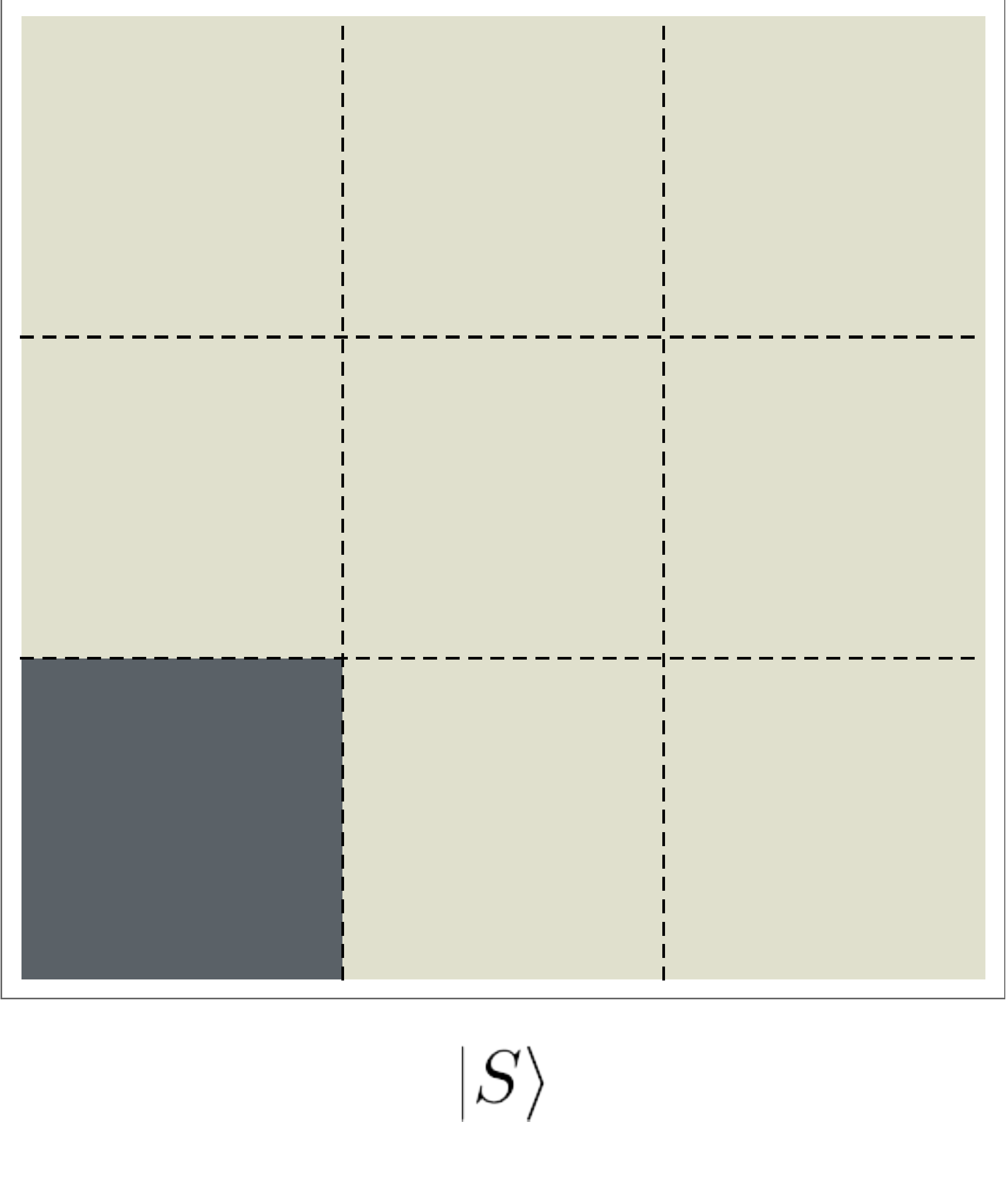}
   
\end{subfigure}
\begin{subfigure}[t]{0.45\textwidth}
    \includegraphics[align=c, width=.9\textwidth]{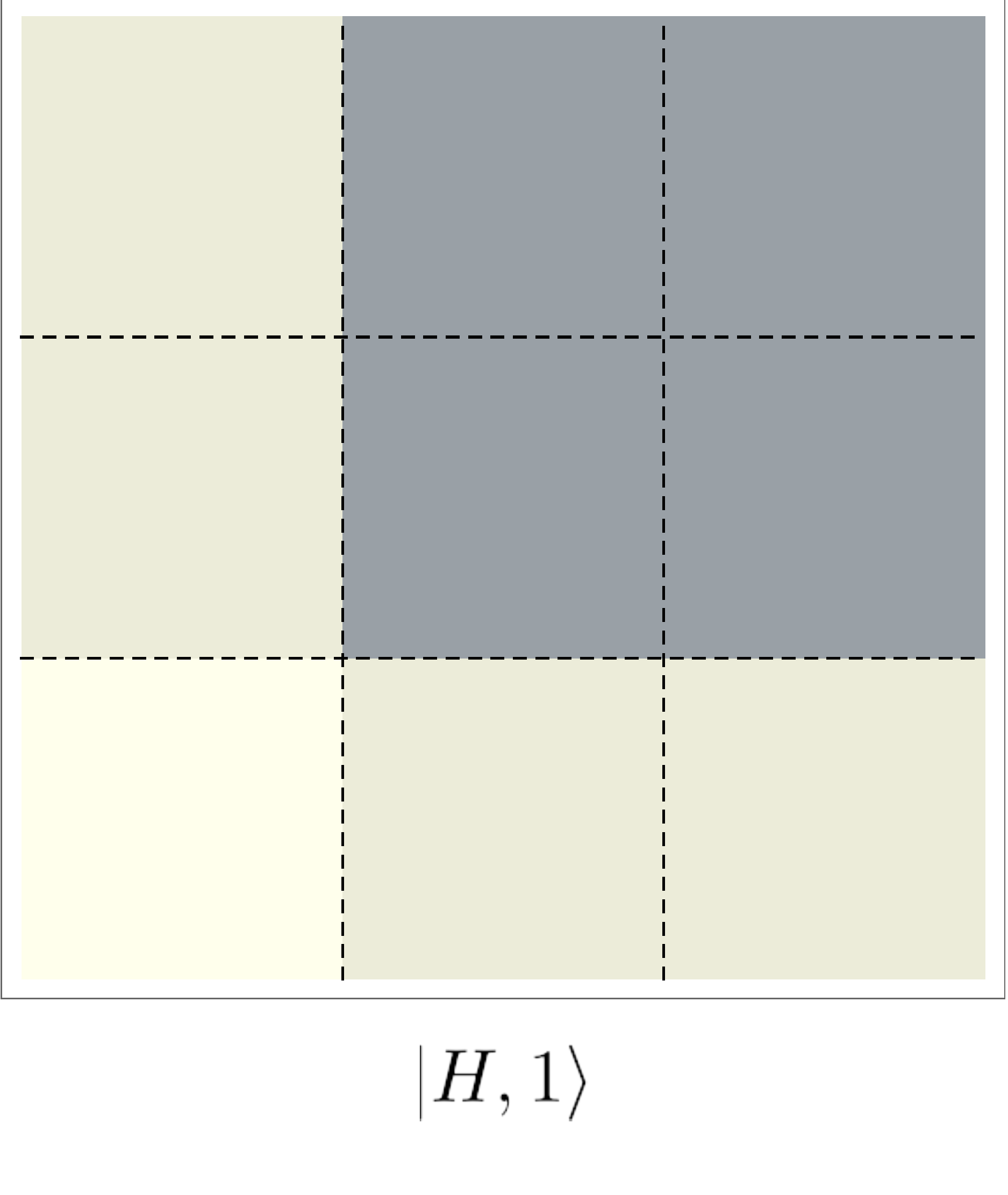}
  
\end{subfigure}

\begin{subfigure}[t]{0.45\textwidth}
    \includegraphics[align=c, width=0.9\textwidth]{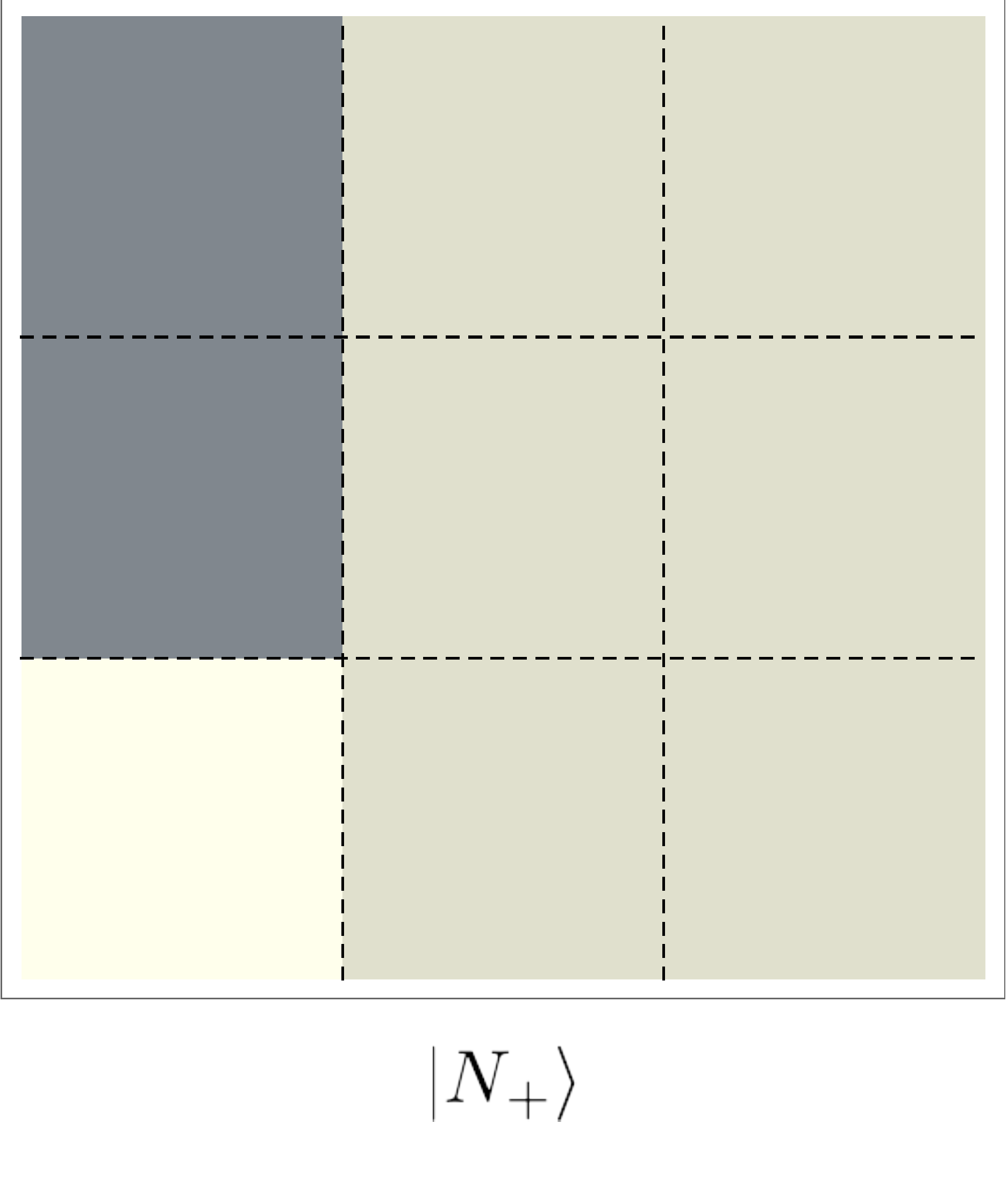}
\end{subfigure}
\begin{subfigure}[t]{0.45\textwidth}
    \includegraphics[align=c, width=0.9\textwidth]{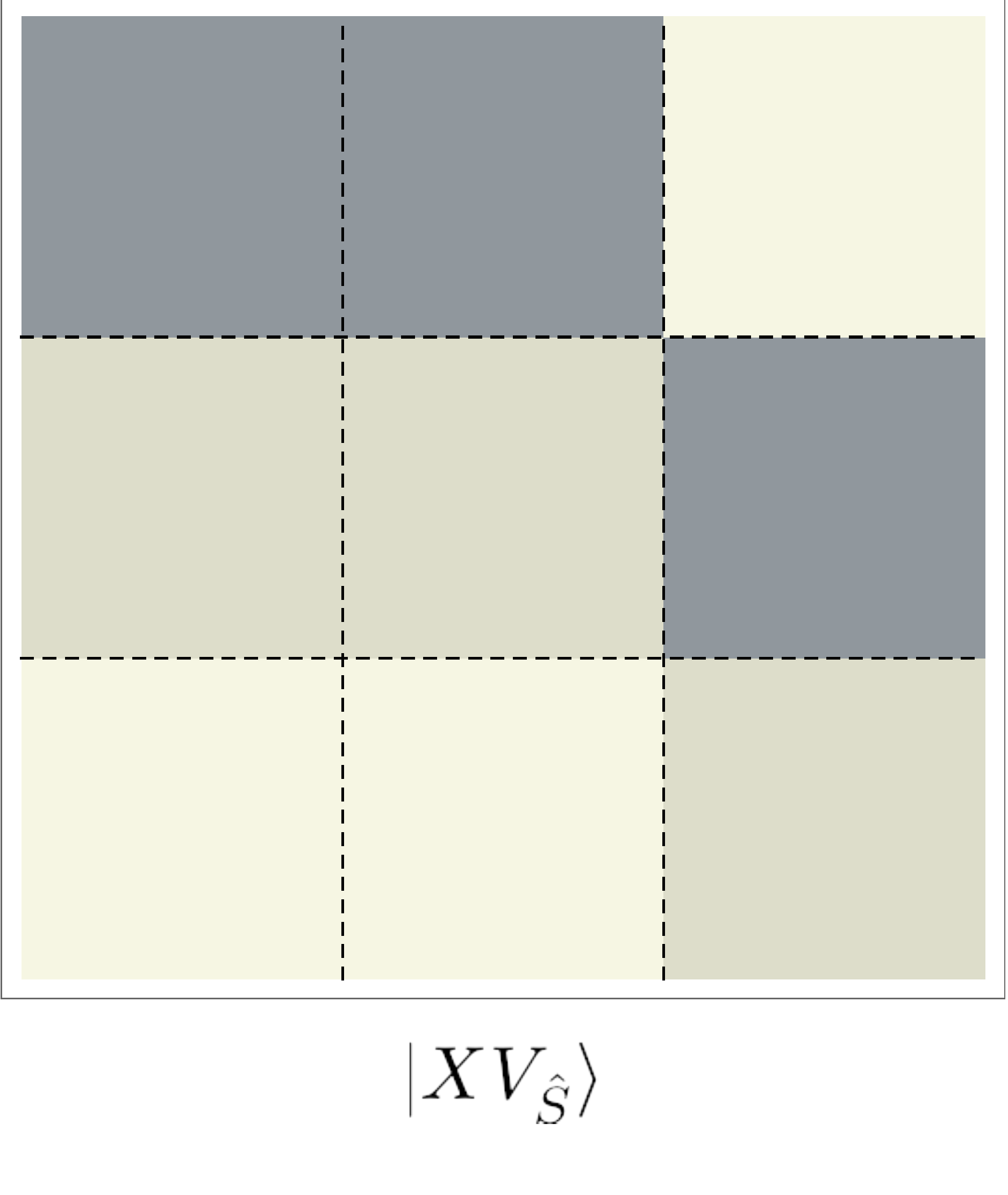}

\end{subfigure}
    \includegraphics[align=c, width=0.07\textwidth]{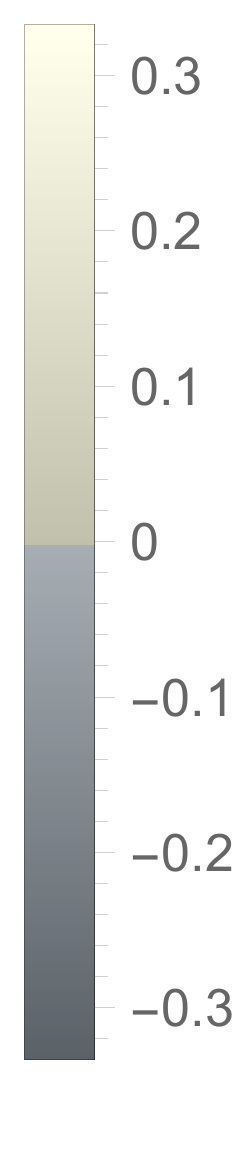}

    \caption{\textbf{Discrete Wigner functions for each of the four non-degenerate qutrit magic states are plotted above.}}
    \label{fig:qutrit-wigner} \label{SWigner}   \label{BWigner}   \label{h1Wigner}
     \label{hiWigner}
\end{figure}

Its discrete Wigner function is given by
\begin{equation}
    W_{(u,v)}(\ket{H,1}\bra{H,1}) = \left(
\begin{array}{ccc}
 W_{(2,0)} & W_{(2,1)} & W_{(2,2)} \\
 W_{(1,0)} & W_{(1,1)} & W_{(1,2)} \\
 W_{(0,0)} & W_{(0,1)} & W_{(0,2)} \\
\end{array}
\right) =
\begin{pmatrix}
 \frac{1}{12} \left(1+\sqrt{3}\right) & \frac{1}{-6-6 \sqrt{3}}
   & \frac{1}{-6-6 \sqrt{3}} \\
 \frac{1}{12} \left(1+\sqrt{3}\right) & \frac{1}{-6-6 \sqrt{3}}
   & \frac{1}{-6-6 \sqrt{3}} \\
 \frac{1}{3} & \frac{1}{12} \left(1+\sqrt{3}\right) &
   \frac{1}{12} \left(1+\sqrt{3}\right) \\
\end{pmatrix}
\end{equation}
and is depicted in Figure \ref{h1Wigner}.

The orbit of $\ket{H,1}$ under the full Clifford group contains 54 states, and its orbit under symplectic rotations contains $6$ states. Its orbit under the quaternion subgroup of symplectic rotations contains $2$ states; its orbit under two of the three  $\mathbb Z_4$ subgroups that do not contain $H$, each contain 2 states. Its orbit under the $\mathbb Z_3$ and $\mathbb Z_6$ subgroups each contain 3 states. 

The mana of $\ket{H,1}$ is $\log \left(\frac{1}{3}+\frac{2}{\sqrt{3}}\right)$. The only known distillation routine for $\ket{H,1}$ states presented in \cite{ACB} and is based on the five-qutrit code. It achieves only a linear reduction in noise,  

\subsubsection{The Norell State $\ket{N_+}$}
The state $\ket{N_+}$ is an eigenvector of the following Clifford operators:
\begin{eqnarray}
V_{-\hat{\mathbb{I}}}\ket{N_+} & = & 1 \ket{N_+} \\
V_N \ket{N_+} & = & -e^{i\pi/3} \ket{N_+}.
\end{eqnarray}

Its discrete Wigner function is:
\begin{equation}
    W_{(u,v)}(\ket{N_+}\bra{N_+}) =
\left(
\begin{array}{ccc}
 -\frac{1}{6} & \frac{1}{6} & \frac{1}{6}
   \\
 -\frac{1}{6} & \frac{1}{6} & \frac{1}{6}
   \\
 \frac{1}{3} & \frac{1}{6} & \frac{1}{6}
   \\
\end{array}
\right)
\end{equation}
This is plotted in Figure \ref{BWigner}.

The orbit of this state under the Clifford group contains 36 states, and its orbit under the set of symplectic rotations contains $4$ states. Its orbit under the quaternion subgroup contains $4$ states, and its orbit under each of the $\mathbb Z_4$ subgroups contain 2 states. Its orbit under the three (of four) $\mathbb Z_6$ subgroups that do not contain $N$ each contain 3 states.  Its orbit under the three (of four) $\mathbb Z_3$ subgroups that do not contain $V_{\hat{S}}$ each also contain $3$ states.

Its mana is $\log \left(\frac{5}{3}\right)$, which is maximal. It was identified as a \textit{Norell state} in \cite{Veitch_2014}. However, as shown in \cite{wang2018efficiently}, its thauma is less than that of the strange state, so it is not the most magic qutrit state.

A distillation routine with linear reduction in noise for the Norell state was found in \cite{Howard}. A substantially better distillation routine was recently discovered, using the ternary Golay code, in \cite{GolaySP}.

\subsubsection{The State $\ket{XV_{\hat{S}} }$}
Its discrete Wigner function is:
\begin{equation}
W_{(u,v)}(\ket{XV_{\hat{S}} } \bra{XV_{\hat{S}} })  = \frac{1}{9}
\left(
\begin{array}{ccc}
 1-2 \cos \left(\frac{\pi }{9}\right) &
   1-2 \cos \left(\frac{\pi }{9}\right) &
   1+2 \cos \left(\frac{2 \pi }{9}\right)
   \\
 1+2 \sin \left(\frac{\pi }{18}\right) &
   1+2 \sin \left(\frac{\pi }{18}\right)
   & 1-2 \cos \left(\frac{\pi }{9}\right)
   \\
 1+2 \cos \left(\frac{2 \pi }{9}\right) &
   1+2 \cos \left(\frac{2 \pi }{9}\right)
   & 1+2 \sin \left(\frac{\pi
   }{18}\right) \\
\end{array}
\right)
\end{equation}
This is plotted in Figure \ref{SWigner}.

The orbit of $\ket{XV_{\hat{S}} }$ under the Clifford group contains $72$ states, and its orbit under the group of symplectic rotations contains $24$ states.

Its mana is $\log \left(\frac{1}{3} \left(1+4 \cos \left(\frac{\pi }{9}\right)\right)\right)$.

Distillation and state-injection schemes for this state were given in \cite{CampbellAnwarBrowne}. It is worth noting that the $\ket{XV_{\hat{S}}}$ state is equatorial, as defined in \cite{CampbellAnwarBrowne}, and a single pure copy of this state can be used to implement the qutrit version of the $\pi/8$ gate (defined and studied in \cite{HowardVala,qutritGate, GLAUDELL201954}) without any chance of error via state injection.

\subsubsection{Degenerate Eigenstates of $V_{-\hat{\mathbb{I}}}$}
Let us parameterize the family of states $\ket{V_{-\hat{\mathbb{I}}}, 1; a,b}$ via $a=\cos \theta$ and $b=e^{i\phi} \sin \theta$.
In terms of these variables, its mana is
\begin{equation}
\begin{split}
\log \Big(\frac{1}{6} & \Big(2 \Big(\left| \sin \theta \left(2 \sqrt{2} \cos \theta \cos \phi+\sin \theta\right)\right|
 + \left| \sin \theta \left(\sin \theta-2 \sqrt{2} \cos \theta \sin \left(\frac{\pi }{6}-\phi \right)\right)\right| \\ &
  +\left| \sin \theta \left(\sin \theta-2 \sqrt{2} \cos \theta \sin \left(\phi +\frac{\pi }{6}\right)\right)\right|
  +1\Big)
  +\left| 3 \cos 2\theta+1\right| \Big)\Big).
\end{split}
\end{equation}
This is plotted in Figure \ref{manaVS}.
All states of the form $\ket{V_{-\hat{\mathbb{I}}}, 1; a,b}$ that maximize the mana are Clifford equivalent to $\ket{N_+}$ which has mana $\log \frac{5}{3}$. There are also local maxima at states Clifford equivalent to $\ket{H,1}$ and the state given by $\theta=\arctan(-\sqrt{2})/2 +\pi/2$, and $\phi=\pi/3$.


Some distillation schemes for the first family of states $\ket{V_{-\hat{\mathbb{I}}}, 1; a,b}$ were studied in \cite{Howard}.

\subsubsection{Degenerate Eigenstates of $V_{\hat{S}}$}
Let us parameterize the family of states $\ket{V_{\hat{S}}, \omega_3^2; \gamma,\delta}$ via $\gamma=\cos \psi$ and $\delta=e^{i\chi}\sin \psi$. In terms of these real variables, its discrete Wigner function is:
\begin{equation}
\frac{1}{3}\left(
\begin{array}{ccc}
 -\sin 2\psi
    \cos (\chi-\pi/3) & \cos ^2\psi & \sin
   ^2\psi \\
 -\sin 2\psi  \cos (\chi+\pi/3) &
   \cos ^2\psi & \sin ^2\psi \\
 \sin 2 \psi   \cos \chi & \cos
   ^2\psi & \sin ^2\psi \\
\end{array}
\right).
\end{equation}
Its mana is
\begin{equation}
\log \left(\frac{1}{3} \left(\left| \sin \left(\frac{\pi }{6}-\chi \right) \sin (2 \psi )\right| + \left| \sin \left(\chi +\frac{\pi }{6}\right) \sin (2 \psi )\right| + \left| \cos \chi \sin (2\psi)\right| +3\right)\right).
\end{equation}
This is plotted in Figure \ref{manaVS}.
The states in the family $\ket{V_{\hat{S}}, \omega_3^2; \gamma, \delta}$ that maximize the mana are Clifford equivalent to either  $\ket{S}=\frac{1}{\sqrt{2}} \left(\ket{1}-\ket{2}\right)$ or $\ket{N_+}=\frac{1}{\sqrt{2}} \left(\ket{1}+\ket{2}\right)$.

\begin{figure}
\centering
\begin{subfigure}[t]{0.49\textwidth}
    \includegraphics[align=c, width=.88\textwidth]{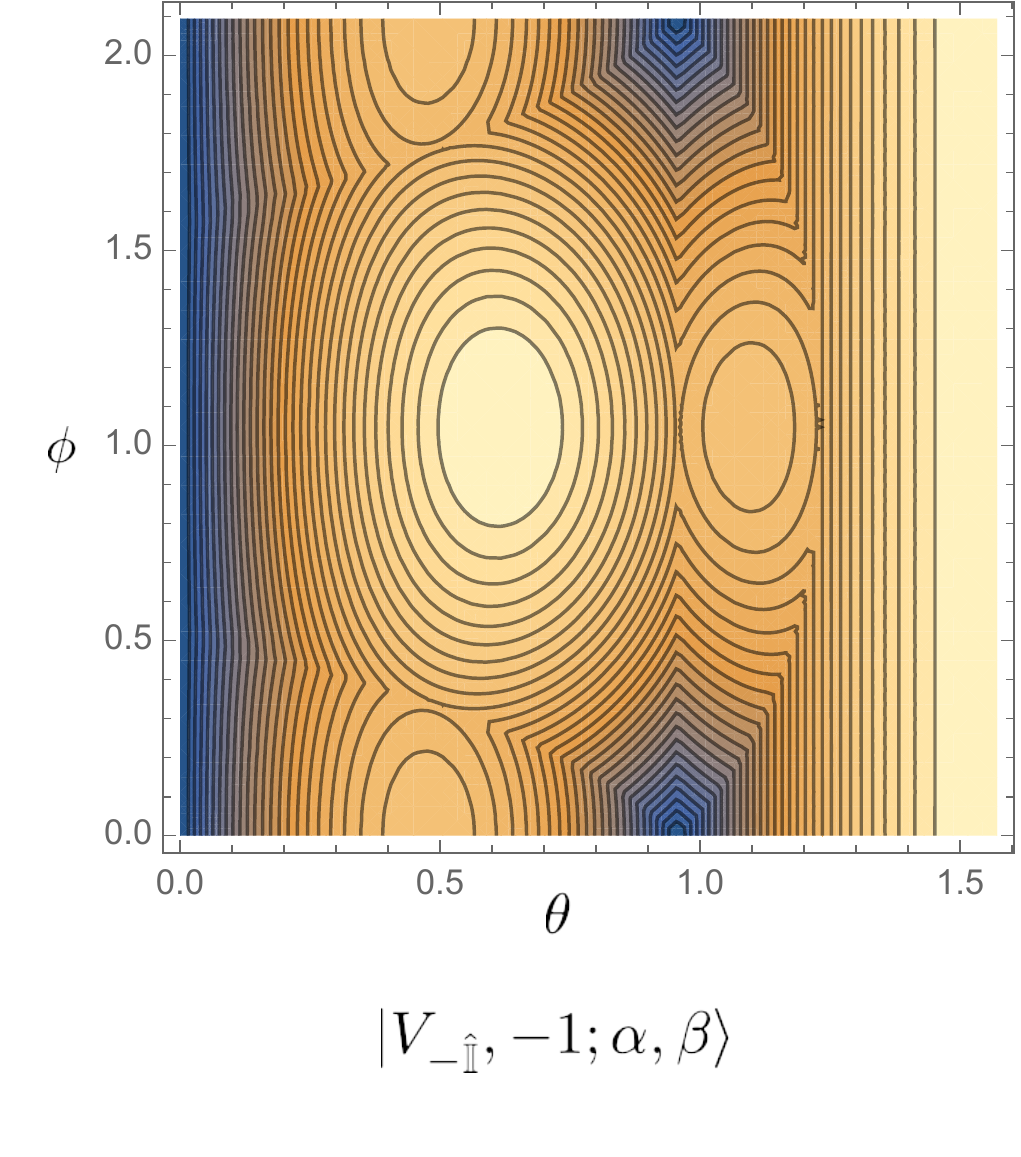}
   
\end{subfigure}
\begin{subfigure}[t]{0.49\textwidth}
    \includegraphics[align=c, width=.88\textwidth]{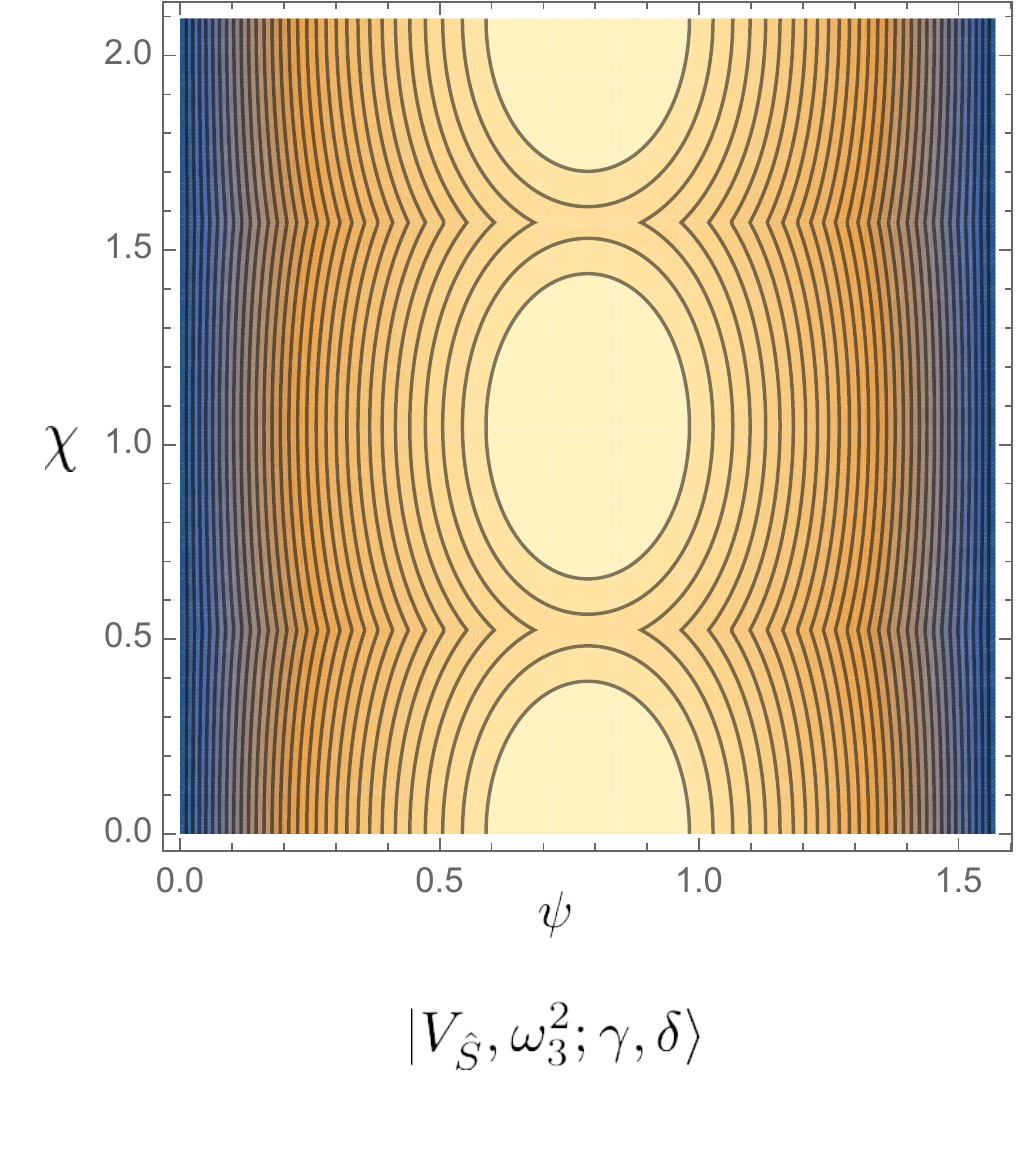}
 \includegraphics[align=c, width=.1\textwidth]{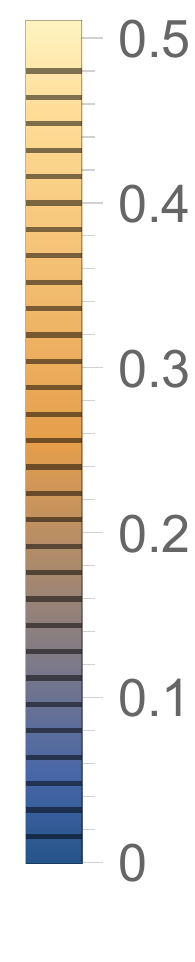}
\end{subfigure}

    \caption{The mana of the two degenerate families of qutrit states, plotted as a function of the angular variables given in the text.}
    \label{degMana} \label{manaVS}
     \label{manaV2}
\end{figure}

\section{Eigenstates of Ququint Clifford Operators}
\label{ququint-states}
We now turn to ququint magic states. We use similar notation as in the previous section, such as $H$ and $V_{\hat{S}}$, for ququint Clifford operators and states. We hope that it is clear from context that all operators and states in this section are ququint operators and states.

\subsection{Conjugacy Classes of the Single-Ququint Clifford group}

Repeating the same computations we carried out for the single-qutrit Clifford group, we find that there are 14 conjugacy classes of the single-ququint Clifford group. These can be grouped into the following 8 reduced conjugacy classes: 
\begin{eqnarray}
~[[I]] & = & \{[I]\} \\
~[[\text{Pauli}] & = & \{[\text{Pauli}]\}, \\
~[[V_{-\hat{\mathbb{I}}}]] &=& \{[V_{-\hat{\mathbb{I}}}]\}, \\
~[[V_{\hat{S}}]] &=& \{[V_{\hat{S}}], ~[V_{\hat{S}}^{-1}]\}, \\
~[[H]] &=& \{[H]\}, \\
~[[A]] &=& \{[V_{\hat{S}}H^2],~[V_{\hat{S}}^2H^2]\},\\
~[[B]] &=& \{[HV_{\hat{S}}],~[(HV_{\hat{S}})^{-1}]\}, \\
~[[XV_{\hat{S}}]] &=& \{[XV_{\hat{S}}], ~[X^2 V_{\hat{S}}], ~[XV_{\hat{S}}^{-1}],~[X^2V_{\hat{S}}^{-1}]\},
 \end{eqnarray}
For notational convenience, we defined the ququint Clifford operators $A=V_{\hat{S}}H^2$ and $B=HV_{\hat{S}}$ in the above list.

The size of each conjugacy class is as follows: $[I]]$ has $1$ element, $[\text{Pauli}]$ has $24$ elements, $[V_{-\hat{\mathbb{I}}}]$ has $25$ elements, $[V_{\hat{S}}]$ has 60 elements, $[H]$ has 750 elements, $[V_{\hat{S}}H^2]$ has 300 elements, $[HV_{\hat{S}}]$ has 500 elements, and $[XV_{\hat{S}}]$ has 120 elements. All conjugacy classes belonging to the same reduced conjugacy class contain the same number of elements. 

Diagonalizing these operators, we find a total of $8$ inequivalent non-degenerate eigenstates, two $1$-parameter families of degenerate eigenstates, and one $2$-parameter family of degenerate eigenstates. We describe each of these below.

\subsubsection*{Eigenstates of $[[V_{-\hat{\mathbb{I}}}]]$}
$V^{(5)}_{-1}$ has two degenerate families of eigenstates:
\begin{eqnarray}
\ket{V^{(5)}_{-1},1;\alpha,\beta,\gamma} & = & \gamma \ket{0}+\alpha (\ket{1}+\ket{4}) + \beta(\ket{2}+\ket{3}) \\
\ket{V^{(5)}_{1},1;\alpha,\beta} & = & \alpha (\ket{1}-\ket{4}) + \beta (\ket{2}-\ket{3}).
\end{eqnarray}

\subsubsection*{Eigenstates of $[[V_{\hat{S}}]]$}
The eigenstates of $V_{\hat{S}}$ are
\begin{eqnarray}
\ket{V_{\hat{S}}, 1} & = & \ket{0} \\
\ket{V_{\hat{S}}, \omega_5^2; \alpha, \beta} & = & \alpha \ket{2}+\beta\ket{3} \\
\ket{V_{\hat{S}}, \omega_5^3;\alpha, \beta} & = & \alpha \ket{1}+\beta\ket{4}.
\end{eqnarray}
The families $\ket{V_{\hat{S}}, \omega_5^2}$ and $\ket{V_{\hat{S}}, \omega_5^3}$ are related to each other by a Clifford transformation.

\subsubsection*{Eigenstates of $[[H]]$}
The eigenstates of $H$ are:
\begin{eqnarray}
\ket{H,-1} & = & (10-2\sqrt{5})^{-1/2} \left( (1-\sqrt{5}) \ket{0} + \ket{1} + \ket{2} + \ket{3} + \ket{4} \right) \\
\ket{H, \pm i} & = & \frac{1}{2} \left( \sqrt{1\mp \chi} \left(\ket{1}-\ket{4}\right) + \sqrt{1\pm \chi}\left(\ket{2}-\ket{3}\right) \right)\\
\ket{H,1; \alpha,\beta} & = & (1+\sqrt{5})/2 (\alpha-\beta)\ket{0} + \alpha \left(\ket{1}+\ket{4} \right) + \beta\left( \ket{2}+\ket{3} \right)
\end{eqnarray}
where $\chi=\sqrt{\frac{1}{10} \left(5+\sqrt{5}\right)}$. $\ket{H,\pm i}$ are related to each other via a Clifford transformation. $\ket{H,-1}$ can be mapped to a member of $\ket{H,1}$ via a Clifford transformation.

\subsubsection*{Eigenstates of $[[A]]$}
The eigenstates of $A=V_{\hat{S}}H^2$ are:
\begin{eqnarray}
\ket{A, 1} & = & \ket{0} \\
\ket{A,\pm\omega_5^2} & = & \frac{1}{\sqrt{2}} (\ket{2}\pm\ket{3}) \\
\ket{A,\pm \omega_5^3} & = & \frac{1}{\sqrt{2}} (\ket{1}\pm\ket{4}).
\end{eqnarray}
$\ket{A_+} \equiv \ket{A,\omega_5^2}$ and $\ket{A, \omega_5^3}$ are related to each other by a Clifford transformation. $\ket{A_-} \equiv \ket{A,-\omega_5^2}$ and $\ket{A, -\omega_5^3}$ are also related to each other by a Clifford transformation.

\subsubsection*{Eigenstates of $[[B]]$}
The reduced conjugacy class of $B=HV_{\hat{S}}$ also contains the operator $B'=KBK^{-1}$, where $K=XV_{\begin{pmatrix} 1 & 2 \\ 2 & 0 \end{pmatrix}}$. The unnormalized eigenvectors of $B'$, which are all real and simpler to write down than the eigenvectors of $B$, can be presented as:
\begin{eqnarray}
\ket{B', -1} & = & \frac{1}{2} \left(3+\sqrt{5}\right)\ket{0}+\ket{1}+\ket{2}+\ket{3}+\ket{4} \\
\ket{B',e^{\frac{\pm 2 \pi  i}{3}}} & = & \frac{1}{4} \eta_\pm  (\ket{1}-\ket{4}) +\ket{2}-\ket{3} \\
\ket{B',-e^{\frac{\pm 2 \pi  i}{3}} } & = & \kappa_\pm \ket{0} -\kappa_\pm^2/4 (\ket{1}+\ket{4}) + \ket{2} + \ket{3}.
\end{eqnarray}
where $\eta_\pm=\left(\mp\sqrt{30-6 \sqrt{5}}+\sqrt{5}-3\right)$ and $\kappa_\pm=\frac{1}{2} \left(\pm\sqrt{6 \left(5+\sqrt{5}\right)}-\sqrt{5}-3\right)$. Of these states, $\ket{B',-e^{\frac{\pm 2 \pi  i}{3}} }$ are equivalent to each other by a Clifford transformation, and $\ket{B',e^{\frac{\pm 2 \pi  i}{3}}}$ are equivalent to each other by a Clifford transformation.

In Figure \ref{ququint-wigner}, we plot Wigner functions for the eigenvectors of $B$ not $B'$, because the symmetry of $B$ is easier to visualize in discrete phase space.

\subsubsection*{Eigenstates of $[[XV_{\hat{S}}]]$}
The reduced conjugacy class $[[XV_{\hat{S}}]]$ includes the conjugacy classes $[XV_{\hat{S}}]$, $[X^2V_{\hat{S}}]$, $[XV_{\hat{S}}^{-1}]$ and $[X^2V_{\hat{S}}^{-1}]$. Its eigenstates (which were first found in \cite{campbell2014enhanced, HowardVala}), are:
\begin{eqnarray}
\ket{XV_{\hat{S}}, 1} & = & \ket{0} + \ket{1} + \omega_5^3 \ket{2}+ \ket{3} + \omega_5^2 \ket{4} \\
\ket{XV_{\hat{S}},\omega_5^n} & = & (Z^\dagger)^{n}\ket{XV_{\hat{S}}, 1},~\text{ for }n=1,\ldots,4.
\end{eqnarray}
All of these eigenstates are related to each other by a Clifford transformation.

\subsection{Ququint Clifford Eigenstates}
In summary, we have nine inequivalent non-degenerate eigenstates (including $\ket{0}$), $3$ one-parameter families of degenerate states, and $1$ two-parameter family of degenerate states. These are shown in Figure \ref{ququintVenn}. Colored regions correspond to reduced conjugacy classes, and eigenstates of conjugacy classes are contained in their corresponding regions, as in Figure \ref{fig:venn}. Degenerate families of eigenstates are shown as lines.

The only intersections of $\ket{V_{\hat{S}}, \omega_5^2}$ and $\ket{H,1}$ are Clifford-equivalent to $\ket{0}$. The only intersections of $\ket{V_{\hat{S}}, \omega_5^2}$ and $\ket{V_{-\hat{\mathbb{I}}}, \omega_5^2}$ are Clifford-equivalent to $\ket{A_2}$. $\ket{H,1}$ and $\ket{V_{-\hat{\mathbb{I}}}}$ have no intersections.

We plot the discrete Wigner function of the eight non-degenerate non-stabilizer states in Figure \ref{ququint-wigner}. For each non-degenerate state, we list the reduced conjugacy classes it is an eigenvector of, the mana, and the size of its orbit under the Clifford group in Table \ref{ququint-table}.  We plot the mana for each of the one parameter families of degenerate eigenstates in Figure \ref{degenerate5}.
\begin{figure}
\centering
\includegraphics[width=.9\textwidth]{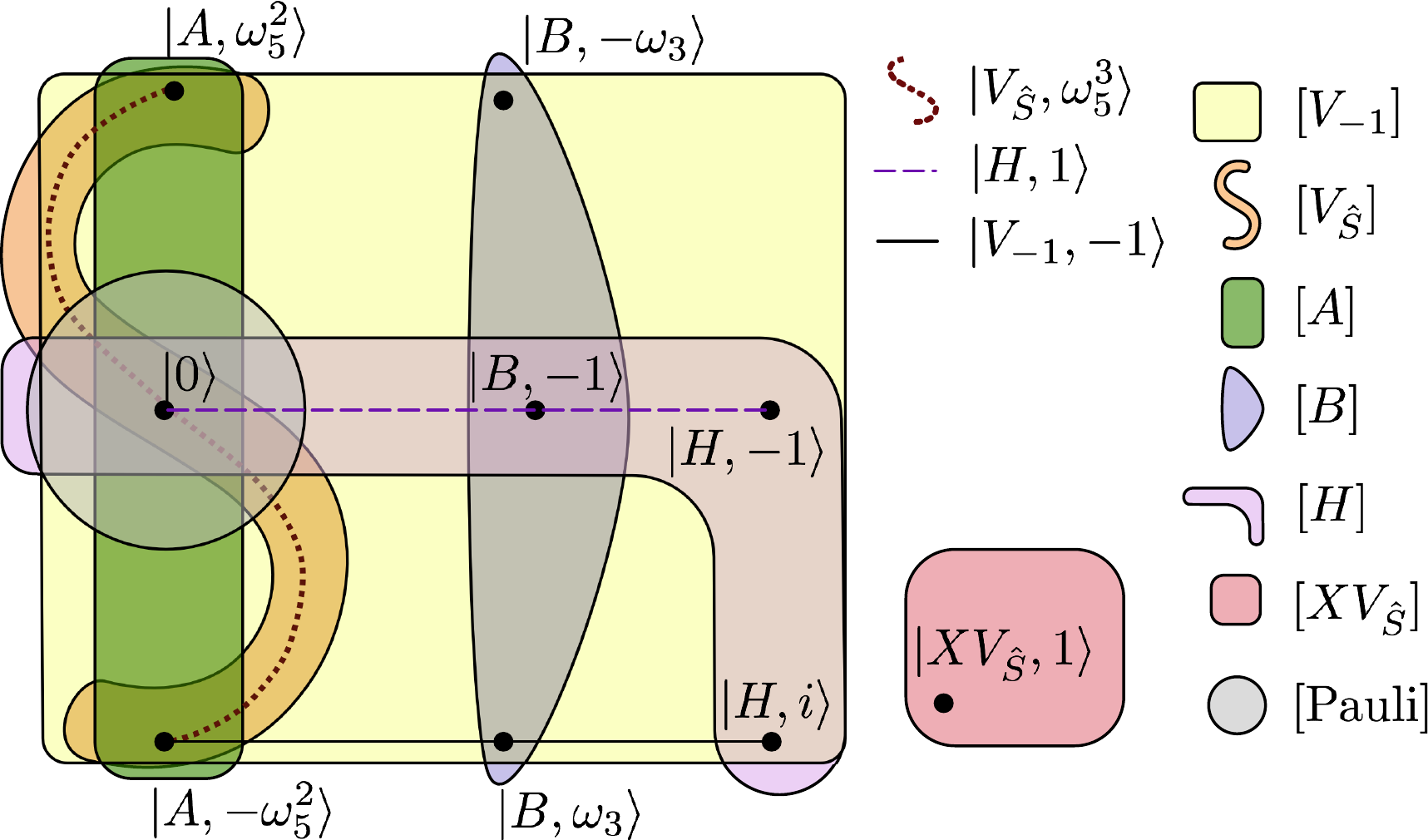}
\caption{This diagram illustrates all inequivalent ququint Clifford eigenstates, depicted as points, and their corresponding reduced conjugacy classes, depicted as colored regions. 1-parameter degenerate families of eigenstates are represented by curves. The 2-parameter degenerate family of states $\ket{V_{-\hat{\mathbb{I}}},1}$ is not pictured. \label{ququintVenn}}
\end{figure}

Based on a numerical search, we find that the maximal mana for any ququint state is
\begin{equation}
\mathcal M_5 = \sinh ^{-1}\left(3+\sqrt{5}\right)-\log (5),
\end{equation}
which is attained by the states Clifford equivalent to $\ket{B',-e^{\frac{2 \pi  i}{3}} }$. We are not sure if there are any other states which have the same mana. Therefore we conjecture that $\ket{B',-e^{\frac{\pm 2 \pi  i}{3}} }$ is the most magic ququint state. The most symmetric Clifford eigenstate, which we define as that eigenstate with the smallest orbit under the Clifford group is $\ket{B,-1}$. Unlike the qutrit case, the most magic state is not the most symmetric. Also, as can be seen from this Figure, there is no ququint Clifford eigenstate with exactly one negative entry in the Wigner function.

Magic state distillation routines for the state $\ket{XV_{\hat{S}},1}$ were constructed in \cite{CampbellAnwarBrowne, campbell2014enhanced}. To our knowledge, magic state distillation routines for the other ququint magic states have not yet been constructed.

\begin{figure}

\begin{subfigure}[t]{0.3\textwidth}
    \includegraphics[align=c, width=.9\textwidth]{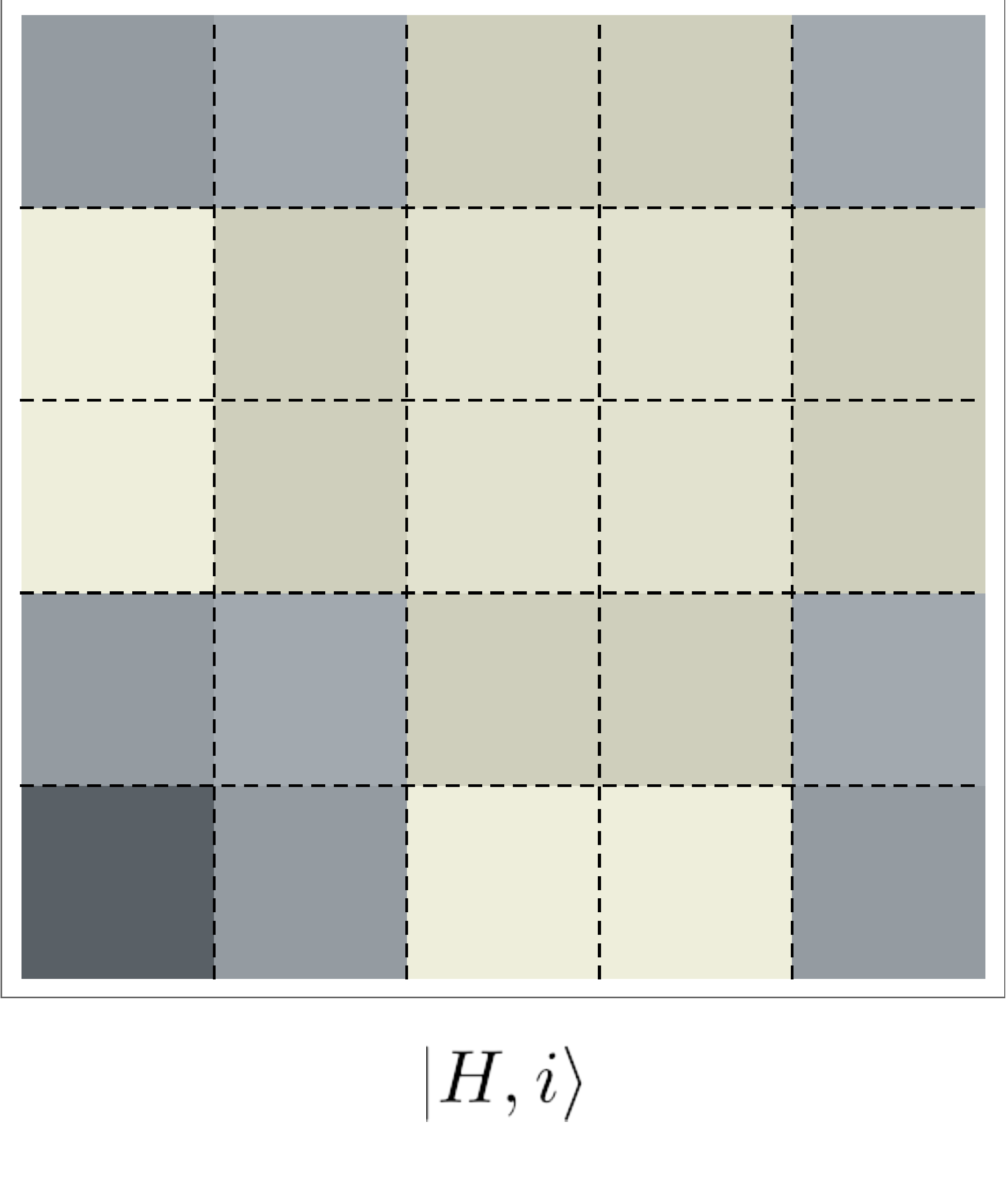}
\end{subfigure}
\begin{subfigure}[t]{0.3\textwidth}
    \includegraphics[align=c,width=.9\textwidth]{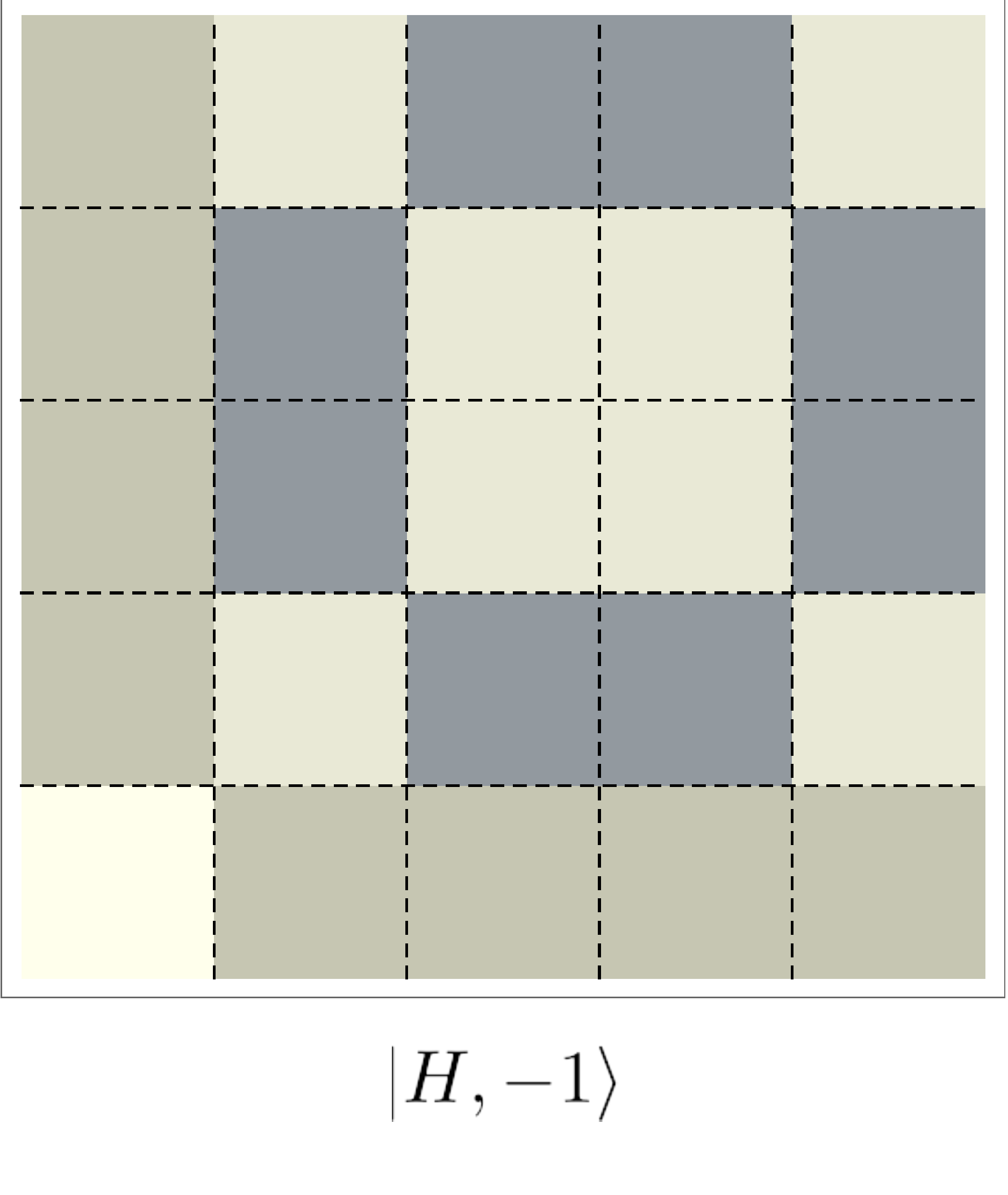}
\end{subfigure}
\begin{subfigure}[t]{0.3\textwidth}
    \includegraphics[align=c, width=0.9\textwidth]{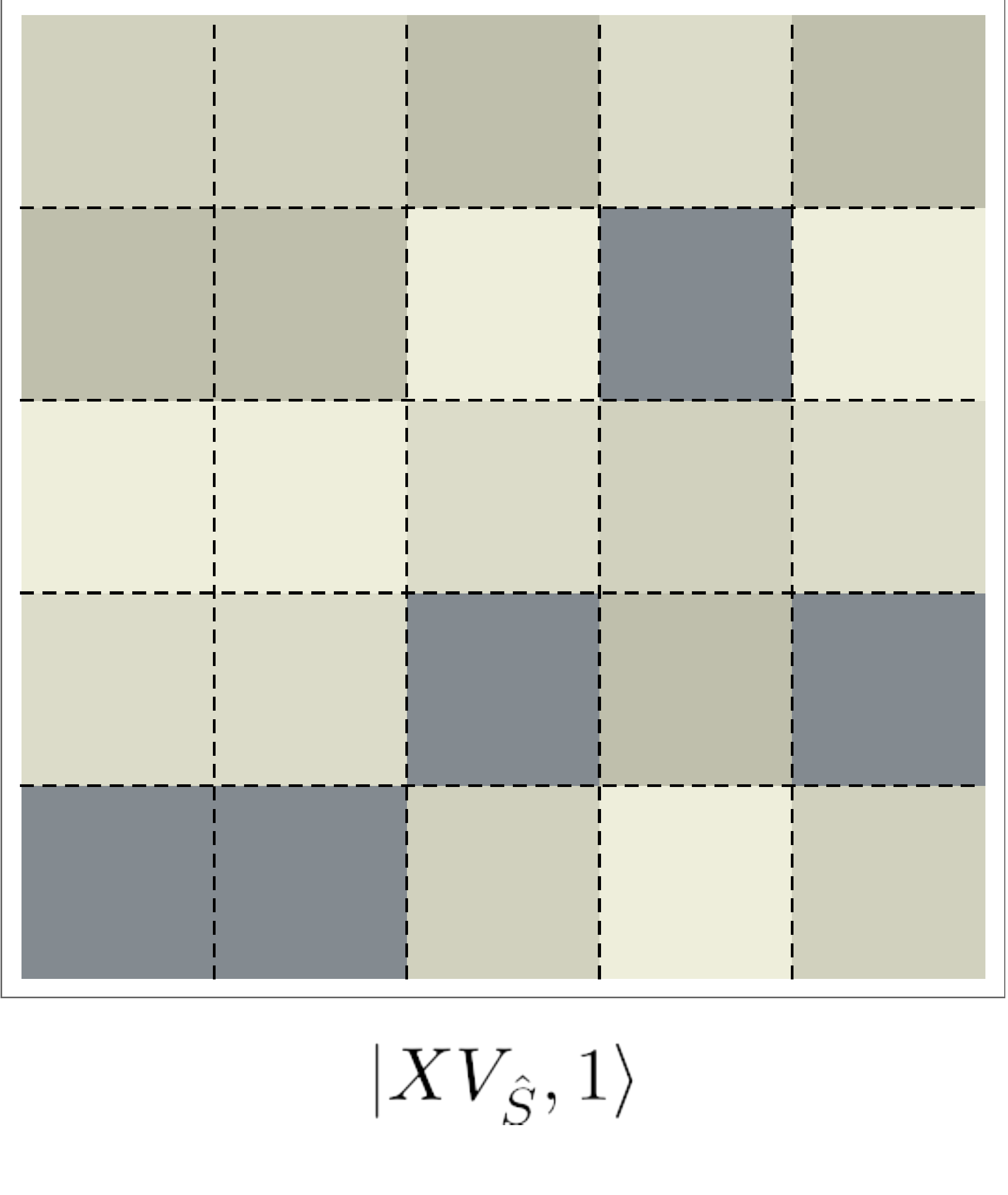}
\end{subfigure}

\begin{subfigure}[t]{0.3\textwidth}
    \includegraphics[align=c, width=0.9\textwidth]{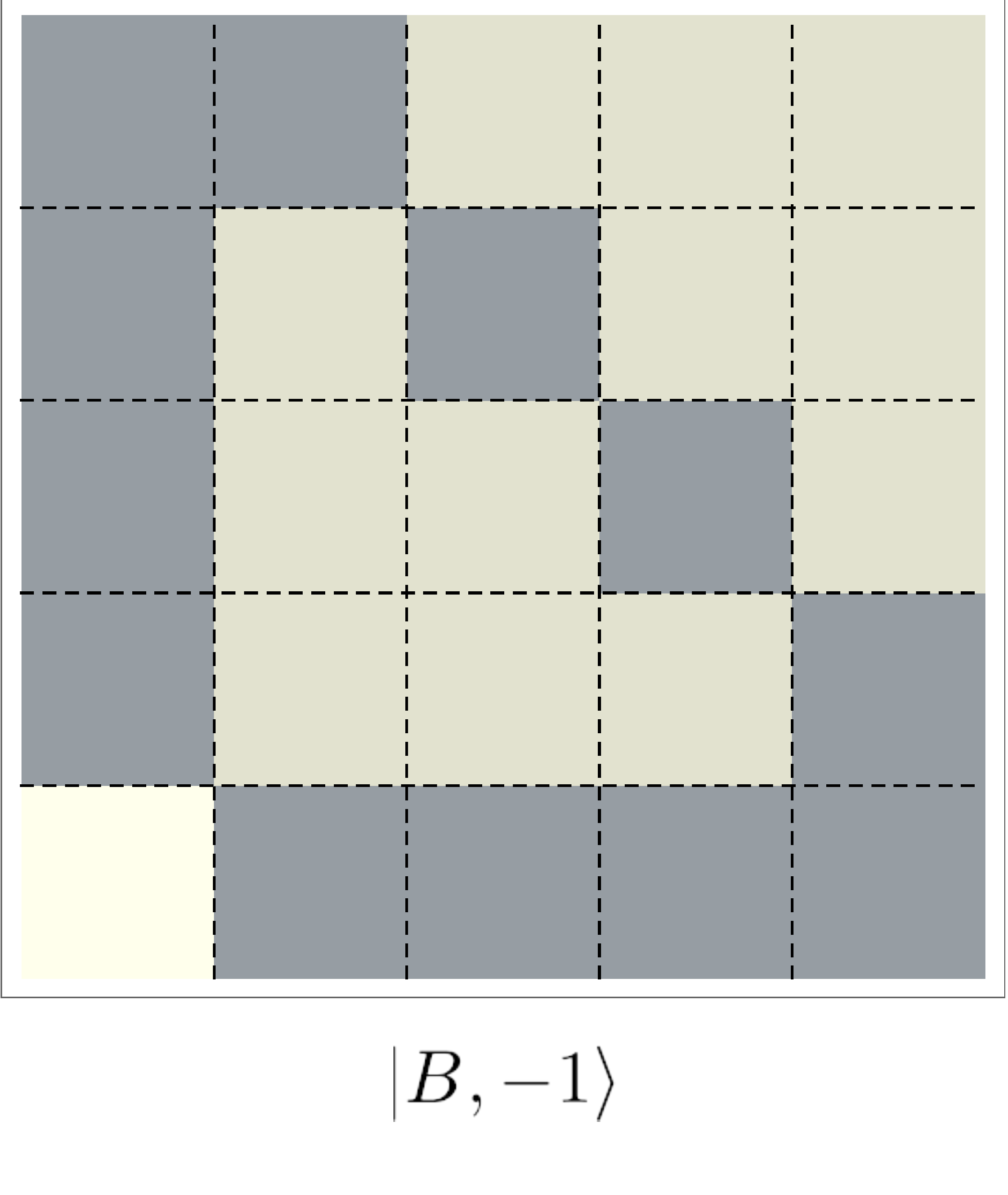}
\end{subfigure}
\begin{subfigure}[t]{0.3\textwidth}
    \includegraphics[align=c,width=0.9\textwidth]{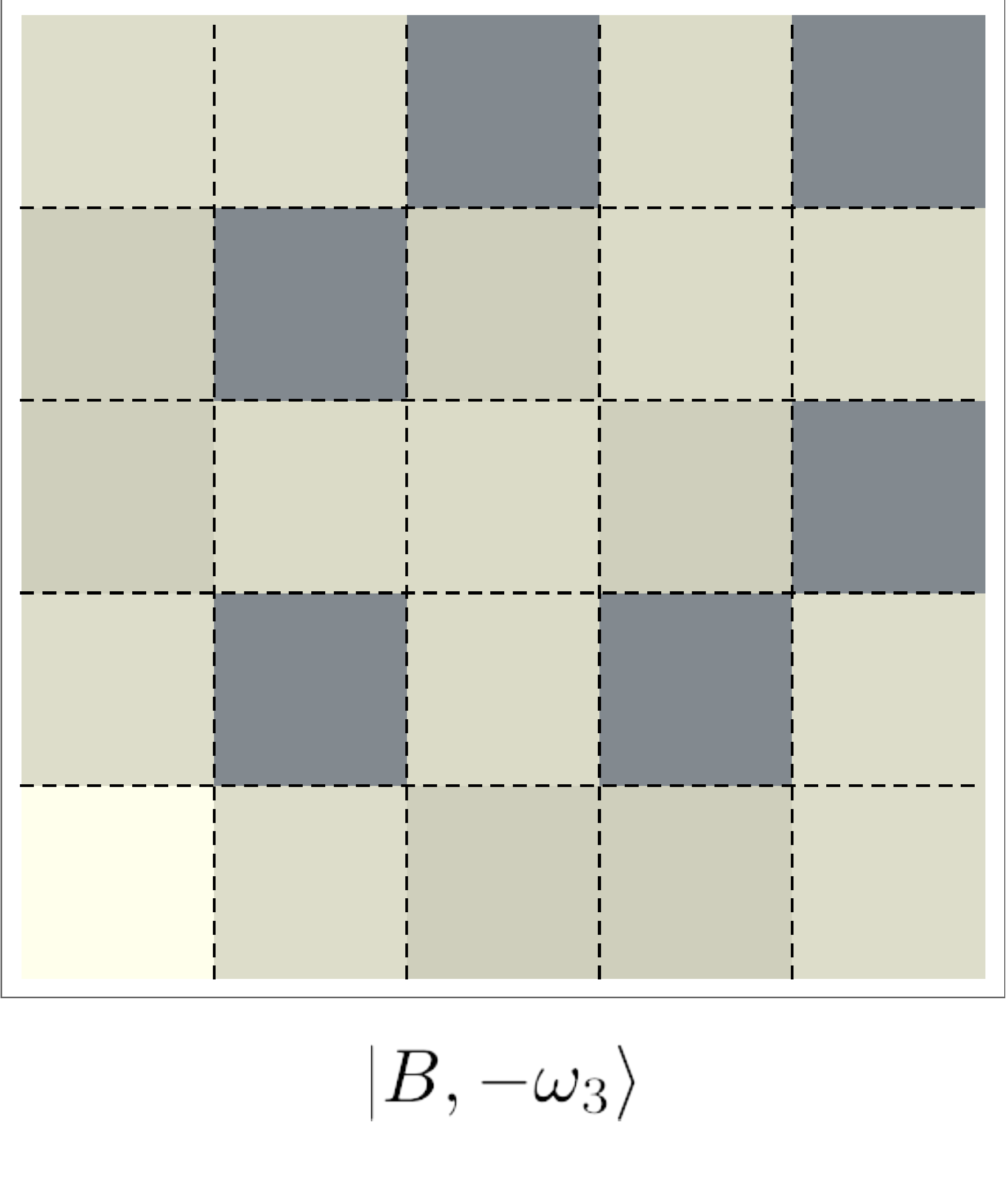}

\end{subfigure}
\begin{subfigure}[t]{0.3\textwidth}
    \includegraphics[align=c,width=0.9\textwidth]{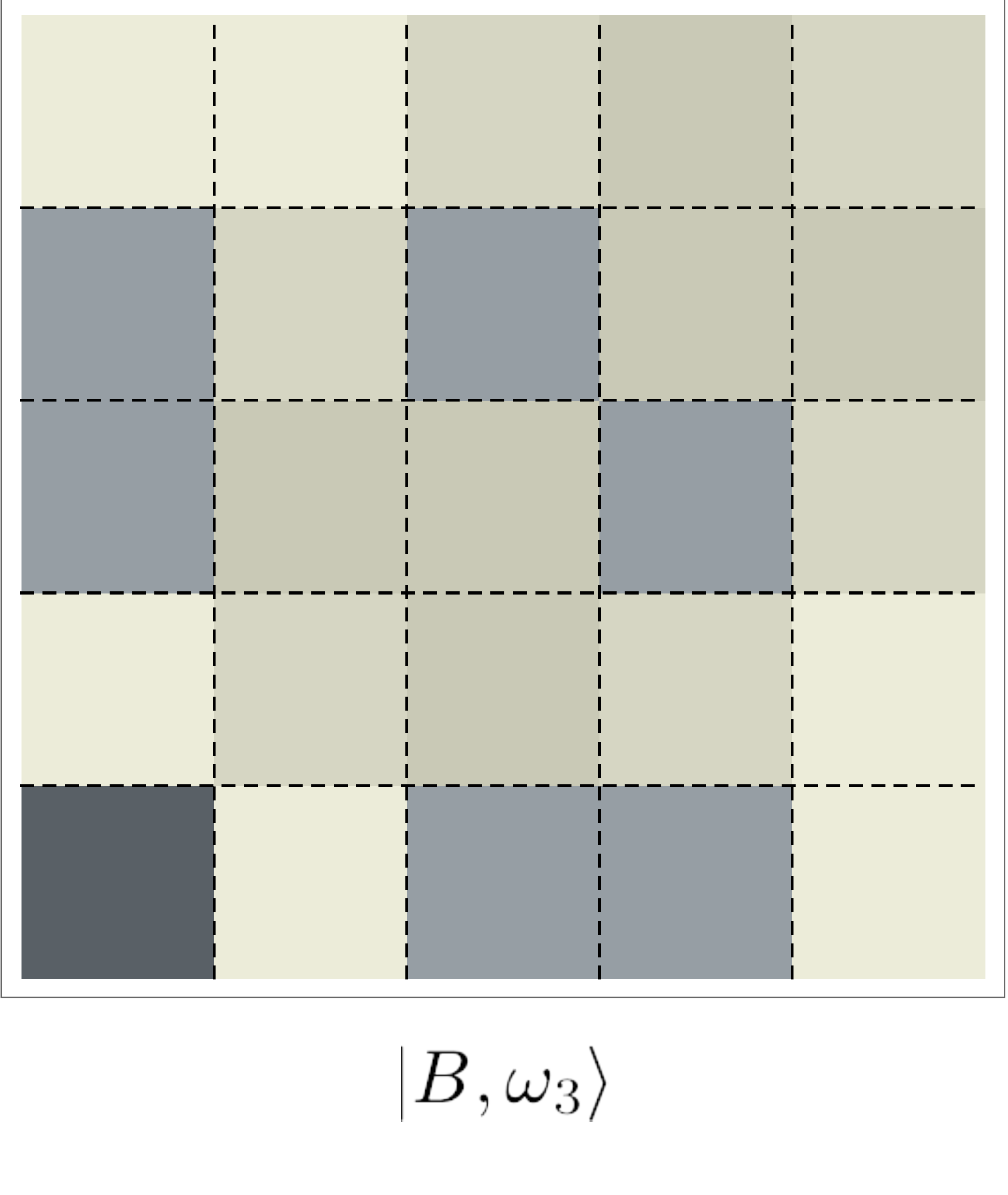}

\end{subfigure}
\includegraphics[align=c,width=0.047\textwidth]{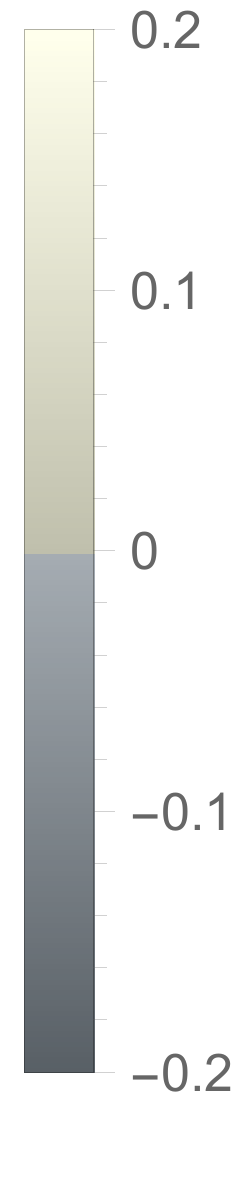}

\centering
\begin{subfigure}[t]{0.3\textwidth}
    \includegraphics[align=c,width=0.9\textwidth]{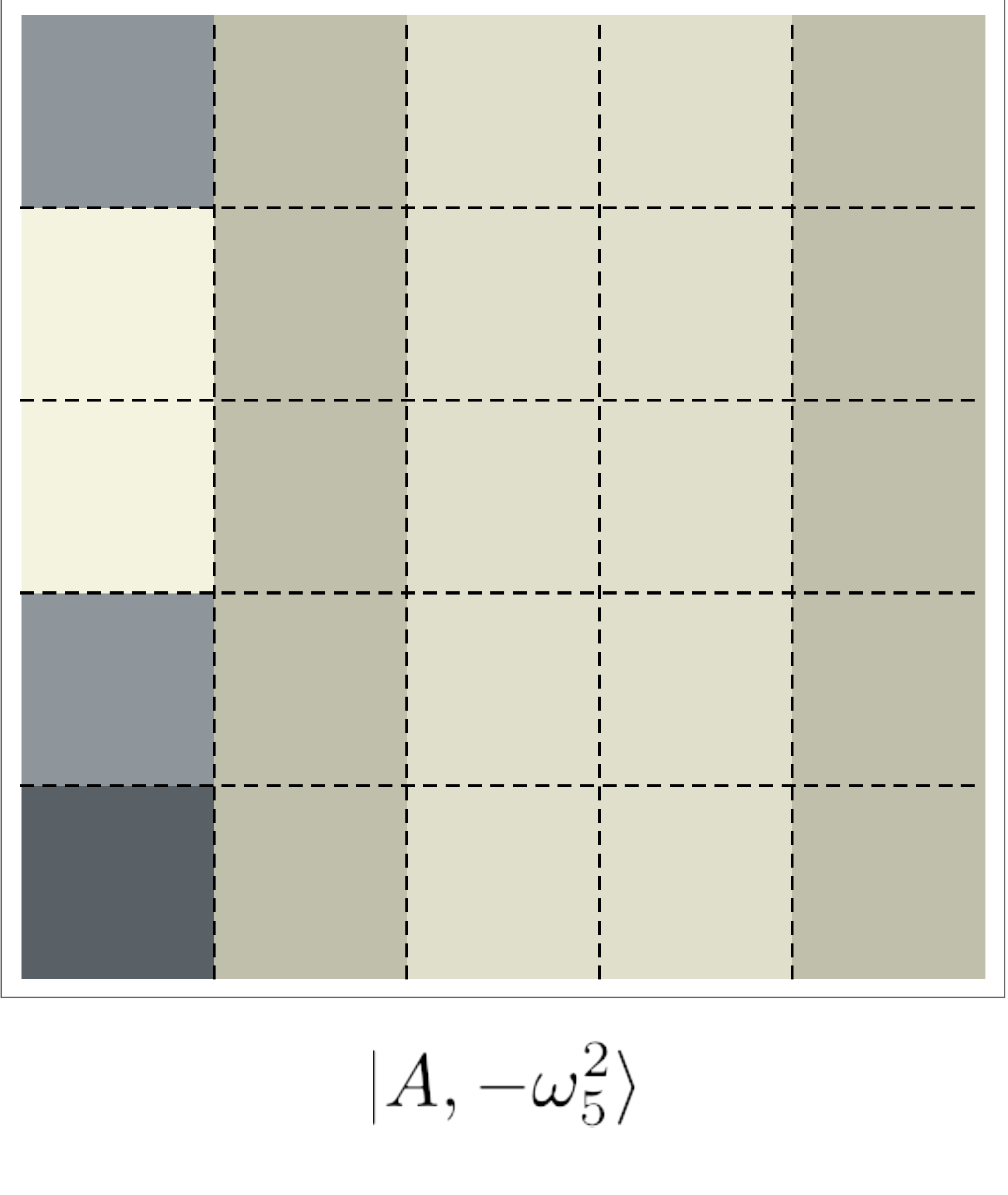}
 
\end{subfigure}
\begin{subfigure}[t]{0.3\textwidth}
    \includegraphics[align=c,width=0.9\textwidth]{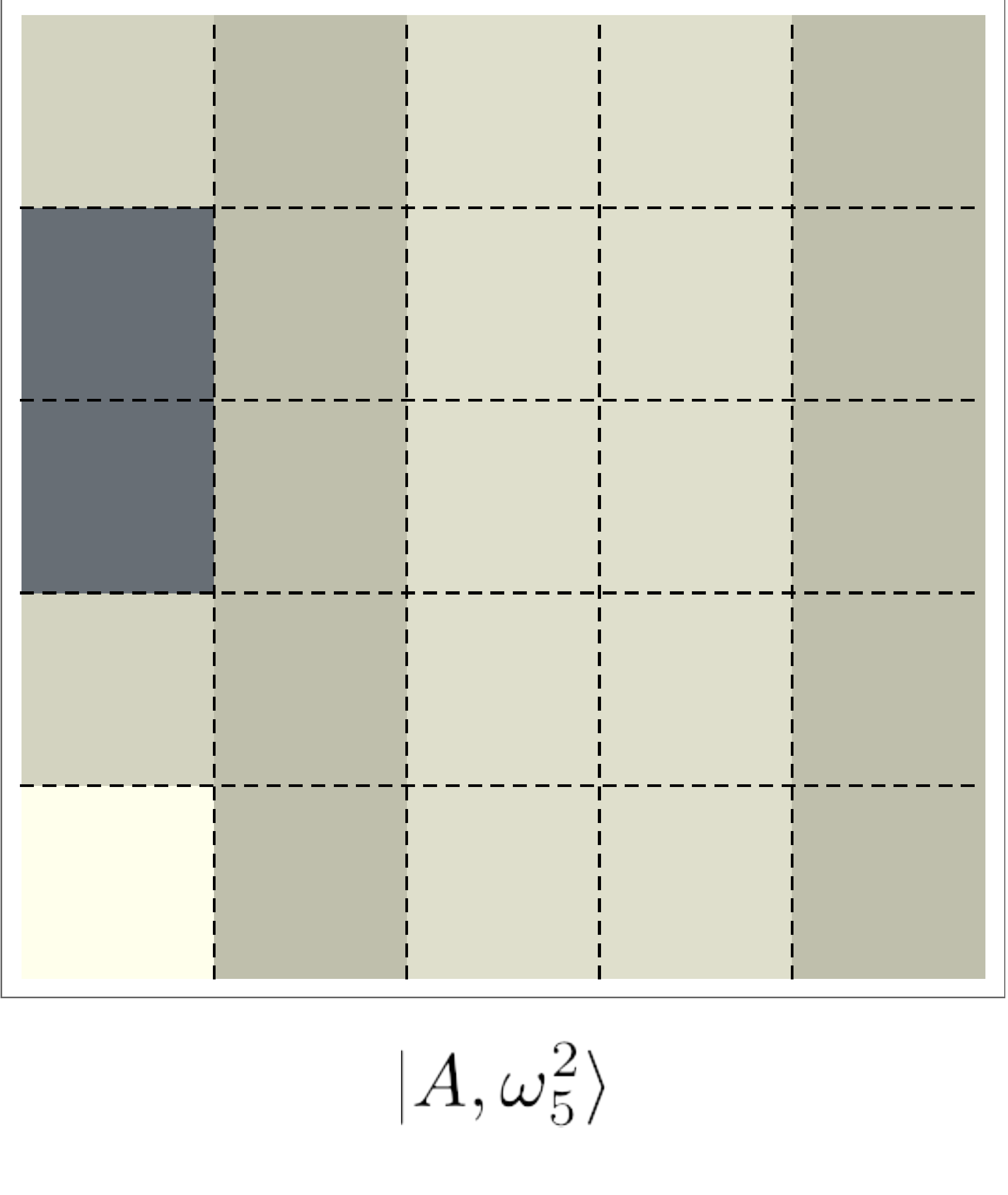}

\end{subfigure}

    \caption{\textbf{Discrete Wigner functions for each of the 8 non-degenerate ququint Clifford eigenstates}}
    \label{ququint-wigner}
\end{figure}

\begin{table}[h!]
  \begin{center}

    \begin{tabular}{|l|c|c c|c|c|}
    \hline
      \textbf{State} & \textbf{Eigenvector of } & \textbf{Mana} & & \textbf{Min$(W_\chi(\rho))$} &\textbf{$|$Orbit$|$} \\ [0.5ex]
       \hline \hline
      $\ket{H,i}$ & $H$, $V_{-\hat{\mathbb{I}}}$ & $\log \frac{1}{5} \left(2 \sqrt{5+2 \sqrt{5}}+3\right)$ &$\approx$ .605 & -.20 &$750$
      \\
      $\ket{H,-1}$ & $H$, $V_{-\hat{\mathbb{I}}}$ & $\log  \frac{9}{5} $ & $\approx$ .588  & -.05 & 375
  \\
  $\ket{B, -1}$ & $B$, $H$, $V_{-\hat{\mathbb{I}}}$ & $\log \left(\frac{1}{5}+\frac{4}{\sqrt{5}}\right)$ & $\approx$ .688 & -.04 & 250  \\
  $\ket{B,-e^{\frac{2 \pi  i}{3}} }$ & $B$, $V_{-\hat{\mathbb{I}}}$ & $\sinh ^{-1}\left(3+\sqrt{5}\right)-\log (5)$ & $\approx$ .748 & -.09 & 500 \\
  $\ket{B,e^{\frac{2 \pi  i}{3}}}$ & $B$, $V_{-\hat{\mathbb{I}}}$ & $\log \frac{\sqrt{15+6 \sqrt{5}}+4}{5}$ & $\approx$ .624 & -.20 & 500 \\
  $\ket{A,-\omega_5^2}$ & $A$, $V_{\hat{S}}$, $V_{-\hat{\mathbb{I}}}$ & $\log \left(\frac{6}{5}+\frac{1}{\sqrt{5}}\right)$ & $\approx$ .499 & -.20 & 300 \\
  $\ket{A,\omega_5^2}$ & $A$, $V_{\hat{S}}$, $V_{-\hat{\mathbb{I}}}$ & $\log \left(\frac{6}{5}+\frac{1}{\sqrt{5}}\right)$ & $\approx$ .499 & -.16 & 300 \\
  $\ket{XV_{\hat{S}},1}$ & $XV_{\hat{S}}$ & $\log \left(1+\frac{2}{\sqrt{5}}\right)$ & $\approx$ .634 & -.09 & 600 \\
  \hline
    \end{tabular}

   \caption{\textbf{List of Non-Degenerate Ququint Clifford Eigenstates}-- \label{ququint-table}  This table provides a list of non-degenerate ququint magic states. The first column presents the name of the state. The second column lists the operators that the state is an eigenvector of. The third column lists the mana of the state. The fourth column lists the most negative entry in the state's discrete Wigner function denoted as Min($W_\chi(\rho)$). The last column lists the number of single qudit Clifford-eigenstates that are Clifford equivalent to the given magic state, denoted as $|$Orbit$|$.}
  \end{center}
\end{table}

\begin{figure}
\centering
\begin{subfigure}[t]{0.49\textwidth}
    \includegraphics[align=c,width=.9\textwidth]{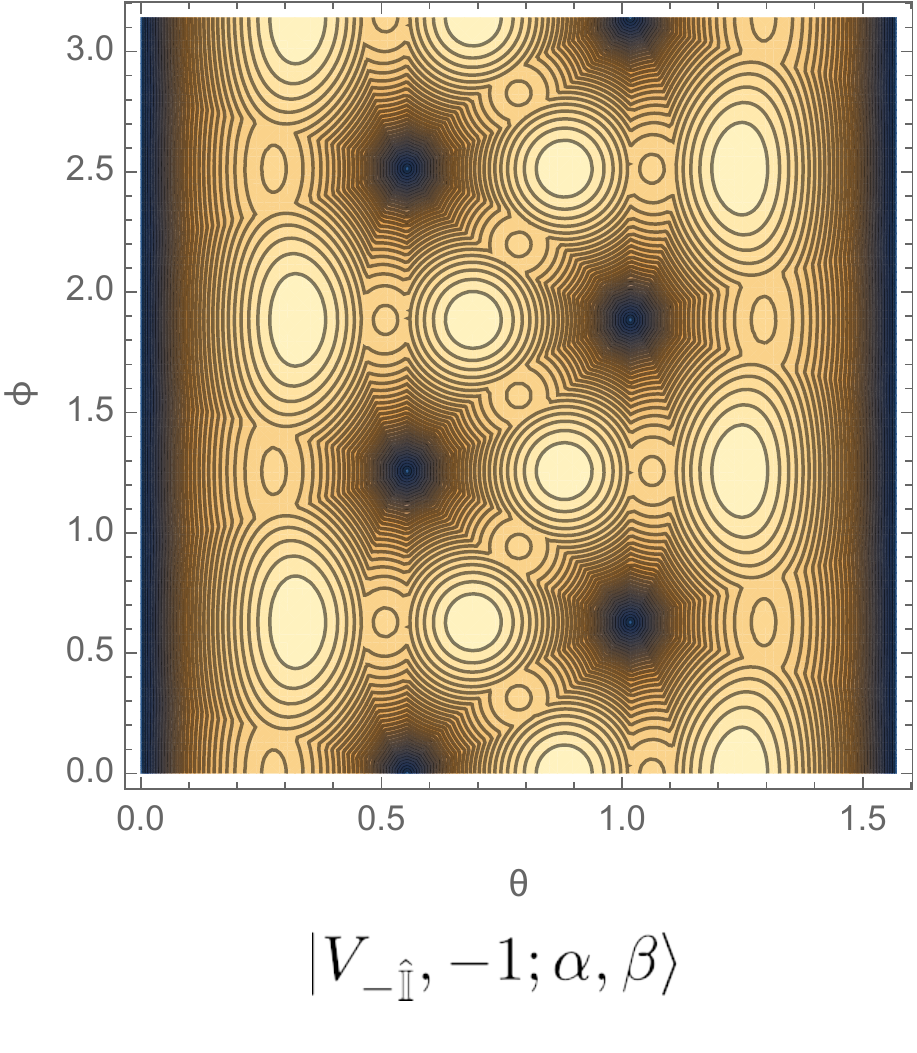}
\end{subfigure}
\begin{subfigure}[t]{0.49\textwidth}
    \includegraphics[align=c,width=.86\textwidth]{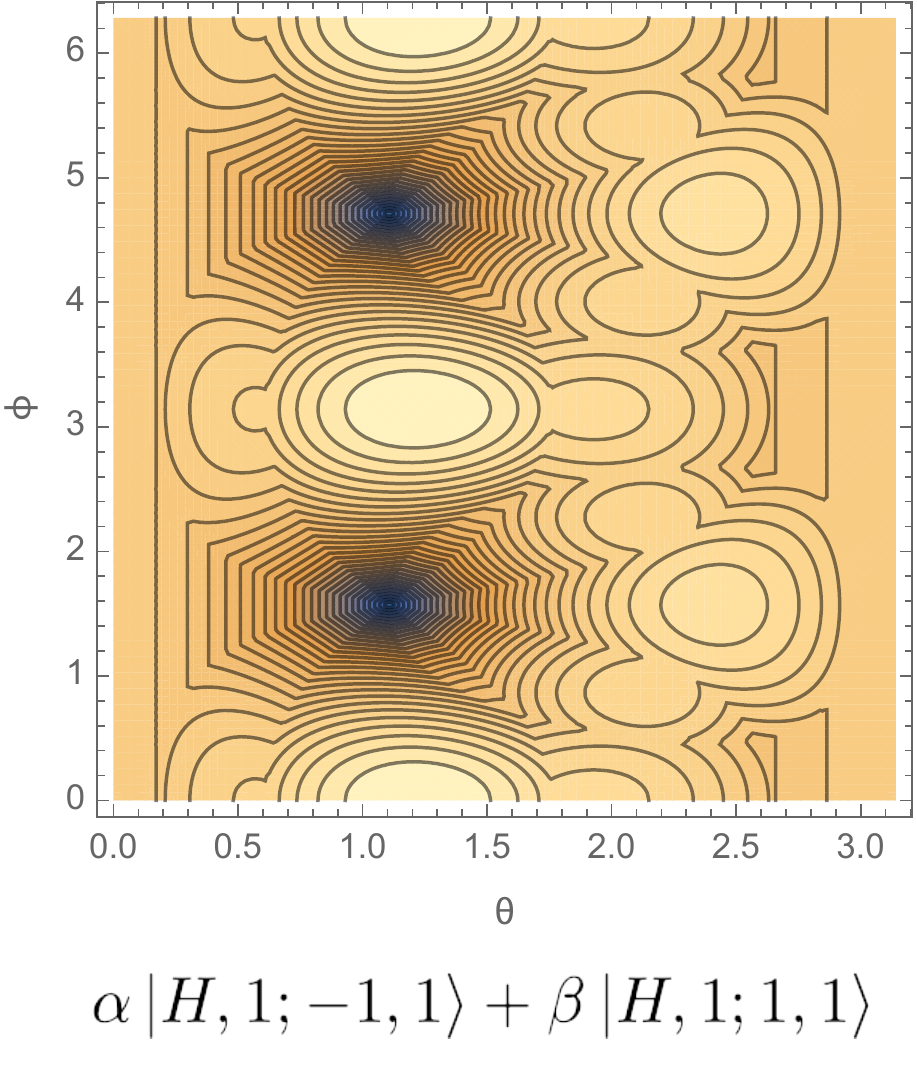}
\end{subfigure}  

\begin{subfigure}[t]{0.49\textwidth}
    \includegraphics[align=c,width=0.9\textwidth]{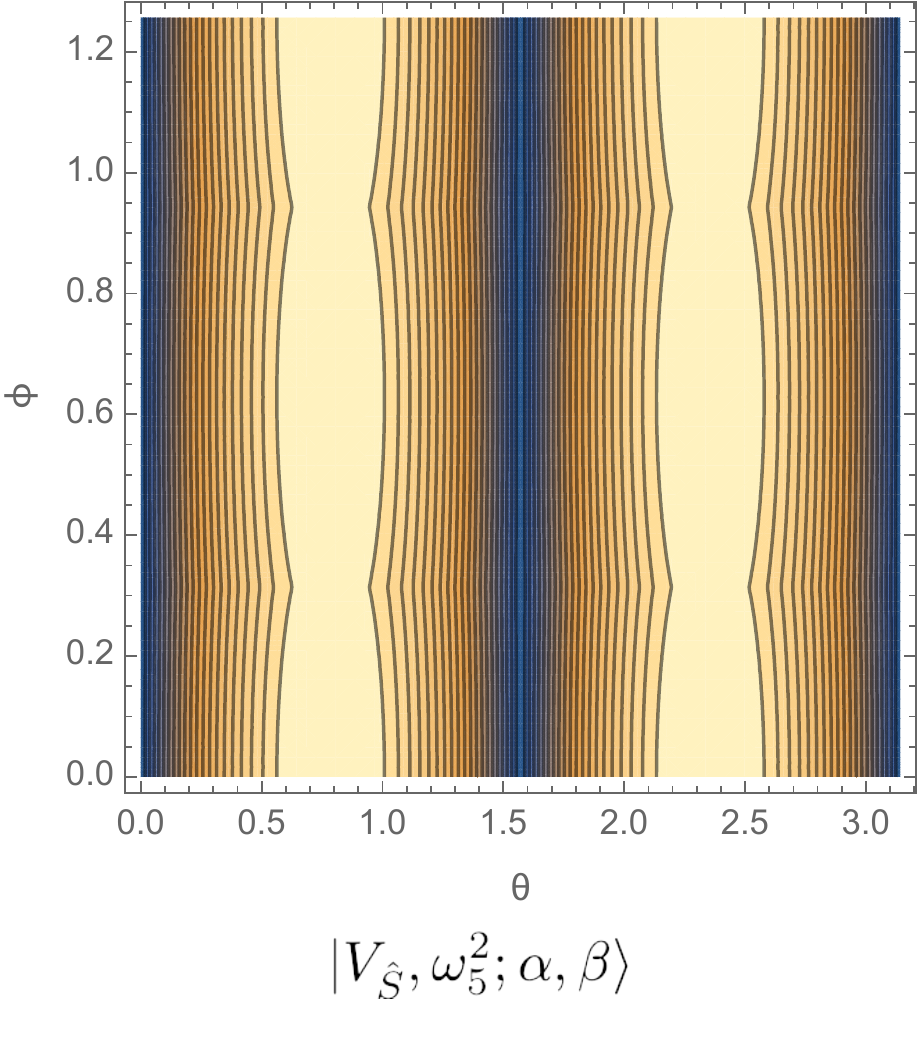}
\end{subfigure}
\includegraphics[align=c, width=0.055\textwidth]{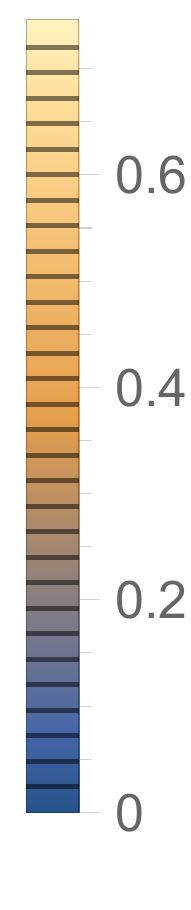}
\caption{The mana of each of the one-complex-parameter degenerate families of ququint Clifford eigenstates, with $\alpha=\cos \frac{\theta}{2}$ and $\beta=e^{i\phi} \sin \frac{\theta}{2}$. $\theta$ is the horizontal axis, and $\phi$ is the vertical axis. \label{degenerate5}}
\end{figure}

\section{Twirling Schemes}
\label{twirl-section}
A generic $p$-dimensional qudit density matrix is described by $p^2-1$ real parameters.  To avoid reduce the number of parameters describing noisy magic states, it is convenient to ``twirl" the undistilled resource state into a density matrix of a simpler form prior to distillation by randomly applying one or more Clifford operators that have the target magic state as an eigenvector, as described in \cite{MSD}.

For example, suppose $C$ is a single-qudit Clifford unitary such that $C^m=1$, with $p$ non-degenerate eigenvectors: $\ket{C_i}$ for $i=0, \ldots p-1$. Then twirling with respect to $C$ is a procedure that consists of randomly applying an element of $\langle C \rangle$, and results in the following:
\begin{equation}
    \rho \rightarrow \rho'=\frac{1}{m}\sum_{n=0}^{m-1} C^n\rho (C^\dagger)^n = \sum_{i=0}^{p-1} \alpha_i \ket{C_i}\bra{C_i}, \label{generic-twirl}
\end{equation}
where $\alpha_i = \bra{C_i} \rho  \ket{C_i}$. 

For qubits, via twirling, it was possible to restrict all forms of noise to depolarizing noise in \cite{MSD}. This is not usually true for qudits, where, instead, we generically expect twirling to reduce the number of parameters specifying the input qudit density matrices from $p^2-1$, to $p-1$, as demonstrated in equation \ref{generic-twirl}. If a Clifford unitary has degenerate eigenvectors, then, after twirling, we may be left with a density matrix whose description requires more than $p-1$ parameters. 

If we wish to distill a magic state which is an eigenstate of multiple Clifford operators, we may be able to reduce the number of parameters to fewer than $p-1$ by twirling multiple times. The diagram in Figure \ref{fig:venn} can be used to determine the inequivalent twirling schemes that may be applied before distilling any given magic state. 

The most general twirling procedure can be defined using any subgroup $\mathcal G$ of the Clifford group, and is the following:
\begin{equation}
    \rho \rightarrow \rho' = \frac{1}{|\mathcal G|}\sum_{G \in \mathcal G} G \rho G^\dagger.
\end{equation}
Define the \textit{stabilizing subgroup of the Clifford group} for state $\ket{M}$ as the set of elements of the Clifford group for which $\ket{M}$ is an eigenvector. (Here, we say $C$  ``stabilizes'' $\ket{M}$ when $C\ket{M}=\lambda\ket{M}$, for any $\lambda$.) The stabilizing subgroup of the Clifford group of $\ket{m}$ provides a natural scheme for twirling noisy $\ket{m}$ states.

\subsection{Qutrits}

We present the largest stabilizing subgroup of the Clifford group for each state in Table  \ref{qutrit-twirl-table}. We illustrate the twirling schemes for each state in detail below, which are also pictured in Figure \ref{qutrit-twirling-schemes}.

\begin{table}
\begin{center}
 \begin{tabular}{|l |c |c| c|}
 \hline
 \textbf{State} & \textbf{Generators} & \textbf{Order} & \textbf{Group} \\ [0.5ex]
 \hline\hline
 $\ket{S}$ & $\langle H,V_{\hat{S}} \rangle$ & 24 & $SL(2,\mathbb Z_3)$ \\
 \hline
 $\ket{H,1}$ & $\langle H\rangle$ & 4 & $C_4$  \\
 \hline
 $\ket{N_+}$ & $\langle N \rangle$ & 6 & $C_6$  \\
 \hline
 $\ket{XV_{\hat{S}}}$ & $\langle XV_{\hat{S}} \rangle$ & 3 & $C_3$ \\
 \hline
\end{tabular}
\end{center}
\caption{\textbf{The Stabilizing subgroup of the Clifford group for each qutrit magic state}. We list the largest subgroup of the Clifford group that stabilizes each magic state in the table above. The columns specify: the list of generators, order of the group and the name of the group. \label{qutrit-twirl-table}}
\end{table}

\begin{figure}[h]
    \centering
    \begin{subfigure}[t]{.3\textwidth}
    \includegraphics[width=.81\textwidth]{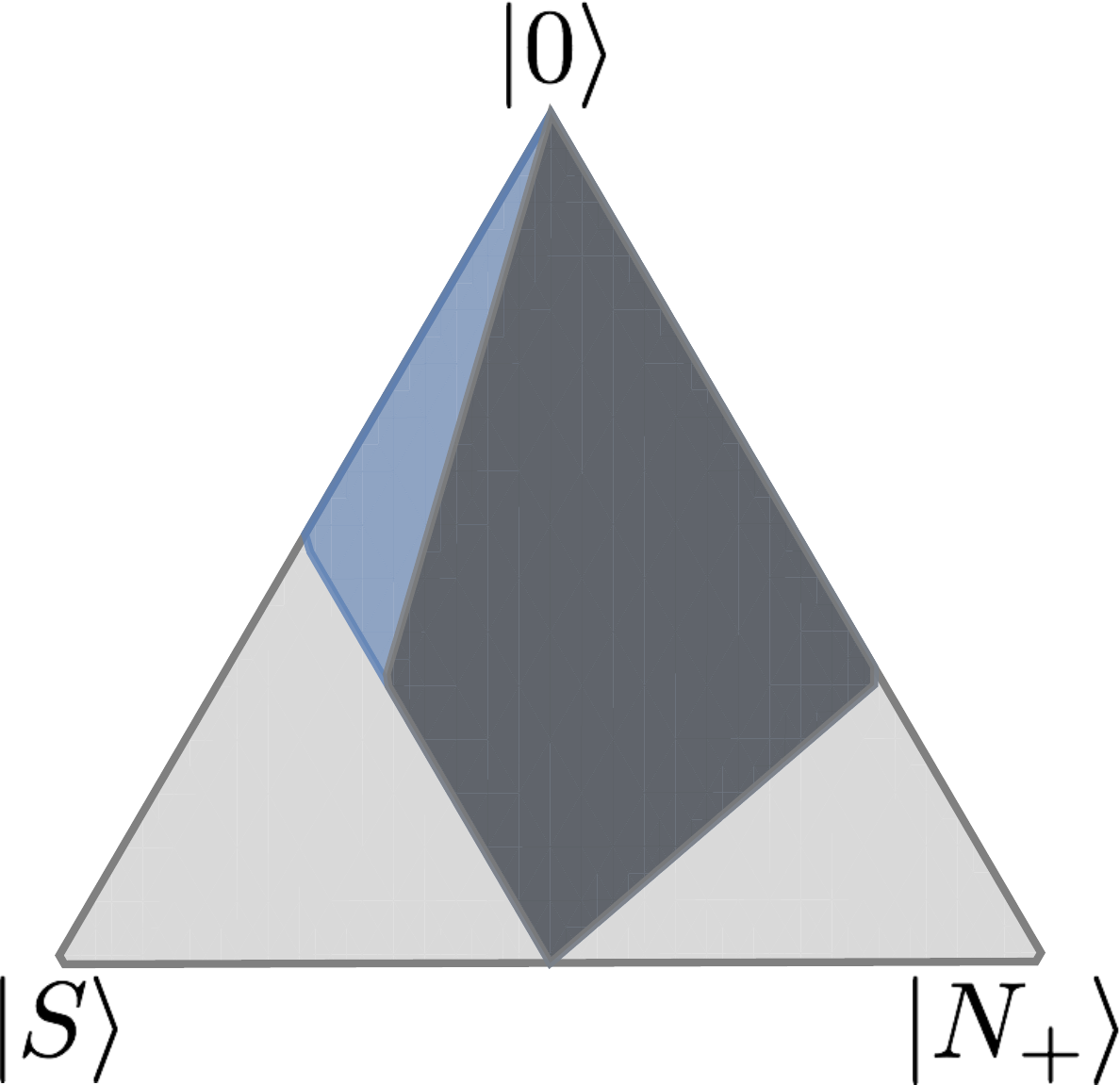}
    \end{subfigure}
    \begin{subfigure}[t]{.3\textwidth}
    \includegraphics[width=.9\textwidth]{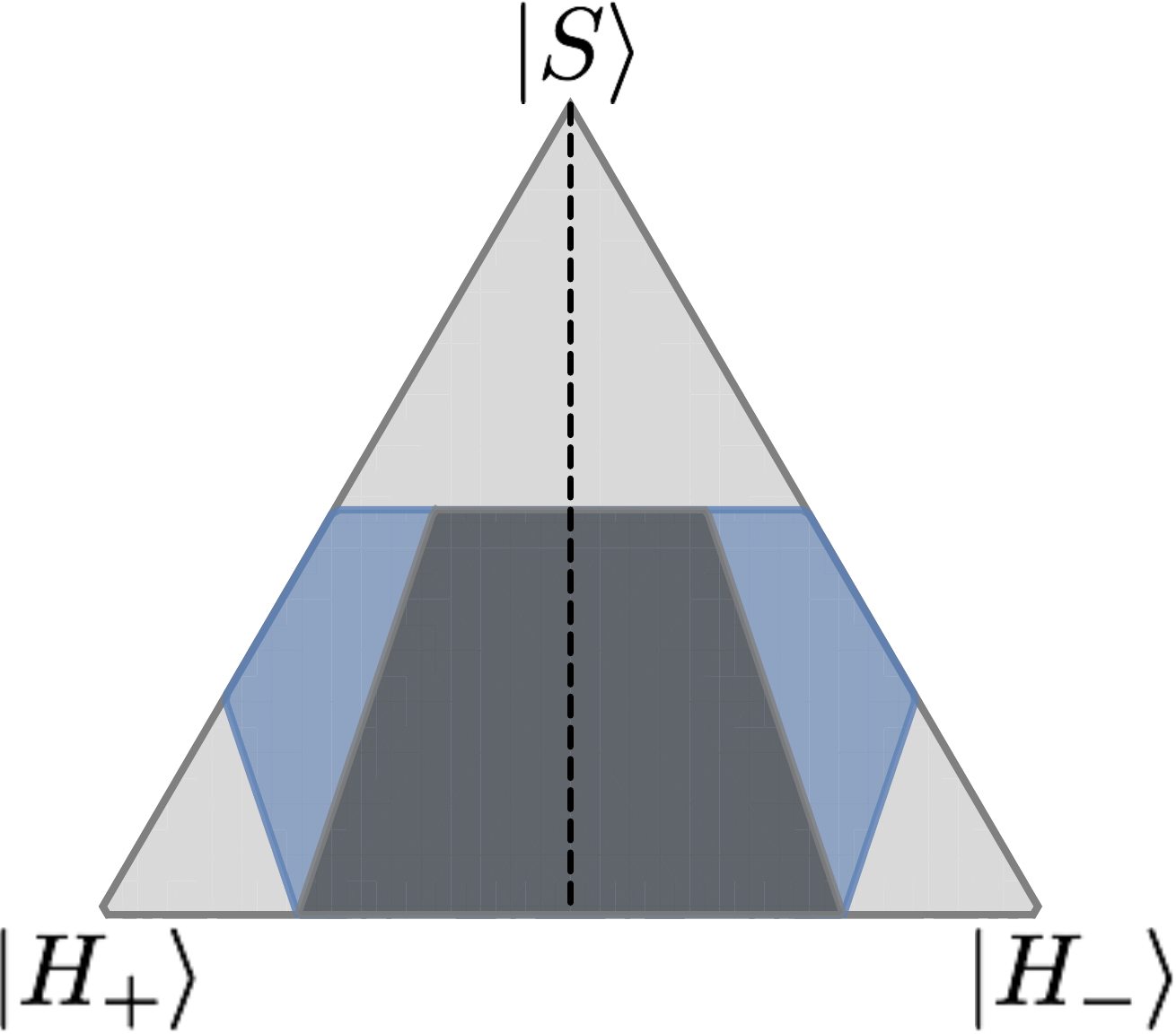}
    \end{subfigure}
    \begin{subfigure}[t]{.3\textwidth}
    \includegraphics[width=.94\textwidth]{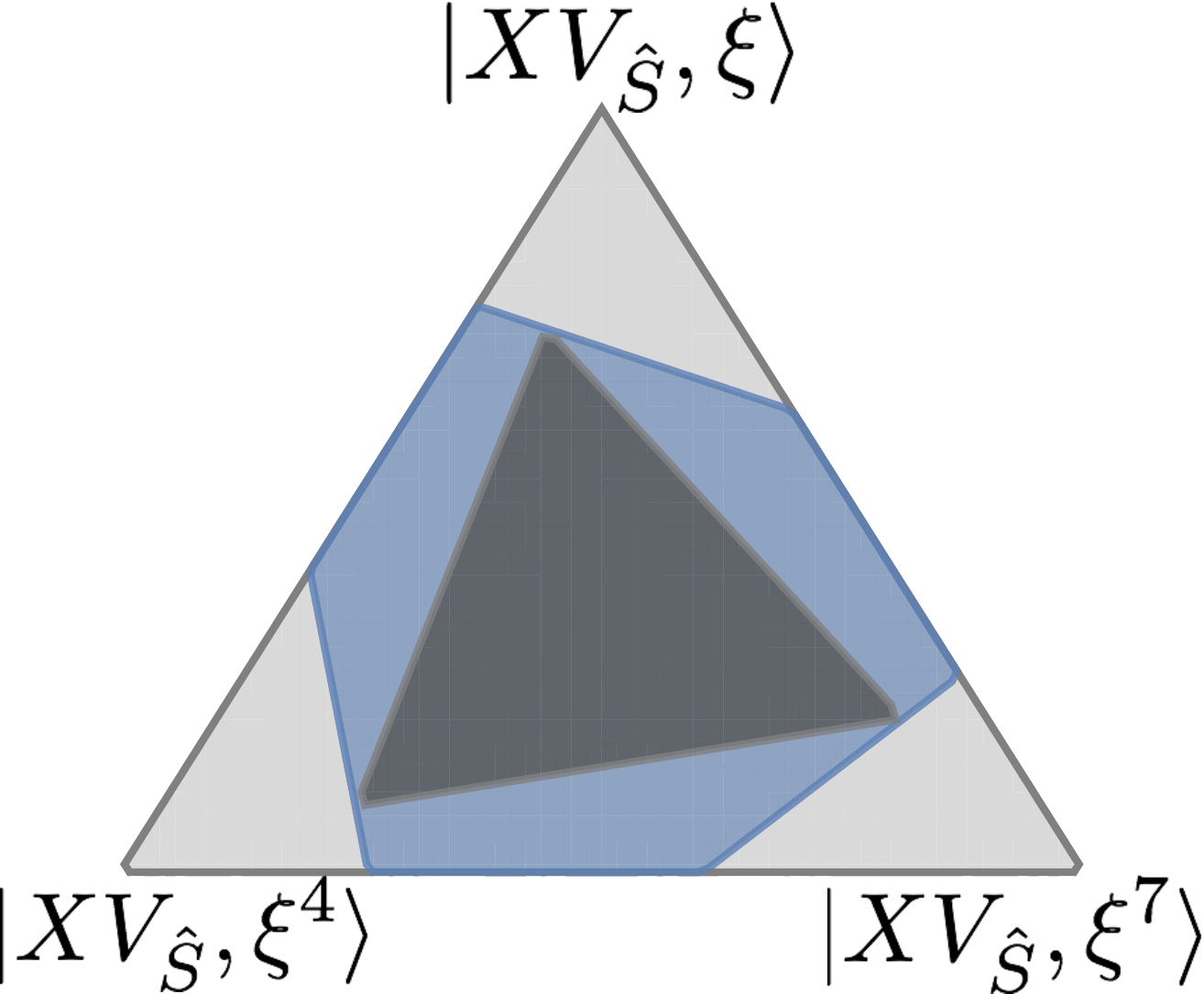}
    \end{subfigure}
    \caption{Noisy qutrit magic states can be twirled to lie in the planar slice of qutrit state space formed by the non-degenerate eigenvectors of each Clifford conjugacy class. Noisy strange states, can be further twirled to lie on the dashed line. The Wigner polytope is pictures as a blue region, which contains the Stabilizer polytope as a dark gray region. The centre of each equilateral triangle is the maximally mixed state.}
    \label{qutrit-twirling-schemes} \label{fig:XS-plane}
    \label{fig:H-plane}
    \label{fig:B-plane}
\end{figure}

\subsubsection{Twirling Schemes for Eigenstates of $H$}
To distill eigenstates of $H$, one can randomly apply the Clifford operator $H$ which will restrict our input state to the plane defined by $\ket{S}\bra{S}$, $\ket{H,1}\bra{H,1}$, and $\ket{H,{-1}}\bra{H,-1}$ given in Figure \ref{fig:H-plane}.
\begin{equation}
    \rho_H(\epsilon_1,\epsilon_2)=(1-\epsilon_1-\epsilon_2) \ket{S}\bra{S} + \epsilon_1 \ket{H,1}\bra{H,1}+ \epsilon_2\ket{H,{-1}}\bra{H,{-1}} . \label{2dH}
\end{equation}
 If we want to distill the $\ket{H,1}$ magic state, there is no further twirling possible.

If we wish to distill $\ket{S}$ magic states, we can further twirl to restrict density matrices to lie within a one-parameter family. To do this, we note that the states $\ket{H,1}$ and $\ket{H,-1}$ are interchanged by the symplectic rotation, $V_{-\hat{H}'}$, corresponding to the $SL(2,\mathbb Z_3)$ transformation $-\hat{H}'=\begin{pmatrix} 1 & 1 \\ 1 & 2 \end{pmatrix}$. Because it is a symplectic rotation, $V_{-\hat{H}'}$ preserves $\ket{S}$. We can further twirl by choosing randomly whether or not to apply this Clifford operator.

Input states are then restricted to the line which joins $\ket{S}\bra{S}$ to the maximally mixed state, shown as a dashed line in Figure \ref{fig:H-plane}, and can be parameterized as follows:
\begin{equation}
\begin{split}
    \rho_S(\epsilon/2,\epsilon/2) &=(1-\epsilon) \ket{S}\bra{S} + \epsilon \frac{\left( \ket{H,1}\bra{H,1}+\ket{H,{-1}}\bra{H,{-1}} \right)}{2}. \\
    & = (1-\delta) \ket{S}\bra{S} + \delta \frac{\mathbf{1}}{3}\label{1d},
    \end{split}
\end{equation}
where $\delta=\frac{3}{2} \epsilon$ is the depolarizing noise rate.

The final result, equation \ref{1d}, can also be obtained by simply applying a random symplectic rotation $V_{\hat{F}}$:
\begin{equation}
    \rho \rightarrow \rho' = \sum_{\hat{F} \in SL(2,\mathbb Z_3)} V_{\hat{F}} \rho V_{\hat{F}}^\dagger.
\end{equation} 

For all other magic states, it appears that twirling can only restrict the input state to lie in a plane. So the scheme given above is unique to $\ket{S}$.

\subsubsection{Twirling Schemes for Eigenstates of $N$}
An alternative twirling scheme is to randomly apply the $N$ operator, so that states are restricted to the plane defined by its eigenvectors given in Figure \ref{fig:B-plane}. This can be used to distill $\ket{S}$ or $\ket{N_+}$ states. A density matrix in this plane can be expressed as:
\begin{equation}
    \rho_N(\epsilon_1, \epsilon_2) = (1-\epsilon_1-\epsilon_2) \ket{N_+}\bra{N_+} + \epsilon_1 \ket{0}\bra{0}+\epsilon_2 \ket{S}\bra{S}.
\end{equation}

For stabilizer codes used with this twirling scheme, $N$ should be a transversal operator.

\subsubsection{Twirling Schemes for Eigenstates of $XV_{\hat{S}}$}
A third twirling scheme is to randomly apply the $XV_{\hat{S}}$ operator, so that states are restricted to the plane defined by its eigenvectors given in Figure \ref{fig:XS-plane}, which is useful for distilling $\ket{XV_{\hat{S}} }$ states. A density matrix in this plane can be expressed as,
\begin{equation}
    \rho_{XV_{\hat{S}}}(\epsilon_1, \epsilon_2) = (1-\epsilon_1 - \epsilon_2) \ket{XV_{\hat{S}} }\bra{XV_{\hat{S}} } + \epsilon_1 \ket{XV_{\hat{S}}' }\bra{XV_{\hat{S}}' }
    + \epsilon_2 \ket{XV_{\hat{S}}'' }\bra{XV_{\hat{S}}'' }.
\end{equation}
where $\ket{XV_{\hat{S}}' }$ and $\ket{XV_{\hat{S}}'' }$ are the other eigenstates of $XV_{\hat{S}}$.

For stabilizer codes used with this twirling scheme, $XV_{\hat{S}}$ should be a transversal operator.

\subsubsection{Twirling Schemes for Degenerate Families of States}
For the degenerate families of states $\ket{V_{\hat{S}},\omega_3^2}$, we can apply $V_{\hat{S}}$ a random number of times. The resulting space of density matrices will be 4-dimensional: 3-real parameters for the ``Bloch sphere'' of degenerate $\ket{V_{\hat{S}},\omega_3^2}$ states, and additional parameter for the state $\ket{0}$. Similar comments apply for distilling, the states $\ket{V_{-\hat{\mathbb{I}}},1}$.

Explicitly, after randomly applying $V_{\hat{S}}$, any density matrix can be put in the form,
\begin{equation}
\rho(x,y,z,\epsilon) = (1-\epsilon)\frac{1}{2}\left( \ket{1}\bra{1}+\ket{2}\bra{2} + x \Sigma_1 + y \Sigma_2 + z \Sigma_3 \right)+\epsilon \ket{0}\bra{0}
\end{equation}
where
\begin{equation}
\Sigma_1 = \ket{1}\bra{2}+\ket{2}\bra{1}, ~ \Sigma_2 = -i \ket{1}\bra{2}+i\ket{2}\bra{1}, ~ \Sigma_3 = \ket{1}\bra{1} - \ket{2}\bra{2}.
\end{equation}

After randomly applying $V_{-\hat{\mathbb{I}}}$, any density matrix can be put in the form,
\begin{equation}
\tilde{\rho}(x,y,z,\epsilon) = (1-\epsilon)\frac{1}{2}\left( \ket{0}\bra{0}+\ket{N_+}\bra{N_+} + x \tilde{\Sigma}_1 + y \tilde{\Sigma}_2 + z \tilde{\Sigma}_3 \right)+\epsilon \ket{S}\bra{S}
\end{equation}
where
\begin{equation}
\tilde{\Sigma}_1 = \ket{0}\bra{N_+}+\ket{N_+}\bra{0}, ~ \tilde{\Sigma}_2 = -i \ket{0}\bra{N_+}+i\ket{N_+}\bra{0}, ~ \tilde{\Sigma}_3 = \ket{0}\bra{0} - \ket{N_+}\bra{N_+}.
\end{equation}

\subsection{Ququints}
We present the largest stabilizing subgroup of the Clifford group for each ququint magic state in Table  \ref{ququint-twirl-table}. These translate into twirling schemes in a straightforward way. Most of the schemes result in spaces with $4$ or more parameters. However, the two cases of $\ket{H,-1}$ and $\ket{B',-1}$, give rise to smaller spaces after twirling, as we discuss below.

\begin{table}
\begin{center}
 \begin{tabular}{|l |c |c| c|}
 \hline
 \textbf{State} & \textbf{Generators} & \textbf{Order} & \textbf{Group} \\ [0.5ex]
 \hline\hline
 $\ket{H_i}$ & $\langle H \rangle$ & 4 & $C_4$ \\
 \hline
 $\ket{H,{-1}}$ & $\langle H,H' \rangle$ & 8 & Quaternion \\
 \hline
 $\ket{{B,-1}}$ & $\langle B,H' \rangle$ & 12 & Dicyclic$_{3}$ \\
 \hline
 $\ket{B,-{\omega_3^2}}$ & $\langle B \rangle$ & 6 & $C_6$ \\
 \hline
 $\ket{B,+{\omega_3^2}}$ & $\langle B \rangle$ & 6 & $C_6$ \\
 \hline
 $\ket{A,-\omega_5^2} $ & $\langle A \rangle$ & 10 & $C_{10}$ \\
 \hline
 $\ket{A,\omega_5^2}$ & $\langle A \rangle$ & 10 & $C_{10}$ \\
 \hline
 $\ket{C}$ & $\langle XV_{\hat{S}} \rangle$ & 5 & $C_{5}$ \\
 \hline
\end{tabular}
\end{center}
\caption{\textbf{The stabilizing subgroup of the Clifford group for each non-degenerate ququint magic state}. We list the largest subgroup of the Clifford group that stabilizes each ququint magic state in the table above. The columns specify: the list of generators, order of the group and the name of the group. \label{ququint-twirl-table}}
\end{table}

\subsubsection{Twirling Scheme for the State $\ket{H,-1}$}
We first randomly apply $H$ to restrict ourselves to mixtures of $\ket{H,\pm i}$, $\ket{H,-1}$ and $\ket{H,1; \alpha, \beta}$. This is a $6$-parameter space.

The operator
\begin{equation}
H' = V_{\begin{pmatrix} 0 & 2 \\ 2 & 0 \end{pmatrix}},
\end{equation}
is in the same conjugacy class as $[[H]]$. Together $H$ and $H'$ generate a non-abelian group of $8$ elements, isomorphic to the quaterion group:
\begin{equation}
\text{Quaternion} = \left\langle H, ~H' ~|~ H^4=1,~{H'}^2=H^2,~HH'H=H' \right\rangle.
\end{equation}

The state $\ket{H,-1}$ is also an eigenstate of $H'$ with eigenvalue $-1$. $H'$ acts on the other eigenstates of $H$ as follows:
\begin{equation}
H'\ket{H, +i}=\ket{H, -i}, ~  H'\ket{H,1;1,1} = \ket{H,1;1,1}, ~ H'\ket{H,1;1,-1} = -\ket{H,1;1,-1}.
\end{equation}

Thus by randomly applying $H$ then $H'$, we obtain the three-parameter family of density matrices,
\begin{dmath}
\rho = (1-\epsilon_1-\epsilon_2-\epsilon_3)\ket{H,-1}\bra{H,-1} + \epsilon_1 \ket{H,1;1,1}\bra{H,1;1,1}+ \epsilon_2 \ket{H,1;1,-1}\bra{H,1;1,-1}+\epsilon_3 \rho_i , \label{hm5}
\end{dmath}
where,
\begin{dmath}
\rho_i = \frac{1}{2}\left( \ket{H, +i}\bra{H, +i} + \ket{H,-i}\bra{H,i} \right).
\end{dmath}
This region is shown in Figure \ref{hm1Wigner5}.
\begin{figure}
\centering
    \includegraphics[width=.5\textwidth]{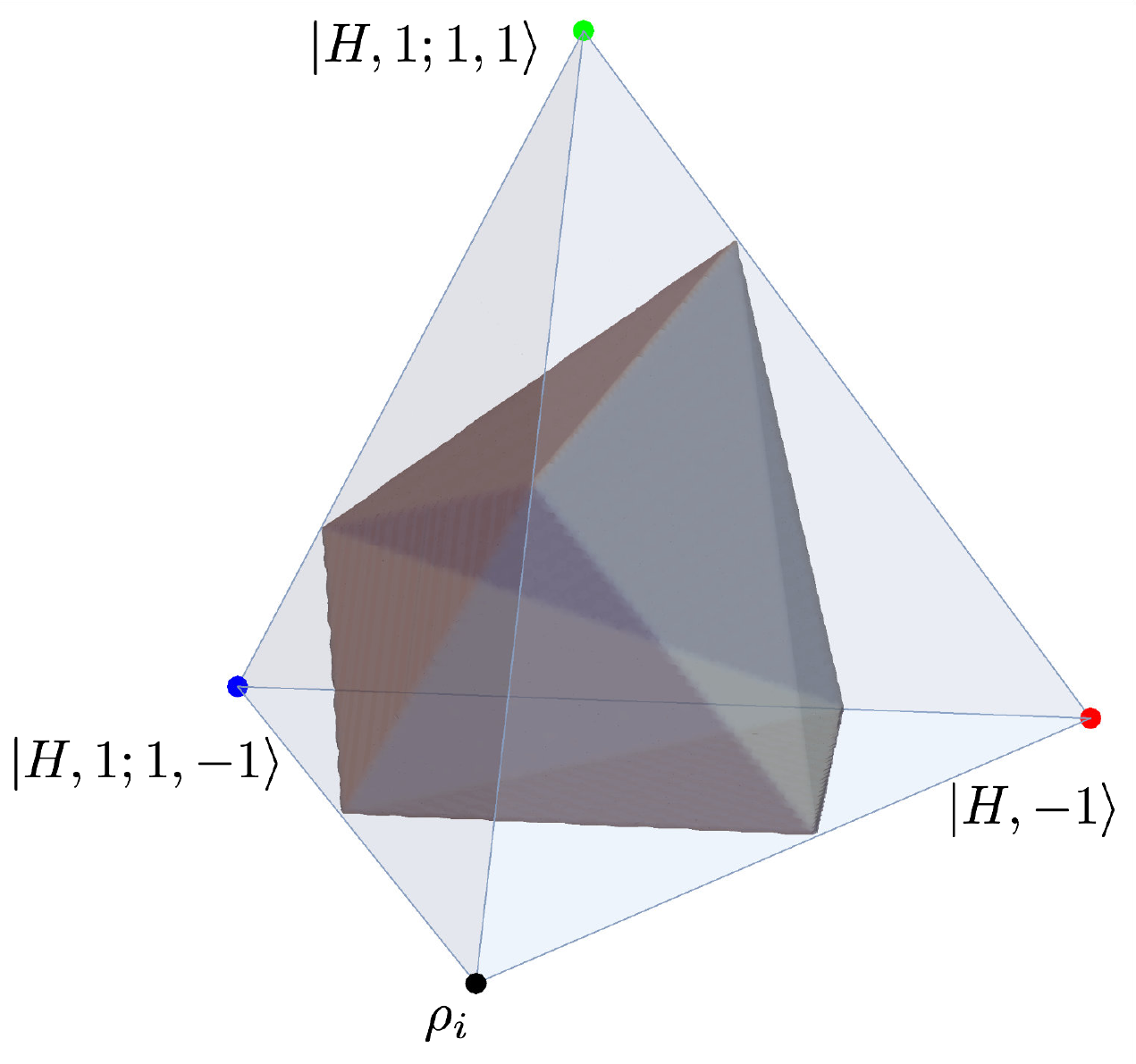}
    \caption{Noisy ququint $\ket{H,-1}$ states can be twirled to lie in the 3-dimensional convex mixture of states defined by Equation \eqref{hm5}, depicted as a light-blue tetrahedron. The gray polytope inside is the Wigner polytope. The stabilizer polytope is not pictured.}
    \label{hm1Wigner5}
\end{figure}

\subsubsection{Twirling Scheme for the  State $\ket{B',-1}$}
The state $\ket{B,-1}$ is an eigenvector of both $B$ and $H'$. Together $B$ and $H'$ generate a non-abelian group of $12$ elements known as the dicyclic group of order $12$, sometimes written as Dicyclic${}_{3}$. It can be presented as:
\begin{equation}
\text{Dicyclic}{}_3 = \left\langle B, ~H' ~|~ B^6 = 1,~H'{}^2 = B^3, ~ H'{}^{-1} B H' = B^{-1} \right\rangle.
\end{equation}

$H'$ acts on the other eigenstates of $B$ $\ket{B,\pm \omega_3^2}$ and $\ket{B,\pm \omega_3^3}$, as follows:
\begin{equation}
H' \ket{B, +\omega^3}=\ket{B,+\omega^2}, ~ H' \ket{B, -\omega^3}=\ket{B,-\omega^2}.
\end{equation}

By randomly applying $B$, we are left with the four parameter family of convex combinations of $\ket{B,-1}$, $\ket{B,\pm \omega_3^2}$ and $\ket{B,\pm \omega_3^3}$. We then apply $H'$ randomly to restrict our space to the two-parameter family of density matrices given by:
\begin{equation}
\rho_{B,-1}(\epsilon_+,\epsilon_-) = (1-\epsilon_+-\epsilon_-)\ket{B,-1}\bra{B,-1} + \epsilon_+ \frac{1}{2}\rho_+ +\epsilon_- \rho_-. \label{Bm15}
\end{equation}
Here, $\rho_\pm =\frac{1}{2}\left( \ket{B, \pm \omega_3^2}\bra{B, \pm \omega_3^2} + \ket{B, \pm \omega_3^3}\bra{B, \pm\omega_3^3} \right)$.
This is shown in Figure \ref{B1ququint}.

\begin{figure}
\centering
    \includegraphics[width=.3\textwidth]{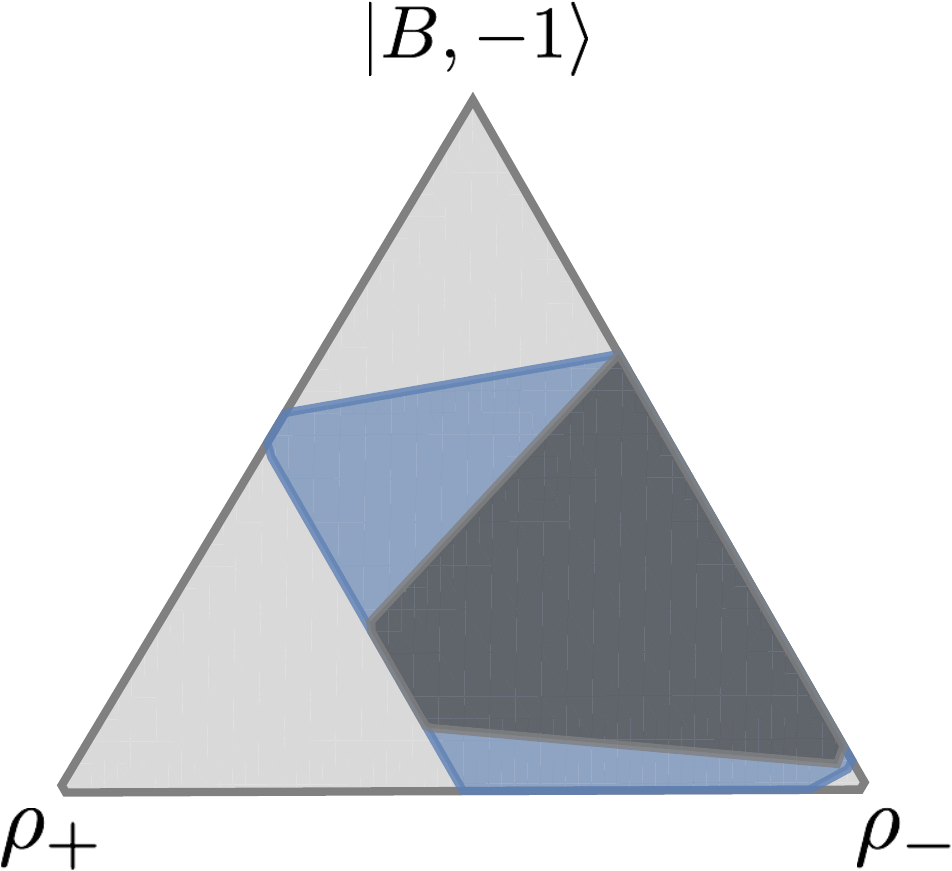}
    \caption{Noisy ququint $\ket{B,-1}$ states can be twirled to lie in a the convex region defined by Equation \eqref{Bm15}, depicted as a light gray triangle. The dark blue region inside the triangle is the Wigner polytope. and the dark gray region is the stabilizer polytope.}
    \label{B1ququint}
\end{figure}

\section{Uniqueness of the Qutrit Strange State}
\label{qutrit-strange-state}
Of the qutrit and ququint magic states presented in the previous sections, the qutrit strange state $\ket{S}$ stands out, as it has several interesting properties. The qutrit strange state is identified as the most magic qutrit state, by virtue of its maximal mana and thauma \cite{Veitch_2014,wang2018efficiently}. The qutrit strange state is also, in principle, maximally robust to polarizing noise \cite{Howard_2013}. From the point of view of Clifford symmetries, the strange state is also distinguished as simultaneous eigenvector of all symplectic rotations. As such, it has a particularly simple discrete Wigner function, from which we see that it lies directly above the centre of a single facet of the Wigner polytope.

Does an analogue of the qutrit strange state exist for qudits of higher $p$ that simultaneously possesses all these properties? In any dimension, one can determine which states maximize the mana, and also determine which states that are maximally robust to depolarizing noise; however, we saw that, for $p=5$, the states that maximize the mana are not equivalent to the states that are maximally robust to depolarizing noise. 

We can also ask, does there exist a simultaneous eigenvector of all symplectic rotations for qudits of any other odd prime dimension $d$?
\begin{theorem}
There is no pure state that is a simultaneous eigenvector of all symplectic rotation, for qudits of odd prime dimension $d>3$. \label{thm1}
\end{theorem}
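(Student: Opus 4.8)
The plan is to reduce the statement to an intersection-of-eigenspaces computation using only two generators. Since $SL(2,\mathbb Z_p)=\langle \hat H,\hat S\rangle$, a simultaneous eigenvector $\ket{\psi}$ of all symplectic rotations is in particular a simultaneous eigenvector of $V_{\hat S}$ and of $H=V_{\hat H}$, and these two conditions are already necessary, which is all I need to force a contradiction. So I would first describe the eigenspaces of $V_{\hat S}$, then intersect them with the eigenspaces of $H$, and show the intersection contains no nonzero vector when $p>3$. (Throughout I write $p$ for the odd prime dimension $d$.)

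First I would diagonalize $V_{\hat S}$. Since $V_{\hat S}\ket{k}=\omega_p^{2^{-1}k^2}\ket{k}$, two basis labels $k,k'$ share an eigenvalue precisely when $k^2\equiv k'^2 \pmod p$, i.e.\ when $k'\equiv\pm k$. Because $p$ is odd, $k\neq -k$ for $k\neq 0$, so the eigenspaces of $V_{\hat S}$ are the line $\mathbb C\ket{0}$ together with the $(p-1)/2$ two-dimensional planes $\mathrm{span}\{\ket{k},\ket{-k}\}$ for $k=1,\dots,(p-1)/2$. Any $V_{\hat S}$-eigenvector, and hence $\ket{\psi}$, must lie entirely inside one of these eigenspaces.

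Next I would impose that $\ket{\psi}$ is also an eigenvector of $H$. The case $\ket{\psi}\propto\ket{0}$ is immediately excluded because $H\ket{0}=\tfrac{1}{\sqrt p}\sum_j\ket{j}$ is the uniform superposition, not proportional to $\ket{0}$. In the remaining case write $\ket{\psi}=\alpha\ket{k}+\beta\ket{-k}$ with $k\neq 0$, so that
\begin{equation}
H\ket{\psi}=\frac{1}{\sqrt p}\sum_{j=0}^{p-1}\left(\alpha\,\omega_p^{jk}+\beta\,\omega_p^{-jk}\right)\ket{j}.
\end{equation}
Because $0\notin\{k,-k\}$, the requirement $H\ket{\psi}\propto\ket{\psi}$ forces the amplitude at $\ket{0}$, equal to $(\alpha+\beta)/\sqrt p$, to vanish; hence $\beta=-\alpha$ and $\ket{\psi}\propto\ket{k}-\ket{-k}$. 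For this antisymmetric vector the amplitude of $H\ket{\psi}$ at $\ket{j}$ is proportional to $\omega_p^{jk}-\omega_p^{-jk}$, which vanishes if and only if $2jk\equiv 0\pmod p$, i.e.\ (since $p$ is an odd prime and $k\neq0$) if and only if $j=0$. Thus $H\ket{\psi}$ is supported on every $\ket{j}$ with $j\neq 0$, whereas $\ket{\psi}$ is supported only on $\{k,-k\}$. These are compatible only if $\{1,\dots,p-1\}\setminus\{k,-k\}=\varnothing$, which happens exactly when $p=3$; for $p>3$ one picks any $j\in\{1,\dots,p-1\}\setminus\{k,-k\}$ to obtain a contradiction, completing the argument.

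The computation is elementary, so the main obstacle is conceptual rather than technical: one must be sure no candidate subspace is overlooked, which is guaranteed by the clean eigenspace structure of $V_{\hat S}$ (a consequence of $p$ being an odd prime). It is worth emphasizing that the borderline case $p=3$ is not a gap but recovers the strange state, since there the surviving antisymmetric vector is $\ket{1}-\ket{2}\propto\ket{S}$, which explains why such a simultaneous eigenvector exists precisely for $d=3$. Equivalently and more abstractly, the symplectic rotations furnish the Weil representation of $SL(2,\mathbb Z_p)$, which splits into irreducible even and odd subrepresentations of dimensions $(p+1)/2$ and $(p-1)/2$; a simultaneous eigenvector would be a one-dimensional subrepresentation, possible only when one of these dimensions equals $1$, i.e.\ only for $p=3$.
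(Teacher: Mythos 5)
Your proof is correct and takes essentially the same route as the paper's: reduce to the generators $H$ and $V_{\hat{S}}$, observe that any candidate lies in $\mathbb{C}\ket{0}$ or in a two-dimensional eigenspace $\alpha\ket{k}+\beta\ket{-k}$ of $V_{\hat{S}}$, force $\beta=-\alpha$ from the vanishing of the $\ket{0}$ amplitude of $H\ket{\psi}$, and then derive a contradiction from the required vanishing of a coefficient at some $j\neq 0,\pm k$, which exists precisely when $p>3$. Your explicit exclusion of the $\ket{0}$ case and the closing remark identifying the even/odd irreducible components of the Weil representation are welcome additions, but they do not alter the substance of the argument, which matches the paper's proof step for step.
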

\begin{proof}
A (non-stabilizer) eigenvector of all symplectic rotations must be an eigenvector of the generators of symplectic rotations $H$ and $V_{\hat{S}}$. It is easy to see that there is no simultaneous eigenvector of $V_{\hat{S}}$ and $H$ for odd prime $p>3$.

$V_{\hat{S}}$ is diagonal in the computational basis. Its eigenvalues are $\lambda_i=\omega_p^{2^{-1} q_i}$ where $q$ is any quadratic residue mod $p$, i.e., any element of $\mathbb Z_p$ such that $q \equiv t^2 \mod p$ for some $t$. There $(p+1)/2$ such quadratic residues, including $0$. The eigenvector corresponding to $q=0$ is $\ket{0}$ which is non-degenerate. The remaining $(p-1)/2$ quadratic residues each have two-dimensional degenerate eigenspaces of the form:
\begin{equation}
    \ket{V_{\hat{S}}, \omega_p^{2^{-1} q}= \alpha,\beta}=\alpha \ket{t}+\beta\ket{-t}
\end{equation}
where $t$ and $-t$ are the two solutions to the equation $t^2 \equiv q \mod p$.

If a simultaneous eigenvector of $H$ and $V_{\hat{S}}$ exists, then $\ket{V_{\hat{S}}, \omega_p^{2^{-1} q};\alpha,\beta}$ must be an eigenvector of $H$ for some $q\neq 0$ and some choice of $\alpha$ and $\beta$. Let us see that this cannot be the case.

\begin{equation}
    H\ket{V_{\hat{S}}, \omega_p^{2^{-1} q};\alpha,\beta} = \sum_j \left(\alpha \omega_p^{jt}+\beta \omega_p^{-jt}\right) \ket{j}. \label{h-e}
\end{equation}
For $\ket{V_{\hat{S}}, \omega_p^{2^{-1} q};\alpha,\beta}$ to be an eigenvector of $H$ the coefficient of $\ket{0}$ must vanish in the above expression. This means that $\alpha=-\beta$. If $d>3$, then there exists another $k \neq 0,~\pm t$, such that the coefficient of $\ket{k}$ must also vanish in equation \eqref{h-e}. This implies that $\alpha \left(\omega_p^{kt}-\omega_p^{-kt}\right)=2i \alpha \sin(2\pi kt/p)=0$, which means $\alpha=0$ and no simultaneous eigenvector of $H$ and $V_{\hat{S}}$ exists.
\end{proof}
Our interpretation of this result is that the qutrit strange state is distinguished not only as both the most symmetric and magic qutrit state, but also the most symmetric of all qudit magic states.

Note that there are mixed states, other than the maximally mixed state, which are preserved by all symplectic rotations. Let $\rho_S$ be an equal mixture of all stabilizer states without support on the phase space point $A_{0,0}$. (This excludes $1$ basis vector from each of the $d+1$ mutually unbiased bases.) $\rho_S$ has discrete Wigner function given by:
\begin{equation}
W_{(u,v)}(\rho) = \begin{cases} 0 & (u,v)=(0,0) \\ \frac{1}{p^2-1} & (u,v) \neq (0,0)\end{cases}. \label{a00}
\end{equation}
Clearly, this Wigner function is preserved by all symplectic rotations.

Since the space of pure qudit density matrices is $2p-2$-dimensional, and the space of all qutrit density matrices is $p^2-1$-dimensional, most points at the boundary of qudit state space are not pure density matrices. Hence, there is no reason to expect the state directly above the centre of a face of the Wigner polytope to be a pure state, and the above argument shows that, for $p>3$, it is not a pure state.

The absence of a state directly above a facet of the Wigner polytope does not mean that there are no states which maximally violate the contextuality inequality of \cite{nature}. There is always a conjugacy class $[[V_{-\hat{\mathbb{I}}}]]$, corresponding to $\begin{pmatrix} -1 & 0 \\ 0 & -1 \end{pmatrix} \in SL(2,\mathbb Z_p)$. $V_{-\hat{\mathbb{I}}}$ can be written as
\begin{equation}
    V_{-\hat{\mathbb{I}}}= \sum_{k=0}^{p-1} \ket{-k}\bra{k}.
\end{equation}
Its eigenvalues are $1$, which has degeneracy $(p+1)/2$, and $-1$, which has degeneracy $(p-1)/2$. The corresponding eigenvectors were studied in \cite{noiseQudit}:
\begin{eqnarray*}
\ket{V_{1}, 1; \alpha_i} & = & \alpha_0 \ket{0} + \sum_{j=1}^{(p-1)/2} \frac{\alpha_j}{\sqrt{2}} (\ket{j}+\ket{-j}) \\
\ket{V_{1}, -1; \beta_i} & = &  \sum_{j=1}^{(p-1)/2} \frac{\beta_j}{\sqrt{2}} (\ket{j}-\ket{-j}).
\end{eqnarray*}
From the results of \cite{nature}, the minimum entry of the (normalized) discrete Wigner function is $-1/p$. Since $A_{00}=V_{-\hat{\mathbb{I}}}$, this shows that there are always Clifford eigenstates that attain this minimum value, i.e., that maximally violate the contextuality inequality. As emphasized in \cite{noiseQudit}, these states have the potential to be distilled with the largest threshold to depolarizing noise.

\section*{Acknowledgements}

SP would like to thank Rev. Prof. PS Satsangi for guidance. This research is supported in part by a DST INSPIRE Faculty award, DST-SERB Early Career Research Award (ECR/2017/001023) and MATRICS grant (MTR/2018/001077).

\section*{Appendix: Maximizing Mana for Qutrits}
In this section, we determine the qutrit states which locally and globally maximize the mana. While this was done numerically in \cite{Veitch_2012}, here we proceed analytically. This provides an alternative characterization of qutrit magic states as local maxima of the mana.

The most general qutrit state can be written as
\begin{equation}
    \ket{f(\theta, \phi, \psi_1, \psi_2)} = \cos \theta \ket{0}+\sin \theta \left( e^{i\psi_1} \cos \phi  \ket{1} + e^{i\psi_2} \sin \phi  \ket{2}\right).
\end{equation}
The range of these variables is $\theta \in [0,\pi/2]$, $\phi \in [0,\pi/2]$, $\psi_i \in [0,2\pi]$. We define $\psi_-=\psi_1-\psi_2$. 

The discrete Wigner function of $\ket{f(\theta, \phi, \psi_1, \psi_2)}$ is
\begin{eqnarray}
W_{(0,0)} & = & \frac{1}{3} \left(\cos ^2\theta+\sin ^2\theta \sin (2 \phi ) \cos \psi_- \right) \\
W_{(0,1)} & = & \frac{1}{3} \left(\cos^2 \theta+ \sin ^2\theta \sin (2 \phi ) \sin (\psi_- - \pi/6)\right) \\
W_{(0,2)} & = & \frac{1}{3} \left(\cos^2 \theta- \sin ^2\theta \sin (2 \phi ) \sin (\psi_- + \pi/6) \right) \\
W_{(1,0)} & = & \frac{1}{3} \left(\sin ^2\theta \cos ^2\phi+\sin (2 \theta )  \sin \phi \cos \psi_2\right) \\
W_{(1,1)} & = & \frac{1}{3} \left( \sin ^2\theta \cos ^2\phi+\sin (2 \theta ) \sin \phi  \sin (\psi_2 - \pi/6)\right) \\
W_{(1,2)} & = & \frac{1}{3} \left( \sin ^2\theta \cos ^2\phi-\sin (2 \theta ) \sin \phi  \sin (\psi_2 + \pi/6)\right) \\
W_{(2,0)} & = & \frac{1}{3} \left(\sin ^2\theta \sin ^2\phi+\sin (2 \theta )  \cos \phi \cos \psi_1 \right) \\
W_{(2,1)} & = & \frac{1}{3} \left(\sin ^2\theta \sin ^2\phi-\sin (2 \theta ) \cos \phi  \sin (\psi_1+\pi/6)\right) \\
W_{(2,2)} & = & \frac{1}{3} \left(\sin ^2\theta \sin ^2\phi+\sin (2 \theta ) \cos \phi  \sin (\psi_1-\pi/6)\right).
\end{eqnarray}
Without loss of generality, we can also restrict the range of $\psi_i$ to be $[0,2\pi/3]$, which implies that the range of $\psi_-$ is $[-2\pi/3,2\pi/3]$. (States with other values of $\psi_i$ can be obtained from states in the above range, by applying the Clifford operators $Z$ and $S$.)
 
Maximizing the mana is equivalent to maximizing the sum-negativity of the Wigner function. Recall the sum-negativity is the absolute value of the sum of negative entries in the discrete Wigner function. Each of the above entries are negative when the following conditions are met:
\begin{eqnarray}
W_{(0,0)}<0 &~~~ \rightarrow ~~~& \sin (2\phi) > \cot^2 \theta,~3\pi/2>|\psi_-|>\pi/2 \\
W_{(0,1)}<0 &~~~ \rightarrow ~~~& \sin (2\phi) > \cot^2 \theta,~\pi<\psi_- - \pi/6<2\pi \\
W_{(0,2)}<0 &~~~ \rightarrow ~~~& \sin (2\phi) > \cot^2 \theta,~0<\psi_-+\pi/6<\pi \\
W_{(1,0)}<0 &~~~ \rightarrow ~~~& 2\tan \phi \sec \phi > \tan \theta,~3\pi/2>|\psi_2|>\pi/2 \\
W_{(1,1)}<0 &~~~ \rightarrow ~~~& 2\tan \phi \sec \phi > \tan \theta,~\pi<\psi_2 - \pi/6<2\pi \\
W_{(1,2)}<0 &~~~ \rightarrow ~~~& 2\tan \phi \sec \phi > \tan \theta,~0<\psi_2+\pi/6<\pi \\
W_{(2,0)}<0 &~~~ \rightarrow ~~~& 2\cot \phi \csc \phi > \tan \theta,~3\pi/2>|\psi_1|>\pi/2 \\
W_{(2,1)}<0 &~~~ \rightarrow ~~~& 2\cot \phi \csc \phi > \tan \theta,~0<\psi_1+\pi/6<\pi \\
W_{(2,2)}<0 &~~~ \rightarrow ~~~& 2\cot \phi \csc \phi > \tan \theta,~\pi<\psi_1 - \pi/6<2\pi \\
\end{eqnarray}

We now proceed by a case analysis based on the number of negative entries in this discrete Wigner function.

\begin{itemize}
\item[Case 1:] For the state to have non-zero mana, it must have at least one negative entry its discrete Wigner function.  Wherever it is located, it can be shifted to $(0,0)$ by a Heisenberg-Weyl displacement. To find the state with maximum mana, we must minimize 
\begin{equation}
    W_{(0,0)}=\frac{1}{3}\left(\cos^2 \theta + \sin^2 \theta \sin (2\phi) \cos \psi_-\right),
\end{equation} subject to the constraint that all other entries $W_{(i,j)}\geq 0$. We find $W_{(0,0)}$ is minimized when $\cos \theta=0$, $\sin (2\phi)=1$ and $\cos \psi_-=-1$; this corresponds to a sum-negativity of $1/3$. This state is the strange state. 

\item[Case 2:] If the Wigner function has two negative entries, they can be moved to $(0,0)$ and $(0,1)$ by first applying a Heisenberg-Weyl translation, then a symplectic rotations. To find the state of this form with maximum sum-negativity, we must minimize 
\begin{equation}
    W_{(0,0)}+W_{(0,1)}=\frac{1}{3}\left(2\cos^2 \theta + \sin^2 \theta \sin (2\phi) \left(\cos \psi_-+\sin (\psi_--\pi/6)\right)\right),
\end{equation}

subject to the constraint that both $W_{(0,0)}$ and $W_{(0,1)}$ are negative, and all other entries of the Wigner function are non-negative. We find $W_{(0,0)}$ is minimized when $\cos \theta=0$, $\sin (2\phi)=1$ and $\psi_-=4\pi/3$; this also corresponds to a sum-negativity of $1/3$. This state is Clifford-equivalent to the Norell state.

\item[Case 3:] If the discrete Wigner function has three negative entries, the first two negative entries can be taken to be located at $(0,0)$ and $(0,1)$. There are two Clifford-inequivalent possibilities for the third negative entry: $(0,2)$ and $(1,0)$. It is easy to see that it is impossible for $W_{(0,0)}$, $W_{(0,1)}$ and $W_{(0,2)}$ to all simultaneously be negative; so we take the third negative entry to be $W_{(1,0)}$. We now minimize 
\begin{equation}\begin{split}
    &W_{0,0}+W_{0,1}+W_{1,0} =\\ &\frac{1}{3}\left(2\cos^2 \theta + \sin^2 \theta \sin (2\phi) \left(\cos \psi_-+\sin (\psi_--\pi/6)\right) + \sin ^2\theta \cos ^2\phi+\sin (2 \theta )  \sin \phi \cos \psi_2\right),
    \end{split}
\end{equation} 
subject to the constraint that $W_{(0,0)}$, $W_{(0,1)}$ and $W_{(1,0)}$ are all negative, and all other entries of the discrete Wigner function are non-negative. 
We find the sum-negativity comes out to be $\frac{1}{3} \left(-1+2 \cos \left(\frac{\pi}{9}\right)\right)$ and the state is Clifford equivalent to the equatorial magic state $\ket{XV_{\hat{S}}}$.

\item[Case 4:] We now consider the case where 4 entries of the discrete Wigner function are negative. By the discussion above, we take the first three negative entries to be at $(0,0)$, $(0,1)$ and $(1,0)$. Notice that it is impossible for the Wigner function to be negative at three collinear points. Therefore the only possibilities for the fourth point are $(1,1)$, $(2,1)$ and $(1,2)$. It turns out that all these choices are related to each other by a Clifford transformation, so we can assume the fourth point is located at $(1,1)$. We thus have to minimize:
\begin{equation}
\begin{split}
   & W_{(0,0)}+W_{(0,1)}+W_{(1,0)}+W_{(1,1)}
    =\\ &\frac{1}{3}\left(2\cos^2 \theta + \sin^2 \theta \sin (2\phi) \left(\cos \psi_-+\sin (\psi_--\pi/6)\right) + \sin ^2\theta \cos ^2\phi + \sin (2 \theta )  \sin \phi \left(\cos \psi_2+\sin (\psi_2-\pi/6)\right)\right),
    \end{split}
\end{equation}
subject to the constraint that $W_{(0,0)}$, $W_{(0,1)}$, $W_{(1,0)}$ and $W_{(1,1)}$ are negative, and the remaining entries of the discrete Wigner function are non-negative. We find the maximal sum-negativity is $\frac{1}{3} \left(\sqrt{3}-1\right)$, and the state is Clifford equivalent to $\ket{H_\pm}$.

\item[Case 5:] One can check that it is impossible for 5 or more entries of the discrete Wigner function to be negative. We have thus exhausted all possibilities. 
\end{itemize}

Let us summarize the above discussion. we considered the discrete Wigner function of an arbitrary pure qutrit state. We find that the discrete Wigner function can have at most four negative entries. By maximizing the sum-negativity in each of these cases, we are able to recover the magic states identified in the previous subsections. Of these, the states with the maximal mana are the strange state and the 
Norell state.

It would be interesting to repeat this analysis for ququints, but because a pure ququing state depends on 8 real variables, and the ququint discrete Wigner function has 25 entries, this is not feasible.
\bibliographystyle{ssg}
\bibliography{qudit-magic-states}

\begingroup\raggedright\begin{thebibliography}{10}

\bibitem{MSD}
S.~Bravyi and A.~Kitaev, ``Universal quantum computation with ideal Clifford
  gates and noisy ancillas,'' {\em Phys. Rev. A} {\bf 71} (Feb, 2005) 022316,
  \href{http://www.arxiv.org/abs/arXiv:quant-ph/0403025}{{\tt
  arXiv:quant-ph/0403025}}.

\bibitem{knill}
E.~{Knill}, ``{Quantum computing with realistically noisy devices},'' {\em
  Nature} {\bf 434} (Mar, 2005) 39--44,
  \href{http://www.arxiv.org/abs/quant-ph/0410199}{{\tt quant-ph/0410199}}.

\bibitem{natureReview}
E.~T. {Campbell}, B.~M. {Terhal}, and C.~{Vuillot}, ``{Roads towards
  fault-tolerant universal quantum computation},'' {\em \nat} {\bf 549} (Sept.,
  2017) 172--179, \href{http://www.arxiv.org/abs/1612.07330}{{\tt 1612.07330}}.

\bibitem{Reichardt2005}
B.~W. Reichardt, ``Quantum Universality from Magic States Distillation Applied
  to CSS Codes,'' {\em Quantum Information Processing} {\bf 4} (Aug, 2005)
  251--264.

\bibitem{reichardt2009quantum}
B.~W. Reichardt, ``Quantum universality by state distillation,'' {\em Quantum
  Information \& Computation} {\bf 9} (2009), no.~11 1030--1052.

\bibitem{Haah1}
J.~{Haah}, M.~B. {Hastings}, D.~{Poulin}, and D.~{Wecker}, ``{Magic State
  Distillation with Low Space Overhead and Optimal Asymptotic Input Count},''
  {\em arXiv e-prints} (Mar., 2017) arXiv:1703.07847,
  \href{http://www.arxiv.org/abs/1703.07847}{{\tt 1703.07847}}.

\bibitem{low-overhead}
S.~{Bravyi} and J.~{Haah}, ``{Magic-state distillation with low overhead},''
  {\em \pra} {\bf 86} (Nov., 2012) 052329,
  \href{http://www.arxiv.org/abs/1209.2426}{{\tt 1209.2426}}.

\bibitem{CampbellHoward1}
E.~T. {Campbell} and M.~{Howard}, ``{Unifying Gate Synthesis and Magic State
  Distillation},'' {\em \prl} {\bf 118} (Feb., 2017) 060501,
  \href{http://www.arxiv.org/abs/1606.01906}{{\tt 1606.01906}}.

\bibitem{CampbellHoward2}
E.~T. {Campbell} and M.~{Howard}, ``{Unified framework for magic state
  distillation and multiqubit gate synthesis with reduced resource cost},''
  {\em \pra} {\bf 95} (Feb., 2017) 022316,
  \href{http://www.arxiv.org/abs/1606.01904}{{\tt 1606.01904}}.

\bibitem{HowardPredistilled}
E.~T. {Campbell} and M.~{Howard}, ``{Magic state parity-checker with
  pre-distilled components},'' {\em arXiv e-prints} (Sept., 2017)
  arXiv:1709.02214, \href{http://www.arxiv.org/abs/1709.02214}{{\tt
  1709.02214}}.

\bibitem{codyMultiLevel}
C.~{Jones}, ``{Multilevel distillation of magic states for quantum
  computing},'' {\em \pra} {\bf 87} (Apr., 2013) 042305,
  \href{http://www.arxiv.org/abs/1210.3388}{{\tt 1210.3388}}.

\bibitem{catalysis}
E.~T. {Campbell}, ``{Catalysis and activation of magic states in fault-tolerant
  architectures},'' {\em \pra} {\bf 83} (Mar., 2011) 032317,
  \href{http://www.arxiv.org/abs/1010.0104}{{\tt 1010.0104}}.

\bibitem{rall2017signed}
P.~Rall, ``Signed quantum weight enumerators characterize qubit magic state
  distillation,'' 2017.

\bibitem{ACB}
H.~Anwar, E.~T. Campbell, and D.~E. Browne, ``Qutrit magic state
  distillation,'' {\em New Journal of Physics} {\bf 14} (2012), no.~6 063006,
  \href{http://www.arxiv.org/abs/arXiv:1202.2326}{{\tt arXiv:1202.2326}}.

\bibitem{CampbellAnwarBrowne}
E.~T. Campbell, H.~Anwar, and D.~E. Browne, ``Magic-State Distillation in All
  Prime Dimensions Using Quantum Reed-Muller Codes,'' {\em Phys. Rev. X} {\bf
  2} (Dec, 2012) 041021, \href{http://www.arxiv.org/abs/arXiv:1205.3104}{{\tt
  arXiv:1205.3104}}.

\bibitem{campbell2014enhanced}
E.~T. Campbell, ``Enhanced fault-tolerant quantum computing in d-level
  systems,'' {\em Physical Review Letters} {\bf 113} (2014), no.~23 230501,
  \href{http://www.arxiv.org/abs/arXiv:1406.3055}{{\tt arXiv:1406.3055}}.

\bibitem{Howard}
H.~Dawkins and M.~Howard, ``Qutrit Magic State Distillation Tight in Some
  Directions,'' {\em Phys. Rev. Lett.} {\bf 115} (Jul, 2015) 030501,
  \href{http://www.arxiv.org/abs/arXiv:1504.05965}{{\tt arXiv:1504.05965}}.

\bibitem{nature}
M.~{Howard}, J.~{Wallman}, V.~{Veitch}, and J.~{Emerson}, ``{Contextuality
  supplies the `magic' for quantum computation},'' {\em Nature} {\bf 510}
  (June, 2014) 351--355, \href{http://www.arxiv.org/abs/arXiv:1401.4174}{{\tt
  arXiv:1401.4174}}.

\bibitem{Delfosse_2017}
N.~Delfosse, C.~Okay, J.~Bermejo-Vega, D.~E. Browne, and R.~Raussendorf,
  ``Equivalence between contextuality and negativity of the Wigner function for
  qudits,'' {\em New Journal of Physics} {\bf 19} (Dec, 2017) 123024.

\bibitem{noiseQudit}
W.~van Dam and M.~Howard, ``Noise thresholds for higher-dimensional systems
  using the discrete Wigner function,'' {\em Phys. Rev. A} {\bf 83} (Mar, 2011)
  032310.

\bibitem{Veitch_2012}
V.~Veitch, C.~Ferrie, D.~Gross, and J.~Emerson, ``Negative quasi-probability as
  a resource for quantum computation,'' {\em New Journal of Physics} {\bf 14}
  (Nov, 2012) 113011.

\bibitem{MariEisert}
A.~Mari and J.~Eisert, ``Positive Wigner Functions Render Classical Simulation
  of Quantum Computation Efficient,'' {\em Phys. Rev. Lett.} {\bf 109} (Dec,
  2012) 230503.

\bibitem{Veitch_2014}
V.~Veitch, S.~A.~H. Mousavian, D.~Gottesman, and J.~Emerson, ``The resource
  theory of stabilizer quantum computation,'' {\em New Journal of Physics} {\bf
  16} (Jan, 2014) 013009.

\bibitem{wang2018efficiently}
X.~Wang, M.~M. Wilde, and Y.~Su, ``Efficiently Computable Bounds for Magic
  State Distillation,'' {\em Phys. Rev. Lett.} {\bf 124} (Mar, 2020) 090505.

\bibitem{mair2001entanglement}
A.~Mair, A.~Vaziri, G.~Weihs, and A.~Zeilinger, ``Entanglement of the orbital
  angular momentum states of photons,'' {\em Nature} {\bf 412} (2001), no.~6844
  313.

\bibitem{PhysRevLett.105.223601}
R.~Bianchetti, S.~Filipp, M.~Baur, J.~M. Fink, C.~Lang, L.~Steffen,
  M.~Boissonneault, A.~Blais, and A.~Wallraff, ``Control and Tomography of a
  Three Level Superconducting Artificial Atom,'' {\em Phys. Rev. Lett.} {\bf
  105} (Nov, 2010) 223601.

\bibitem{PhysRevA.67.062313}
A.~B. Klimov, R.~Guzm\'an, J.~C. Retamal, and C.~Saavedra, ``Qutrit quantum
  computer with trapped ions,'' {\em Phys. Rev. A} {\bf 67} (Jun, 2003) 062313.

\bibitem{Kimura}
G.~{Kimura}, ``{The Bloch vector for /N-level systems},'' {\em Physics Letters
  A} {\bf 314} (Aug, 2003) 339--349,
  \href{http://www.arxiv.org/abs/quant-ph/0301152}{{\tt quant-ph/0301152}}.

\bibitem{Mendas}
I.~P. {Mendas}, ``{The classification of three-parameter density matrices for a
  qutrit},'' {\em Journal of Physics A Mathematical General} {\bf 39} (Sep,
  2006) 11313--11324.

\bibitem{Ingemar}
I.~Bengtsson, S.~Weis, and K.~{\.{Z}}yczkowski, ``Geometry of the Set of Mixed
  Quantum States: An Apophatic Approach,'' in {\em Geometric Methods in
  Physics} (P.~Kielanowski, S.~T. Ali, A.~Odzijewicz, M.~Schlichenmaier, and
  T.~Voronov, eds.), (Basel), pp.~175--197, Springer Basel, 2013.
\newblock \href{http://www.arxiv.org/abs/arXiv:1112.2347}{{\tt
  arXiv:1112.2347}}.

\bibitem{Bengtsson2}
G.~{Sarbicki} and I.~{Bengtsson}, ``{Dissecting the qutrit},'' {\em Journal of
  Physics A Mathematical General} {\bf 46} (Jan, 2013) 035306,
  \href{http://www.arxiv.org/abs/1208.2118}{{\tt 1208.2118}}.

\bibitem{Appleby5}
G.~N.~M. {Tabia} and D.~M. {Appleby}, ``{Exploring the geometry of qutrit state
  space using symmetric informationally complete probabilities},'' {\em \pra}
  {\bf 88} (Jul, 2013) 012131, \href{http://www.arxiv.org/abs/1304.8075}{{\tt
  1304.8075}}.

\bibitem{DPS2}
D.~P. Srivastava, V.~Sahni, and P.~S. Satsangi, ``From n-qubit multi-particle
  quantum teleportation modelling to n-qudit contextuality based quantum
  teleportation and beyond,'' {\em International Journal of General Systems}
  {\bf 46} (2017), no.~4 414--435,
  \href{http://www.arxiv.org/abs/https://doi.org/10.1080/03081079.2017.1308361}{{\tt
  https://doi.org/10.1080/03081079.2017.1308361}}.

\bibitem{goyal2016geometry}
S.~K. Goyal, B.~N. Simon, R.~Singh, and S.~Simon, ``Geometry of the generalized
  Bloch sphere for qutrits,'' {\em Journal of Physics A: Mathematical and
  Theoretical} {\bf 49} (2016), no.~16 165203.

\bibitem{HowardVala}
M.~Howard and J.~Vala, ``Qudit versions of the qubit $\pi$/8 gate,'' {\em
  Physical Review A} {\bf 86} (2012), no.~2 022316,
  \href{http://www.arxiv.org/abs/arXiv:1206.1598}{{\tt arXiv:1206.1598}}.

\bibitem{maxnonlocality}
M.~Howard, ``Maximum nonlocality and minimum uncertainty using magic states,''
  {\em Phys. Rev. A} {\bf 91} (Apr, 2015) 042103.

\bibitem{GolaySP}
S.~{Prakash}, ``{Magic State Distillation with the Ternary Golay Code},'' {\em
  Proceedings of the Royal Society: A} {\bf 476} (2020) 2020.0187,
  \href{http://www.arxiv.org/abs/2003.02717}{{\tt 2003.02717}}.

\bibitem{Appleby1}
D.~M. {Appleby}, ``{Symmetric informationally complete-positive operator valued
  measures and the extended Clifford group},'' {\em Journal of Mathematical
  Physics} {\bf 46} (May, 2005) 052107,
  \href{http://www.arxiv.org/abs/quant-ph/0412001}{{\tt quant-ph/0412001}}.

\bibitem{Appleby2}
D.~M. {Appleby}, I.~{Bengtsson}, and S.~{Chaturvedi}, ``{Spectra of phase point
  operators in odd prime dimensions and the extended Clifford group},'' {\em
  Journal of Mathematical Physics} {\bf 49} (Jan, 2008) 012102,
  \href{http://www.arxiv.org/abs/0710.3013}{{\tt 0710.3013}}.

\bibitem{Appleby3}
D.~M. {Appleby}, ``{SIC-POVMS and MUBS: Geometrical Relationships in Prime
  Dimension},'' in {\em American Institute of Physics Conference Series}
  (L.~{Accardi}, G.~{Adenier}, C.~{Fuchs}, G.~{Jaeger}, A.~Y. {Khrennikov},
  J.-{\r{A}}. {Larsson}, and S.~{Stenholm}, eds.), vol.~1101 of {\em American
  Institute of Physics Conference Series}, pp.~223--232, Mar, 2009.
\newblock \href{http://www.arxiv.org/abs/0905.1428}{{\tt 0905.1428}}.

\bibitem{Appleby4}
D.~M. {Appleby}, ``{Properties of the extended Clifford group with applications
  to SIC-POVMs and MUBs},'' {\em arXiv e-prints} (Sep, 2009) arXiv:0909.5233,
  \href{http://www.arxiv.org/abs/0909.5233}{{\tt 0909.5233}}.

\bibitem{subgroup}
H.~Zhu, ``SIC POVMs and Clifford groups in prime dimensions,'' {\em Journal of
  Physics A: Mathematical and Theoretical} {\bf 43} (2010), no.~30 305305.

\bibitem{planat2017magic}
M.~Planat and Z.~Gedik, ``Magic informationally complete POVMs with
  permutations,'' {\em Royal Society open science} {\bf 4} (2017), no.~9
  170387.

\bibitem{planat2017magic2}
M.~Planat and R.~Ul~Haq, ``The magic of universal quantum computing with
  permutations,'' {\em Advances in mathematical physics} {\bf 2017} (2017).

\bibitem{sym10120773}
M.~Planat, R.~Aschheim, M.~M. Amaral, and K.~Irwin, ``Universal Quantum
  Computing and Three-Manifolds,'' {\em Symmetry} {\bf 10} (2018), no.~12.

\bibitem{Gross}
D.~{Gross}, ``{Hudson's theorem for finite-dimensional quantum systems},'' {\em
  Journal of Mathematical Physics} {\bf 47} (Dec., 2006) 122107--122107,
  \href{http://www.arxiv.org/abs/arXiv:quant-ph/0602001}{{\tt
  arXiv:quant-ph/0602001}}.

\bibitem{wigner1932}
E.~Wigner, ``On the Quantum Correction For Thermodynamic Equilibrium,'' {\em
  Phys. Rev.} {\bf 40} (Jun, 1932) 749--759.

\bibitem{Wootters1987}
W.~K. {Wootters}, ``{A Wigner-function formulation of finite-state quantum
  mechanics},'' {\em Annals of Physics} {\bf 176} (May, 1987) 1--21.

\bibitem{PhysRevA.70.062101}
K.~S. Gibbons, M.~J. Hoffman, and W.~K. Wootters, ``Discrete phase space based
  on finite fields,'' {\em Phys. Rev. A} {\bf 70} (Dec, 2004) 062101.

\bibitem{cormick2006classicality}
C.~Cormick, E.~F. Galvao, D.~Gottesman, J.~P. Paz, and A.~O. Pittenger,
  ``Classicality in discrete Wigner functions,'' {\em Physical Review A} {\bf
  73} (2006), no.~1 012301.

\bibitem{bound}
E.~T. Campbell and D.~E. Browne, ``Bound States for Magic State Distillation in
  Fault-Tolerant Quantum Computation,'' {\em Phys. Rev. Lett.} {\bf 104} (Jan,
  2010) 030503.

\bibitem{qudit-bound-states}
S.~Prakash and A.~Gupta, ``Contextual bound states for qudit magic state
  distillation,'' {\em Phys. Rev. A} {\bf 101} (Jan, 2020) 010303.

\bibitem{structure}
E.~T. Campbell and D.~E. Browne, ``On the Structure of Protocols for Magic
  State Distillation,'' in {\em Theory of Quantum Computation, Communication,
  and Cryptography} (A.~Childs and M.~Mosca, eds.), (Berlin, Heidelberg),
  pp.~20--32, Springer Berlin Heidelberg, 2009.

\bibitem{PhysRevLett.115.070501}
H.~Pashayan, J.~J. Wallman, and S.~D. Bartlett, ``Estimating Outcome
  Probabilities of Quantum Circuits Using Quasiprobabilities,'' {\em Phys. Rev.
  Lett.} {\bf 115} (Aug, 2015) 070501.

\bibitem{Howard_2013}
M.~Howard, E.~Brennan, and J.~Vala, ``Quantum Contextuality with Stabilizer
  States,'' {\em Entropy} {\bf 15} (Jun, 2013) 2340–2362.

\bibitem{wang_2019}
X.~Wang, M.~M. Wilde, and Y.~Su, ``Quantifying the magic of quantum channels,''
  {\em New Journal of Physics} {\bf 21} (Oct, 2019) 103002.

\bibitem{graphTheory1}
V.~Sahni, D.~Srivastava, and P.~Satsangi, ``Unified modelling theory for qubit
  representation using quantum field graph models,'' {\em Journal of the Indian
  Institute of Science} {\bf 89} (07, 2009).

\bibitem{graphTheory2}
D.~Srivastava, V.~Sahni, and P.~Satsangi, ``Graph-theoretic quantum system
  modelling for information/computation processing circuits,'' {\em
  International Journal of General Systems} {\bf 40} (11, 2011) 777--804.

\bibitem{QubitMagic1}
M.~{Howard} and E.~{Campbell}, ``{Application of a Resource Theory for Magic
  States to Fault-Tolerant Quantum Computing},'' {\em \prl} {\bf 118} (Mar.,
  2017) 090501, \href{http://www.arxiv.org/abs/1609.07488}{{\tt 1609.07488}}.

\bibitem{QubitMagic2}
M.~Beverland, E.~Campbell, M.~Howard, and V.~Kliuchnikov, ``Lower bounds on the
  non-Clifford resources for quantum computations,'' {\em Quantum Science and
  Technology} {\bf 5} (June, 2020) 035009.

\bibitem{Humphreys}
J.~E. Humphreys, ``{Representations of SL(2,p)},'' {\em Amer. Math. Monthly}
  {\bf 82} (1975) 21--39.

\bibitem{Acin2015}
A.~Ac{\'i}n, T.~Fritz, A.~Leverrier, and A.~B. Sainz, ``A Combinatorial
  Approach to Nonlocality and Contextuality,'' {\em Communications in
  Mathematical Physics} {\bf 334} (Mar, 2015) 533--628.

\bibitem{qutritGate}
S.~Prakash, A.~Jain, B.~Kapur, and S.~Seth, ``Normal form for single-qutrit
  Clifford+$T$ operators and synthesis of single-qutrit gates,'' {\em Phys.
  Rev. A} {\bf 98} (Sep, 2018) 032304.

\bibitem{GLAUDELL201954}
A.~N. Glaudell, N.~J. Ross, and J.~M. Taylor, ``Canonical forms for
  single-qutrit Clifford+T operators,'' {\em Annals of Physics} {\bf 406}
  (2019) 54 -- 70.

\end{thebibliography}\endgroup

\end{document}